\date{}
\author{Gabriel Nivasch\footnote{\texttt{gabrieln@ariel.ac.il}. Department of Computer Science, Ariel University, Ariel, Israel.}}
\title{On the zone of a circle in an arrangement of lines\footnote{An extended abstract of this paper appeared in \emph{EuroComb 2015} (\emph{Electronic Notes in Discrete Mathematics} 49:221--231, 2015).}}
\newcommand{\eps}{\varepsilon}
\newcommand{\R}{{\mathbb R}}
\DeclareMathOperator{\Ex}{Ex}
\newtheorem{theorem}{Theorem}[section]
\newtheorem{lemma}[theorem]{Lemma}
\newtheorem{observation}[theorem]{Observation}
\newtheorem{conjecture}[theorem]{Conjecture}
\newtheorem{corollary}[theorem]{Corollary}
\theoremstyle{definition}
\newtheorem{definition}[theorem]{Definition}
\theoremstyle{remark}
\newtheorem{remark}[theorem]{Remark}
\begin{document}
\maketitle

\begin{abstract}
Let $\mathcal L$ be a set of $n$ lines in the plane, and let $C$ be a convex curve in the plane, like a circle or a parabola. The \emph{zone} of $C$ in $\mathcal L$, denoted $\mathcal Z(C,\mathcal L)$, is defined as the set of all cells in the arrangement $\mathcal A(\mathcal L)$ that are intersected by $C$. Edelsbrunner et al.~(1992) showed that the complexity (total number of edges or vertices) of $\mathcal Z(C,\mathcal L)$ is at most $O(n\alpha(n))$, where $\alpha$ is the inverse Ackermann function. They did this by translating the sequence of edges of $\mathcal Z(C,\mathcal L)$ into a sequence $S$ that avoids the subsequence $ababa$. Whether the worst-case complexity of $\mathcal Z(C,\mathcal L)$ is only linear is a longstanding open problem.

Since the relaxation of the problem to pseudolines does have a $\Theta(n\alpha(n))$ bound, any proof of $O(n)$ for the case of straight lines must necessarily use geometric arguments.

In this paper we present some such geometric arguments. We show that, if $C$ is a circle, then certain configurations of straight-line segments with endpoints on $C$ are impossible. In particular, we show that there exists a Hart--Sharir sequence that cannot appear as a subsequence of $S$.

The Hart--Sharir sequences are essentially the only known way to construct $ababa$-free sequences of superlinear length. Hence, if it could be shown that every family of $ababa$-free sequences of superlinear-length eventually contains all Hart--Sharir sequences, it would follow that the complexity of $\mathcal Z(C,\mathcal L)$ is $O(n)$ whenever $C$ is a circle.
\end{abstract}

\section{Introduction}

Let $\mathcal L$ be a set of $n$ lines in the plane. The \emph{arrangement} of $\mathcal L$, denoted $\mathcal A(\mathcal L)$, is the partition of the plane into vertices, edges, and cells induced by $\mathcal L$. Let $C$ be another object in the plane. The \emph{zone} of $C$ in $\mathcal L$, denoted $\mathcal Z(C,\mathcal L)$, is defined as the set of all cells in $\mathcal A(\mathcal L)$ that are intersected by $C$. The \emph{complexity} of $\mathcal Z(C,\mathcal L)$ is defined as the total number of edges, or vertices, in it.

The celebrated \emph{zone theorem} states that, if $C$ is another line, then $\mathcal Z(C,\mathcal L)$ has complexity $O(n)$ (Chazelle et al.~\cite{CGL}; see also Edelsbrunner et al.~\cite{EGPPSS}, Matou\v sek~\cite{mat_DG}).

If $C$ is a convex curve, like a circle or a parabola, then $\mathcal Z(C,\mathcal L)$ is known to have complexity $O(n \alpha(n))$, where $\alpha$ is the very-slow-growing inverse Ackermann function (Edelsbrunner et al.~\cite{EGPPSS}; see also Bern et al.~\cite{BEPY}, Sharir and Agarwal~\cite{DS_book}). More specifically, the \emph{outer zone} of $\mathcal Z(C,\mathcal L)$ (the part that lies outside the convex hull of $C$) is known to have complexity $O(n)$, whereas the complexity of the \emph{inner zone} is only known to be $O(n \alpha(n))$. Whether the complexity of the inner zone is linear as well is a longstanding open problem~\cite{BEPY, DS_book}.

The gap between the upper and the lower bound is completely negligible for all practical purposes, but the question is interesting from a purely mathematical point of view.

In this paper we make progress towards proving that the inner zone of a circle in an arrangement of lines has linear complexity. Since we find it easier to work with a parabola than with a circle, throughout this paper we will take $C$ to be the parabola $y=x^2$. The two problems are equivalent by a projective transformation, as we will explain.

\subsection{Davenport--Schinzel sequences and their generalizations}

Let $S$ be a finite sequence of symbols, and let $s\ge 1$ be a parameter. Then $S$ is called a \emph{Davenport--Schinzel sequence of order $s$} if every two adjacent symbols in $S$ are distinct, and if $S$ does not contain any alternation $a\cdots b\cdots a\cdots b\cdots$ of length $s+2$ for two distinct symbols $a\neq b$. Hence, for $s=1$ the ``forbidden pattern" is $aba$, for $s=2$ it is $abab$, for $s=3$ it is $ababa$, and so on.

The maximum length of a Davenport--Schinzel sequence of order $s$ that contains only $n$ distinct symbols is denoted $\lambda_s(n)$. For $s\le 2$ we have $\lambda_1(n) = n$ and $\lambda_2(n) = 2n-1$. However, for fixed $s\ge 3$, $\lambda_s(n)$ is slightly superlinear in $n$.

\paragraph{DS sequences of order $3$}
The case $s=3$ is the one most relevant to us. Hart and Sharir~\cite{HS} (see also~\cite{yo_DS,DS_book}) constructed a family of sequences that achieve the lower bound\footnote{The bound claimed in~\cite{DS_book} is $\lambda_3(n) \ge (1/2) n\alpha(n)-O(n)$, because a factor of $2$ is lost in interpolation; this problem is fixed in~\cite{yo_DS}.} $\lambda_3(n) \ge n\alpha(n) - O(n)$; and they also proved the asymptotically matching upper bound $\lambda_3(n) \le O(n\alpha(n))$. Klazar~\cite{klazar} subsequently improved the upper bound to $\lambda_3(n) \le 2n\alpha(n) + O(n\sqrt{\alpha(n)})$ (recently, Pettie~\cite{pettie_sharp} improved the lower-order term to $O(n)$).

Nivasch~\cite{yo_DS} showed that $\lambda_3(n) \ge 2n\alpha(n) - O(n)$. Hence, $\lambda_3(n) = 2n\alpha(n) \pm O(n)$.  Nivasch's construction is an \emph{extension} of the Hart--Sharir construction, in the sense that Nivasch's sequences contain the Hart--Sharir sequences as subseqeunces.\footnote{This can be shown with an argument similar to that of Lemma~\ref{lem_structurally} below, which is beyond the scope of this paper.} Geneson~\cite{geneson} made a nice cosmetic improvement to Nivasch's construction.

\paragraph{DS sequences of higher orders}
For $s=4$ we have $\lambda_4(n) = \Theta(n\cdot 2^{\alpha(n)})$, and in general, $\lambda_s(n) = n\cdot 2^{\Theta(\mathrm{poly}(\alpha(n)))}$ for fixed $s\ge 4$, where the polynomial in the exponent is of degree roughly $s/2$. See Sharir and Agarwal~\cite{DS_book}, and subsequent improvements by Nivasch~\cite{yo_DS} and Pettie~\cite{pettie_sharp}.

\paragraph{Generalized DS sequences}
A \emph{generalized Davenport--Schinzel sequence} is one where the forbidden pattern is not restricted to be $abab\cdots$, but it can be any fixed subsequence $u$. In order for the problem to be nontrivial we must require $S$ to be \emph{$k$-sparse}---meaning, every $k$ adjacent symbols in $S$ must be pairwise distinct---where $k = \|u\|$ is the number of distinct symbols in $u$. For example, if we take $u=abcaccbc$, then $S$ must not contain any subsequence of the form $a\cdots b\cdots c\cdots a\cdots c\cdots c\cdots b\cdots c$ for $|\{a,b,c\}|=3$, and every three adjacent symbols in $S$ must be pairwise distinct.

We denote by $\Ex(u, n)$ the maximum length of a $k$-sparse, $u$-avoiding sequence $S$ on $n$ distinct symbols, where $k=\|u\|$. For every fixed forbidden pattern $u$, $\Ex(u,n)$ is at most slightly superlinear in $n$: $\Ex(u,n) =  O\bigl(n\cdot 2^{\mathrm{poly}(\alpha(n))}\bigr)$, where the polynomial in the exponent depends on $u$ (Klazar~\cite{klazar_genDS}, Nivasch~\cite{yo_DS}, Pettie~\cite{pettie_3}).

Similarly, if $U = \{u_1, u_2, \ldots, u_j\}$ is a set of patterns, then $\Ex(U, n)$ denotes the maximum length of a sequence that avoids all the patterns in $U$, is $k$-sparse for $k = \min\{\|u\| : u\in U\}$, and contains only $n$ distinct symbols.

\paragraph{Some relevant results on generalized DS sequences}
Let us mention some results on generalized DS sequences that are relevant to us:
\begin{itemize}
\item $\Ex(\{ababa, ab\,cac\,cbc\}, n) = \Theta(n\alpha(n))$ (Pettie~\cite{pettie_origins}). Indeed, the $ababa$-free sequences of Hart and Sharir~\cite{HS} avoid $ab\,cac\,cbc$ as well.\footnote{Spaces are just for clarity. The Hart--Sharir construction also avoids other patterns, such as $abcbdadbcd$ (Klazar~\cite{klazar_93}; see Pettie~\cite{pettie_origins}).} See Section~\ref{sec_ababa} below.

\item $\Ex(ab\,cacbc, n) = \Theta(n\alpha(n))$ (Pettie~\cite{pettie_matrix}). The lower bound is achieved by a modification of the Hart--Sharir construction, which does not avoid $ababa$ anymore.

\item It is unknown whether $\Ex(\{ababa, ab\,cacbc\},n)$ or $\Ex(\{ababa, ab\,cac\,cbc, (ab\,cac\,cbc)^R\}, n)$ are superlinear in $n$ (where $u^R$ denotes the reversal of $u$). We conjecture that they are both $O(n)$.
\end{itemize}

\paragraph{Applications of generalized DS sequences}
Generalized Davenport--Schinzel sequences have found a few applications. Cibulka and Kyn\v cl~\cite{CK} used them to bound the size of sets of permutations with bounded VC-dimension. Valtr~\cite{valtr}, Fox et al.~\cite{FPS}, and Suk and Walczak~\cite{Suk-W} have used Generalized DS sequences to bound the number of edges in graphs with no $k$ pairwise crossing edges; the papers \cite{valtr,FPS} use the ``$N$-shaped" forbidden pattern $a_1\cdots a_\ell\cdots a_1\cdots a_\ell$, and the papers \cite{FPS,Suk-W} use the forbidden pattern $(a_1\cdots a_\ell)^m$.

Pettie considered $\Ex(\{abababa, abaabba\}, n)$ for analyzing the deque conjecture for splay trees~\cite{pettie_splay}, and $\Ex(\{ababab, abbaabba\}, n)$ for analyzing the union of fat triangles in the plane~\cite{pettie_forbid}.

\subsection{Transcribing the zone into a Davenport--Schinzel sequence}\label{subsec_Lprime}

Let $\mathcal L$ be a set of $n$ lines in the plane, and let $C$ be a convex curve in the plane. We can assume without loss of generality that $C$ is either closed (like a circle) or unbounded in both directions (like a parabola), by prolonging $C$ if necessary. Thus, $C$ divides the plane into two regions, one of which equals the convex hull of $C$.

Here we recall the argument of Edelsbrunner et al.~\cite{EGPPSS} showing that the complexity of the part of $\mathcal Z(C, \mathcal L)$ that lies inside the convex hull of $C$ is $O(n \alpha(n))$.

If $C$ is unbounded in both directions then assume without loss of generality that it is $x$-monotone and it is the graph of a convex function, by rotating the whole picture if necessary.

Also assume general position for simplicity: No line of $\mathcal L$ is vertical, no two lines are parallel, no three lines are concurrent, no line is tangent to $C$, and no two lines intersect $C$ at the same point. (Perturbing $\mathcal L$ into general position can only increase the complexity of $\mathcal Z(C, \mathcal L)$.) We can also assume that every line of $\mathcal L$ intersects $C$, since otherwise the line would not contribute to the complexity of the inner zone of $C$.

Let $\mathcal L'$ be the set of $n$ segments obtained by intersecting each line of $\mathcal L$ with the convex hull of $C$. (If $C$ is unbounded then some elements of $\mathcal L'$ may actually be rays.)

Let $G$ be the \emph{intersection graph} of $\mathcal L'$, i.e.~the graph having $\mathcal L'$ as vertex set, and having an edge connecting two elements of $\mathcal L'$ if and only if they intersect. We can assume without loss of generality that $G$ is connected: If $G$ has several connected components, then we can separately bound the complexity produced by each one and add them up; this works because our desired bound is at least linear in $n$. Since $G$ is connected, all the bounded cells of the inner zone are simple (i.e.~they touch $C$ in a single interval); and if $C$ is unbounded then there are at most two upward-unbounded cells (bounded by the two infinite extremes of $C$).

If $C$ is closed then let $c_0$ be the topmost point of $C$; we will ignore the cell that contains $c_0$, since it has at most linear complexity (as any single cell does). If $C$ is unbounded then we will similarly ignore the up-to-two unbounded cells.

To bound the complexity of the remaining cells, we will traverse their boundary and transcribe it into a sequence in a certain way.

Every segment of $\mathcal L'$ has two sides, one of which will be called \emph{positive} and the other one \emph{negative}, as follows: If $C$ is closed, then the positive side is the one facing the point $c_0$ and the negative side is the other one; if $C$ is unbounded, then the positive side is the upper one and the negative side is the lower one.

If $C$ is closed, then let $c_1$ be the first endpoint of $\mathcal L'$ counterclockwise from $c_0$ along $C$, and let $c_2$ be the last endpoint. If $C$ is unbounded, then $c_1$ is defined as the leftmost endpoint of $\mathcal L'$, and $c_2$ as the rightmost endpoint.

We traverse the boundary of the inner zone of $C$ by starting at $c_1$, and walking around the boundary of the cells, as if the segments were walls which we touch with the left hand at all times, until we reach $c_2$. See Figure~\ref{fig_zone_tour}. We transcribe this tour into a sequence containing $3n$ distinct symbols as follows:

Each segment $a\in \mathcal L'$ is partitioned by the other segments into smaller pieces. We take two directed copies of each such piece. We call each such copy a \emph{sub-segment}. The sub-segments are directed counterclockwise around $a$; i.e. those above $a$ are directed leftwards, and those below $a$ are directed rightwards. Hence, our tour visits some of these sub-segments, in the directions we have given them, in a certain order.

For each segment $a$, the sub-segments of $a$ that are visited, are visited in counterclockwise order around $a$. We first visit some sub-segments on the positive side of $a$, then we visit some sub-segments on the negative side of $a$, and then we again visit some sub-segments on the positive side of $a$.

\begin{figure}
\centerline{\includegraphics{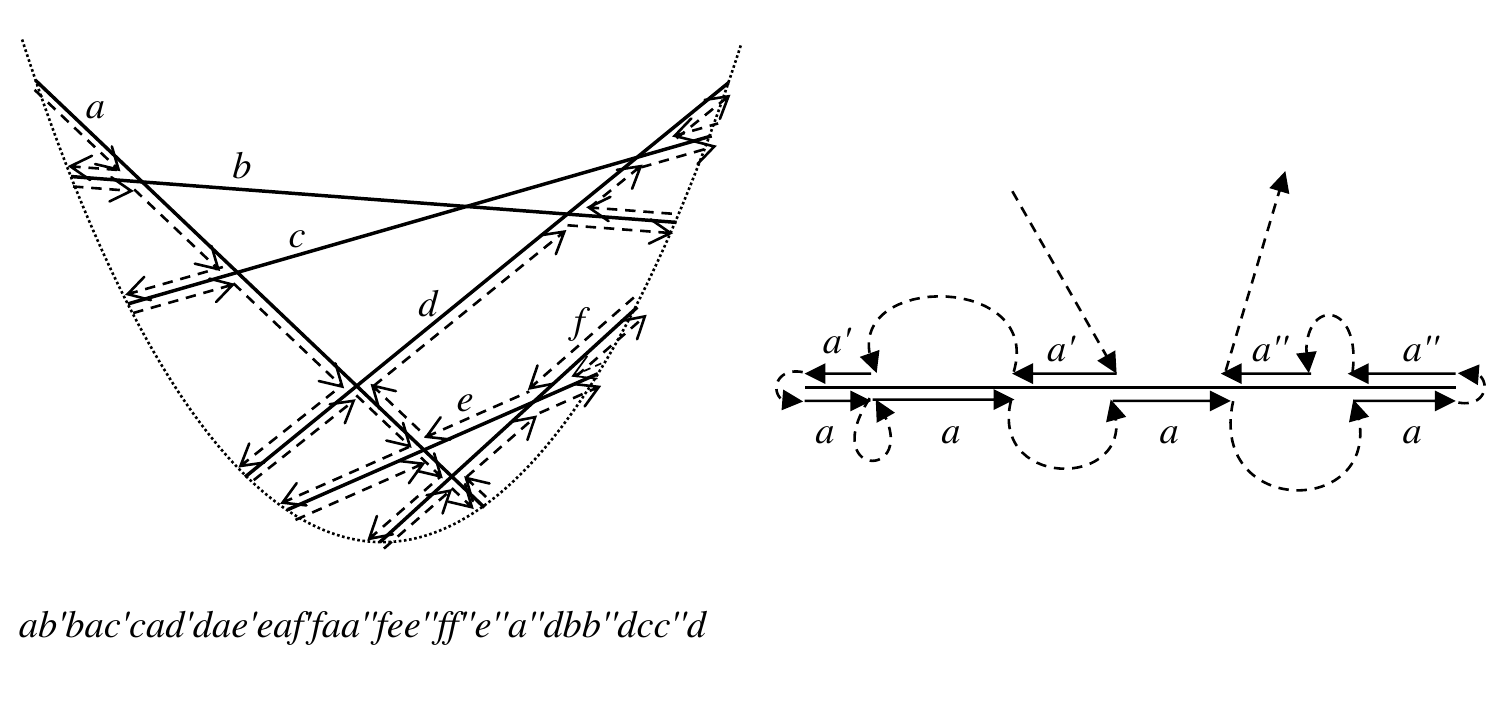}}
\caption{\label{fig_zone_tour}Traversing the boundary of the inner zone of $C$.}
\end{figure}

Sub-segments of the first type are transcribed as $a'$; sub-segments of the second type are transcribed as $a$, and sub-segments of the third type are transcribed as $a''$. See again Figure~\ref{fig_zone_tour}. Let $S'$ be the sequence resulting from the tour.

For each segment $a$, label its endpoints $L_a$ and $R_a$, such that $L_a$ is visited before $R_a$.\footnote{If $C$ is unbounded then $L_a$ is always the left endpoint of $a$.}

\begin{figure}
\centerline{\includegraphics{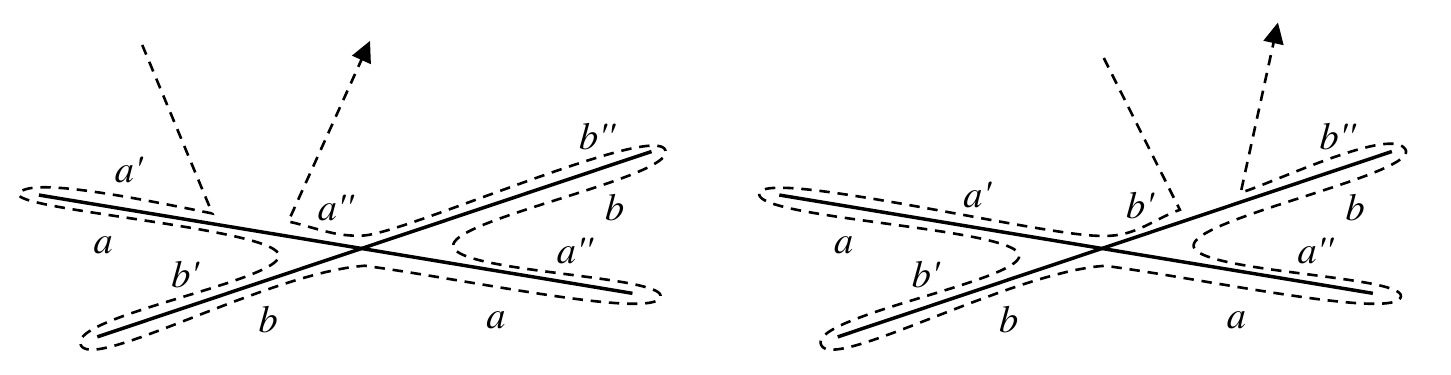}}
\caption{\label{fig_tour_2segments}Symbol alternations produced by two intersecting segments.}
\end{figure}

Let $a$, $b$ be two intersecting segments, such that $L_a$ is visited before $L_b$. Then the restriction of $S'$ to $\{a',a,a'',b',b,b''\}$ is of the form
\begin{equation*}
(a')^*\, a^*\, (b')^*\, b^*\, a^*\, (a'')^*\, b^*\, (b'')^*\, (a'')^* \quad \text{or}\quad (b')^*\, (a')^*\, a^*\, (b')^*\, b^*\, a^*\, (a'')^*\, b^*\, (b'')^*,
\end{equation*}
where ${}^*$ denotes zero or more repetitions. See Figure~\ref{fig_tour_2segments}.

Hence, the restriction of $S'$ to first-type symbols contains no alternation $abab$, and it contains no adjacent repetitions either, as can be easily seen. Hence, it is an order-$2$ DS-sequence and so it has linear length. The same is true for the restriction of $S'$ to third-type symbols.

Thus, the important part of the sequence $S'$ is its restriction to second-type symbols---those corresponding to the negative side of the segments. From now on we denote this subsequence $S$, and we call it the \emph{lower inner-zone sequence of $\mathcal Z(C, \mathcal L)$}.\footnote{Slight abuse of terminology. We will mainly deal with the case where $C$ is unbounded; in this case the negative side of a segment is always its lower side.} The sequence $S$ contains no alternation $ababa$, and it contains no adjacent repetitions, as can be easily seen. Hence, $S$ is an order-$3$ DS-sequence, and hence its length is at most $O(n\alpha(n))$.

\subsection{Relation to lower envelopes}

Lower envelopes are the original motivation for Davenport--Schinzel sequences. If $\mathcal F = \{f_1, \ldots,\allowbreak f_n\}$ is a collection of $n$ $x$-monotone curves in the plane (continuous functions $\R\to\R$), then the \emph{lower envelope} of $\mathcal F$ is their pointwise minimum (or the part that can be seen from the point $(0,-\infty)$), and the \emph{lower-envelope sequence} is the sequence of functions that appear in the lower envelope, from left to right. If the $f_i$'s are partially defined functions (say, each one is defined only on an interval of $\R$), then the definition is the same, except that the symbol ``$\infty$" might also appear in the lower-envelope sequence.

In our case, if $C$ is $x$-monotone, then the lower-envelope sequence of the set of segments $\mathcal L'$ is a subsequence of $S$: It contains only those parts that can be seen from $-\infty$. We shall denote this sequence by $N = N(\mathcal L')$.

The Hart--Sharir sequences can be realized as lower-envelope sequences of segments in the plane (Wiernik and Sharir~\cite{WS}; see also~\cite{mat_DG, DS_book}). However, it is unknown whether this is still possible if all the endpoints are required to lie on a circle/parabola (like our set $\mathcal L'$), or more generally on a convex curve. Sharir and Agarwal raise this question in~\cite[p.~112]{DS_book}. Proving a linear upper bound for the length of $N$ might be easier than for the length of $S$.

It is also not known whether the longer sequences of Nivasch~\cite{yo_DS} can be realized as lower-envelope sequences of segments. It is not even known whether there \emph{exists} an order-$3$ DS sequence that cannot be realized as a lower-envelope sequence of segments.

\subsection{From circles to conic sections}\label{subsec_conic}

Let us return to the zone problem. The special cases in which $C$ is a circle, a parabola, or a hyperbola, are all equivalent, as can be shown by suitable projective transformations: Let $\pi_1\subset \R^3$ be a plane that contains a set of lines $\mathcal L$ and a circle $C$. Let $K$ be a cone in $\R^3$ that intersects $\pi_1$ at $C$, and let $\pi_2$ be a plane that intersects $K$ at a parabola $C'$. Then, the projection through the apex of $K$ maps $\pi_1$ (with the exception of one line within $\pi_1$) into $\pi_2$, mapping lines into lines, and mapping $C$ (except for one point $p\in C$) into $C'$. We just have take care to choose $\pi_2$ so that no line of $\mathcal L$ passes through $p$.

More concretely, the projective transformation $(x,y)\mapsto\bigl(\frac{x}{1-y},\frac{1+y}{1-y}\bigr)$ maps the unit circle $x^2+y^2=1$ (except for the point $p=(0,1)$) into the parabola $y=x^2$, mapping lines into lines.

The case of a hyperbola is handled similarly. First, note that all hyperbolas are equivalent under affine transformations. Hence, choose $\pi_2$ so that it intersects the cone $K$ at a hyperbola $C'$, such that almost all of $C$ is mapped to one branch of $C'$, and only a tiny portion of $C$, which does not intersect any line of $\mathcal L$, is mapped to the other branch of $C'$.

Even though the most natural formulation of the problem involves a circle, in this paper we will work with a parabola, since we find it easier to work with.

\subsection{The case of pseudolines and the need for geometric arguments}

If we relax the problem and allow $\mathcal L$ to consist of $x$-monotone \emph{pseudolines} ($x$-monotone curves that pairwise intersect at most once and intersect $C$ at most twice), then $\mathcal Z(C,\mathcal L)$ can have complexity $\Theta(n\alpha(n))$. Indeed, in this setting \emph{every} order-$3$ DS-sequence can appear as a subsequence of a lower-envelope sequence $N(\mathcal L')$. To see this, first note that every order-$3$ DS-sequence can appear as a subsequence of a lower-envelope sequence of $x$-monotone pseudosegments~\cite{DS_book}; see Figure~\ref{fig_pseudolines} for an example. Furthermore, in this construction, all segment endpoints are visible from $-\infty$. Hence, we can enclose the construction in a circle $C$, and prolong each pseudosegment $\ell$ into an $x$-monotone pseudoline by adding two very steeply decreasing rays on the two sides of $\ell$.

\begin{figure}
\centerline{\includegraphics{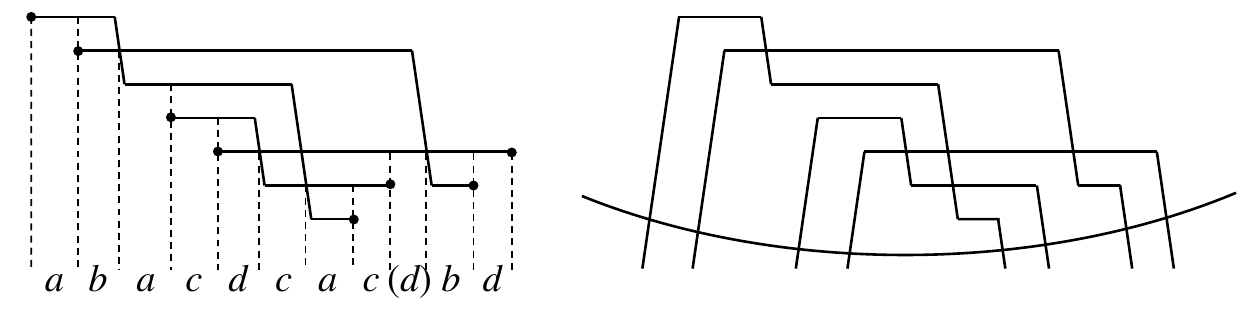}}
\caption{\label{fig_pseudolines}Left: Realizing the sequence $u=abacdcacbd$ as a lower-envelope sequence of pseudosegments. Note that in this case the technique produces a supersequence of $u$. Right: Adding a convex curve and prolonging the pseudosegments into pseudolines.}
\end{figure}

Therefore, if, as we conjecture, the bound for the case of straight lines is only $O(n)$, then any proof must necessarily use geometric arguments, and not merely combinatorial ones.

\subsection{Our results}\label{subsec_our_results}

In this paper we offer some evidence for the following conjecture, and make some progress towards proving it:

\begin{conjecture}\label{conj}
If $\mathcal L$ is a set of $n$ lines and $C$ is a circle, then the lower inner-zone sequence $S$ of $\mathcal Z(C, \mathcal L)$ has length $O(n)$, and hence $\mathcal Z(C, \mathcal L)$ has at most linear complexity.
\end{conjecture}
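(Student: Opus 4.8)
The plan is to augment the combinatorial Davenport--Schinzel framework of Section~\ref{subsec_Lprime} with genuinely geometric restrictions coming from the fact that $\mathcal L'$ consists of \emph{straight} chords of a convex curve; by the projective equivalence of Section~\ref{subsec_conic} we may take $C$ to be the parabola $y=x^2$, so that $\mathcal L'$ is a set of straight chords of a parabola. The transcription already gives that $S$ is an order-$3$ DS-sequence, hence of length $O(n\alpha(n))$; to push this down to $O(n)$ one must show that $S$ is far from being an arbitrary $ababa$-free sequence. The first observation is that the intersection graph $G$ of $\mathcal L'$ is exactly the interleaving graph of the $n$ endpoint-intervals on $C$ (two chords of a convex curve cross iff their endpoints interleave), and this data alone is realizable by $x$-monotone pseudolines, so it cannot by itself force linearity. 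The extra information that straightness provides is the precise $x$-coordinate of each crossing and the vertical order of crossings along a fixed line --- the chord with endpoints at $x$-coordinates $p<q$ is the line $y=(p+q)x-pq$ --- and it is this quantitative data that must be exploited.

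Concretely, I would first isolate a \emph{geometric lemma}: for every small sub-configuration of three or four chords of the parabola, enumerate the finitely many combined order types together with the left-to-right order of the (at most eight) endpoints, and for each read off the contiguous sub-patterns that the left-hand wall-following tour can produce in $S$. The point is that some abstract sub-pattern which an $ababa$-free sequence may contain --- such as $ab\,cac\,cbc$, its reversal $(ab\,cac\,cbc)^R$, or $ab\,cacbc$ --- is never produced, because realizing it would require an order type of three mutually crossing chords that contradicts the forced position and vertical order of their pairwise intersections. This is the geometric heart of the argument, and it is where the real work lies: a forbidden pattern in $S$ must be traced not to a single impossible static picture but to an impossible \emph{dynamic} behaviour of the tour, so one has to keep careful track of which cell boundary the walk is traversing --- and whether a symbol is of positive, negative, or second-positive type --- while arguing about straightness.

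Having established such a lemma, the conjecture follows along either of two routes. The first is to arrange the forbidden words so that $S$ avoids the whole set $\{ababa,\,ab\,cac\,cbc,\,(ab\,cac\,cbc)^R\}$ (or $\{ababa,\,ab\,cacbc\}$); then Conjecture~\ref{conj} is an immediate consequence of the conjectured linear bound $\Ex(\{ababa,\,ab\,cac\,cbc,\,(ab\,cac\,cbc)^R\},n)=O(n)$ recorded in Section~\ref{subsec_our_results}. The second route, which does not presuppose that combinatorial conjecture, is to show directly that a specific Hart--Sharir sequence $H$ can never occur as a subsequence of an $S$ coming from straight lines: one descends into the recursive block structure of $H$, locates inside it the sub-configuration of three or four chords that the geometric lemma forbids, and derives a contradiction. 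Combining this with the (still open) structural principle that every family of $ababa$-free sequences of superlinear length must eventually contain all Hart--Sharir sequences again yields the $O(n)$ bound.

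The main obstacle is therefore twofold. On the geometric side it is the bookkeeping just described: translating ``impossible configuration of chords'' into ``impossible sub-pattern of the tour sequence'', since the tour revisits each segment on both sides and the same geometric picture can be traversed in several ways, so a clean forbidden-word statement requires ruling out every traversal. On the combinatorial side the honest obstacle is that, as noted in Section~\ref{subsec_our_results}, it is currently unknown whether $\Ex(\{ababa,\,ab\,cac\,cbc,\,(ab\,cac\,cbc)^R\},n)$ is linear, and likewise unknown whether superlinear $ababa$-free families must swallow all Hart--Sharir sequences; hence what one can realistically hope to prove unconditionally is the geometric half --- certain chord configurations are impossible, and consequently some explicit Hart--Sharir sequence is not a subsequence of $S$ --- thereby reducing Conjecture~\ref{conj} to a purely combinatorial question about $ababa$-free sequences.
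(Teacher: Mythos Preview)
The statement you are addressing is a \emph{conjecture}; the paper does not prove it, and neither does your proposal --- you correctly acknowledge at the end that both routes rest on open combinatorial questions. With that understood, your overall plan is very much the paper's own programme: find geometrically impossible chord configurations, translate them into forbidden sub-patterns of $S$, and hope that avoiding those patterns (together with $ababa$) forces linearity.

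The substantive miscalibration is in the size and shape of the geometric lemma you envision. You propose to analyse ``three or four chords'' and thereby show that $S$ avoids a short pattern such as $ab\,cac\,cbc$, $(ab\,cac\,cbc)^R$, or $ab\,cacbc$. The paper's experience is that this is too optimistic. In Section~\ref{sec_useless_forbidden} the simplest impossible configuration found already involves \emph{eleven} chords, and the resulting forbidden pattern $u$ (length $33$) is declared ``useless'': $u$ itself \emph{contains} both $ab\,cac\,cbc$ and its reversal, so forbidding $u$ is strictly weaker than forbidding $ab\,cac\,cbc$, and the Hart--Sharir sequences already avoid $u$. The paper never shows --- and does not claim --- that $S$ avoids $ab\,cac\,cbc$ or $ab\,cacbc$ themselves; your first route presupposes exactly that. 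To make the second route work, the paper needs a far larger impossible configuration ($173$ segments, Section~\ref{sec_forbidden_2}), built by iterated ``endpoint shuffling'' that mirrors the recursive structure of the Hart--Sharir construction, before it can conclude (Theorem~\ref{thm_HS_forces_X}) that some specific $S_7(m)$ cannot occur in $S$. So your second route is indeed what the paper actually carries out, but the geometric input is an order of magnitude beyond a local three-or-four-chord analysis: the contradiction comes from a global growth-rate argument (Lemmas~\ref{lem_ratios} and~\ref{lemma_wide_fan}) applied across a deep nested hierarchy of ``whiskers'', not from a single small order-type obstruction.
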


Our technique consists of first finding segment configurations that are geometrically impossible, and then finding $ababa$-free sequences that \emph{force} these configurations. We say that an $ababa$-free sequence $u$ \emph{forces} a segment configuration $T$, if, for every family of segments $\mathcal L'$ (as in Section~\ref{subsec_Lprime}) whose lower inner-zone sequence contains $u$ as a subsequence, $\mathcal L'$ contains a subfamily combinatorially equivalent to $T$.

Thus, we first show in Section~\ref{sec_useless_forbidden} that a certain, relatively simple configuration of eleven segments is impossible. Then we show that this configuration is forced by a pattern $u$ of length $33$. It follows that the lower inner-zone sequence $S$ avoids $u$. This result, however, is useless for establishing Conjecture~\ref{conj}, since $u$ contains both $ab\,cac\,cbc$ and its reversal. Therefore, by the above-mentioned result of Pettie, the Hart--Sharir construction avoids both $u$ and $u^R$ (which is actually the same as $u$), and so $\Ex(\{ababa, u, u^R\}, n) = \Theta(n\alpha(n))$.

Section~\ref{sec_useless_forbidden} is just a warmup for Sections~\ref{sec_forbidden_2} and~\ref{sec_ababa}. In Section~\ref{sec_forbidden_2} we construct another impossible segment configuration $X$, this time with $173$ segments. We could construct a pattern $u'$ that, as a consequence, cannot occur in $S$, but we abstain from doing so. Instead, we show directly in Section~\ref{sec_ababa} that the Hart--Sharir sequences eventually force the configuration $X$.

Our results in Sections~\ref{sec_forbidden_2} and~\ref{sec_ababa} were obtained as follows: Matou\v sek~\cite{mat_DG} and Sharir and Agarwal~\cite{DS_book} describe a construction by P. Shor of segments in the plane whose lower-envelope sequences are the Hart--Sharir sequences. We tried to force the segment endpoints in the construction to lie on the parabola $C$, and we reached a contradiction. The impossible configuration $X$ of Section~\ref{sec_forbidden_2} is the best way we found to isolate the contradiction. We elaborate more on this at the end of Section~\ref{sec_ababa}.

Next, in Section~\ref{sec_conj_only_DS} we present some directions for further work on the problem: We formulate a conjecture regarding generalized DS sequences which, if true, would imply Conjecture~\ref{conj}. We also explain how our conjecture relates to previous research on generalized DS sequences.

Finally, in Section~\ref{sec_discussion} we conclude by listing some related open problems.

\section{Preliminaries}\label{sec_preliminaries}

Throughout this and the following sections $C$ will denote the parabola $y=x^2$, $\mathcal L'$ will denote a set of $n$ segments with endpoints on $C$, and $S$ will denote the corresponding lower inner-zone sequence. As we said, we assume that no two segments have the same endpoint, and that the intersection graph of $\mathcal L'$ is connected.

Recall that the left and right endpoints of a segment $a\in\mathcal L'$ are denoted $L_a$ and $R_a$, respectively. Whenever we say that a sequence of endpoints appear in a certain order, we mean from left to right.

Let $u$ be an $ababa$-free sequence in which, for simplicity, each symbol appears at least twice. Then, we define its \emph{endpoint sequence} $E(u)$ by replacing, for each symbol $a$ in $u$, its first occurrence by $L_a$ and its last occurrence by $R_a$, and deleting all other occurrences of $a$. For example,
\begin{equation*}
E(abacadcdbd) = L_a\, L_b\, L_c\, R_a\, L_d\, R_c\, R_b\, R_d.
\end{equation*}

It is clear that, if $S$ is the lower inner-zone sequence of $\mathcal L'$, then the order of the endpoints of $\mathcal L'$ is exactly $E(S)$. However, if $u$ is a subsequence of $S$, then $E(u)$ is not necessarily a subsequence of $E(S)$; meaning, $E(S)$ does not necessarily respect the order of the symbols in $E(u)$. For example, if $u=abba$, so the order of the left endpoints in $E(u)$ is $L_a, L_b$, it is still possible for their order in $E(S)$ to be $L_b, L_a$.

Nevertheless, we now give a sufficient condition for guaranteeing that $E(u)$ is a subsequence of $E(S)$:

\begin{definition}\label{def_clamped}
A symbol $a$ in a sequence $u$ is said to be \emph{left-clamped} if $u$ contains the subsequence $baba$, where $b$ is the symbol immediately preceding the first $a$ in $u$. Similarly, the symbol $a$ is \emph{right-clamped} if $u$ contains the subsequence $abab$, where $b$ is the symbol immediately following the last $a$ in $u$.
\end{definition}

\begin{lemma}\label{lemma_clamped}
If all the symbols in $u$, except for the very first symbol, are left-clamped, and all the symbols except for the very last symbol are right-clamped, and if $S$ is an $ababa$-free supersequence of $u$, then $E(u)$ is a subsequence of $E(S)$.
\end{lemma}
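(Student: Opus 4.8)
The plan is to track, for each symbol $a$ of $u$, where the markers $L_a$ and $R_a$ land in $E(S)$, and to show that the left-clamped/right-clamped hypotheses pin down their relative order enough to recover the order of $E(u)$. The key local fact is the structure statement already recorded in Section~\ref{subsec_Lprime}: for two intersecting segments $a,b$ with $L_a$ visited before $L_b$, the restriction of $S'$ (and hence, after projecting to second-type symbols, of $S$) to $\{a,b\}$ has the form $a^*\,b^*\,a^*\,b^*$ — that is, at most an $abab$ alternation, and the order in which $L_a,L_b,R_a,R_b$ appear along $C$ is forced to be $L_a,L_b,R_a,R_b$ when the restriction is exactly $abab$, while weaker restrictions ($ab$, $aba$, $bab$, etc.) force the endpoints to nest or be disjoint in the corresponding way. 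So the slogan is: an $abab$ alternation between $a$ and $b$ in $S$ certifies that the endpoints interleave as $L_a L_b R_a R_b$, and this is exactly what "clamping" produces.

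Concretely, first I would fix a symbol $a$ that is left-clamped, with $b$ the symbol immediately preceding the first $a$ in $u$; so $u$ contains $b\,a\,b\,a$ as a subsequence (reading the forced-adjacency $ba$ plus a later $ba$). Since $S\supseteq u$, the restriction of $S$ to $\{a,b\}$ contains $baba$; being $ababa$-free it cannot contain $babab$ or $ababa$, so the restriction is exactly an alternation starting with $b$ and of length $3$ or $4$. In either case the two-intersecting-segments analysis forces $L_b$ to precede $L_a$ along $C$, and — crucially for the $aba$-versus-$abab$ bookkeeping — it forces $R_a$ to come after $L_b$ but the relation of $R_a$ to $R_b$ to be exactly the one that makes $E$ of the restriction of $u$ a subsequence of $E$ of the restriction of $S$. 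Dually, right-clamping of $a$ with successor $b'$ gives an $a\,b'\,a\,b'$ in $S$, forcing $R_{b'}$ to come after $R_a$ and correctly ordering $L_a$. Thus for every consecutive pair of markers in $E(u)$ I get a witnessing pair of intersecting segments whose $S$-restriction certifies that same order in $E(S)$; stitching these local certifications along the whole word $E(u)$ (the first and last symbols being exempt exactly matches the "except the very first/last" escape clauses) yields that $E(u)$ embeds in $E(S)$.

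The main obstacle, and the part that needs genuine care rather than quotation, is the "stitching": a pairwise-order guarantee between $L_a$ and $L_b$ from one clamp, and between $L_b$ and $L_c$ from another, must be shown to be consistent and to compose into a global order on all $2k$ markers matching $E(u)$. This requires checking that the forced orders never conflict — which follows because each is a genuine fact about the single arrangement $\mathcal{L}'$, so they are automatically consistent — and that every adjacency in the word $E(u)$ is covered by at least one clamp (here one uses that $b$ is literally the symbol adjacent to $a$ in $u$, so the $L$-markers of $a$ and $b$, respectively the $R$-markers, are consecutive or near-consecutive in $E(u)$, and an induction on the position in $u$ closes the gap). I would also need the small technical remark that passing from the restriction of $S'$ on $\{a',a,a'',b',b,b''\}$ to the restriction of $S$ on $\{a,b\}$ preserves the relevant alternation, which is immediate from the displayed normal form in Section~\ref{subsec_Lprime}. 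A clean way to organize all of this is to prove the contrapositive-free statement by induction on $|u|$, peeling off the last symbol and using its right-clamp (and the left-clamp of whatever symbol it is clamped against) to place $L$ and $R$ of that symbol correctly relative to the already-placed markers.
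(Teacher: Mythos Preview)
Your proposal has two problems, one conceptual and one technical.

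\textbf{Conceptual.} The lemma is a purely combinatorial statement about sequences: $S$ is \emph{any} $ababa$-free supersequence of $u$, and $E(\cdot)$ is defined syntactically (first occurrence $\mapsto L$, last $\mapsto R$, delete the rest). There are no segments, no curve $C$, no arrangement $\mathcal L'$. All of your appeals to Section~\ref{subsec_Lprime}---``the two-intersecting-segments analysis forces $L_b$ to precede $L_a$ along $C$'', ``each is a genuine fact about the single arrangement $\mathcal L'$''---are simply unavailable here. Even in the geometric special case they would be circular, since the order of endpoints along $C$ \emph{is} $E(S)$ by definition.

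\textbf{Technical.} Stripping away the geometry, your combinatorial kernel is correct but insufficient. If $a$ is left-clamped by $b$, then the restriction of $S$ to $\{a,b\}$ contains $baba$ and, being $ababa$-free, equals $baba$ exactly (not ``length $3$ or $4$''). This gives $L_b<L_a$ and $R_b<R_a$ in $E(S)$. But $b$ is the symbol adjacent to the first $a$ \emph{in $u$}; the marker adjacent to $L_a$ \emph{in $E(u)$} is in general something else entirely---it could be $R_c$ for a symbol $c$ whose last occurrence sits somewhere before the first $a$, with many non-extremal occurrences of other symbols in between. Your ``stitching'' paragraph acknowledges this gap but does not close it; the ``near-consecutive'' hedge and the vague induction are exactly where the actual proof lives.

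The paper's proof does what you defer: it fixes an adjacent pair $Q_a,\,Q'_b$ in $E(u)$ and argues directly that the same order holds in $E(S)$. The key move is that the clamping symbol need not be $b$. If $Q=R$, let $c$ be the symbol immediately after the last $a$ in $u$; right-clamping gives $acac$ in $u$, and any $a$ in $S$ occurring after the position giving $Q'_b$ would create $acaca$. Symmetrically for $Q'=L$ with the symbol $d$ preceding the first $b$. The case $Q=L$, $Q'=R$ is trivial. That is the whole proof---four lines, no geometry, no induction.
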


\begin{proof}
Consider an adjacent pair of symbols $Q_a$, $Q'_b$ in $E(u)$, where each of $Q$, $Q'$ is either $L$ or $R$. We claim that their order in $E(S)$ is also $Q_a$, $Q'_b$.

If $Q$ is $L$ and $Q'$ is $R$, then trivially $L_a$ also precedes $R_b$ in $E(S)$.

If $Q$ is $R$, then let $c$ be the symbol immediately following the last $a$ in $u$ (note that $c$ is not necessarily $b$). Since $a$ is right-clamped, $S$ cannot contain any $a$ after the occurrence of $b$ that gives rise to $Q'_b$; otherwise, $S$ would contain the forbidden pattern $acaca$.

Similarly, if $Q'$ is $L$, let $d$ be the symbol immediately preceding the first $b$ in $u$. Since $b$ is left-clamped, $S$ cannot contain any $b$ before the occurrence of $a$ that gives rise to $Q_a$; otherwise, $S$ would contain $bdbdb$.

Hence, whether $(Q,Q')$ equals $(R,L)$, $(R,R)$, or $(L,L)$, there is no way for $Q_a$, $Q'_b$ to change order in $E(S)$.
\end{proof}

Two segments $a, b$ intersect if and only if their endpoints appear in the order $L_a\, L_b\, R_a\, R_b\,$ or $L_b\, L_a\, R_b\, R_a$.

If $a_1, \ldots, a_m$ are segments whose endpoints appear in the order $L_{a_1}\, \cdots\, L_{a_m}\,\allowbreak R_{a_1}\, \cdots\, R_{a_m}$, then they pairwise intersect. If the intersection points $a_m \cap a_{m-1}, \ldots, a_3\cap a_2, a_2 \cap a_1$ appear in this order from left to right, then we say that the segments \emph{intersect concavely}. If the intersection points appear in the reverse order, then we say that the segments \emph{intersect convexly}. See Figure~\ref{fig_concavely}.

If the segments $a_1, a_2, \ldots, a_n$ intersect concavely (or convexly), and $1\le i_1 < i_2 < \cdots < i_k \le n$ are increasing indices, then $a_{i_1}, a_{i_2}, \ldots, a_{i_k}$ also intersect concavely (or convexly).

\begin{figure}
\centerline{\includegraphics{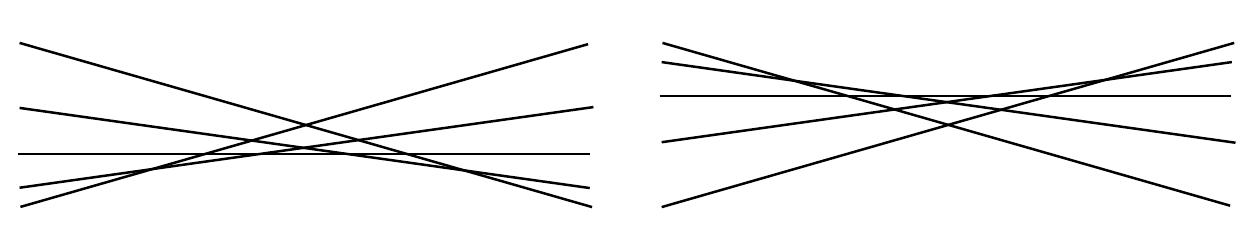}}
\caption{\label{fig_concavely}Left: Segments intersecting concavely. Right: Segments intersecting convexly.}
\end{figure}

\begin{observation}\label{obs_Nshaped}
If $S$ contains the ``$N$-shaped" subsequence $12\cdots m\cdots 212\cdots m$, then the corresponding segments must have endpoints in the order $L_1\,\cdots\, L_m\,\allowbreak R_1\,\cdots\, R_m$, and must intersect concavely.
\end{observation}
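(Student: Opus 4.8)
I would treat the two assertions in turn.

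\emph{The endpoint order.} Write $u=12\cdots m\cdots 212\cdots m$ for the forbidden pattern. The first step is to check that $u$ satisfies the hypotheses of Lemma~\ref{lemma_clamped}. For $j\ge 3$ the first occurrence of $j$ in $u$ lies in the initial run, immediately after $j-1$, and $(j-1)\,j\,(j-1)\,j$ occurs as a subsequence of the two increasing runs, so $j$ is left-clamped; the symbol $2$ is left-clamped using the initial run, the middle block $21$ and the final run, and the very first symbol $1$ is exempt. Right-clampedness is symmetric, with the middle $2$ serving the symbol $1$. Hence $E(u)$ is a subsequence of $E(S)$; but $E(u)=L_1\,L_2\cdots L_m\,R_1\,R_2\cdots R_m$ already lists all $2m$ endpoints of $a_1,\dots,a_m$, so this is exactly the order in which those endpoints occur in $E(S)$. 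Having endpoints in this order, the segments intersect pairwise, by the fact recalled above.

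\emph{Concavity.} The reduction I would use is that the family $a_1,\dots,a_m$ intersects concavely if and only if each consecutive triple $a_i,a_{i+1},a_{i+2}$ does, since concavity of the family is exactly the chain of inequalities stating that the crossing $a_{i+1}\cap a_i$ lies to the right of $a_{i+2}\cap a_{i+1}$ for every $i$. So it suffices to rule out a convex triple. For the triples $\{a_1,a_2,a_k\}$ with $k\ge 3$, the pattern $u$ restricts to the \emph{deep} $N$-shape $1\,2\,k\,2\,1\,2\,k$, and I would prove the following self-contained lemma: if three chords of $C$ intersect convexly (their left endpoints being in index order), then the middle chord is visited on its lower side by the zone tour at most twice, so this pattern cannot occur. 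The point is that the portion of the middle chord between its two crossing points forms one side of the ``interior triangle'' cut out by the three chords, and this triangle lies strictly above $C$ --- each of its three edges is part of a chord of $C$, hence lies above $C$ on its $x$-span --- so this portion of the middle chord is never reached by the tour. The geometric input here is that a chord of $y=x^2$ from $(L,L^2)$ to $(R,R^2)$ lies on the line $y=(L+R)x-LR$, so our chords have strictly increasing slopes, which is what pins down the shape of the interior triangle. A pairwise-crossing triple with the endpoint order above is, in general position, either convex or concave, so each $\{a_1,a_2,a_k\}$ must be concave.

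The main obstacle is propagating concavity from the special triples $\{a_1,a_2,a_k\}$ to all consecutive triples. Here $u$ restricted to the symbols of $\{a_i,a_{i+1},a_{i+2}\}$ with $i\ge 2$ is only the shallow $N$-shape $i\,(i{+}1)\,(i{+}2)\,i\,(i{+}1)\,(i{+}2)$, which a convex triple can realize; so the triple-by-triple reduction cannot be completed by combinatorics alone. What is needed is a geometric statement to the effect that the presence of $u$ in a genuine lower inner-zone sequence forces the deep pattern on every consecutive triple. I would try to extract this by analysing the lower envelope of $a_1,\dots,a_m$ directly, showing that under the established endpoint order a family in which some crossing $a_{i+1}\cap a_i$ fails to lie to the right of $a_{i+2}\cap a_{i+1}$ has a lower envelope --- and hence a lower inner-zone sequence --- that cannot contain the alternation $\cdots 2\cdots 1\cdots 2\cdots$ wedged between the two increasing runs of $u$. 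A recurring subtlety is that $S$ is the lower inner-zone sequence of all of $\mathcal L'$, not of the sub-family $\{a_1,\dots,a_m\}$, so both the three-chord lemma and any lower-envelope argument must be stated so as to survive the presence of the remaining segments.
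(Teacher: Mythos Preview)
Your ``main obstacle'' is an artifact of a miscalculation. The $N$-shaped pattern is
\[
u = 1\,2\,\cdots\,m\;\,(m{-}1)\,\cdots\,2\,1\;\,2\,\cdots\,m,
\]
and the middle \emph{descending} run passes through every symbol from $m-1$ down to $1$. Hence, for \emph{any} triple $i<j<k$ (in particular every consecutive triple $i,i{+}1,i{+}2$ with $i\ge 2$), the restriction of $u$ to $\{i,j,k\}$ is, after collapsing adjacent repeats,
\[
i\,j\,k\,j\,i\,j\,k,
\]
the full three-symbol $N$-shape --- not the shallow $i\,j\,k\,i\,j\,k$ you wrote. (Check the boundary cases $i=1$ and $k=m$ separately; the pattern is still $ijkjijk$.) So your own reduction --- concavity of the family is equivalent to concavity of every consecutive triple, and each consecutive triple inherits the full $N$-shape --- already closes the argument. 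This is precisely what the paper's one-line proof sketch means by ``the general case follows from the case $m=3$.''

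Your treatment of the endpoint order via Lemma~\ref{lemma_clamped} is fine, and your interior-triangle idea for the $m=3$ case is a reasonable way to dispatch the convex configuration: the middle stretch of segment~$2$ (between $1\cap 2$ and $2\cap 3$) bounds, on its lower side, only the triangle cut out by the three chords; since each side of that triangle is a chord of the convex curve $C$, the triangle lies in $\conv(C)$ but is separated from $C$, so no cell of the zone (in the full arrangement either, since adding lines only subdivides cells) touches that stretch from below. Hence the symbol $2$ can appear in at most two maximal runs in the restriction of $S$ to $\{1,2,3\}$, contradicting $1\,2\,3\,2\,1\,2\,3$. With the restriction error fixed, this argument applies verbatim to every consecutive triple, and you are done; the elaborate ``lower-envelope'' rescue you sketch at the end is unnecessary.
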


\begin{proof}[Proof sketch]
The general case follows from the case $m=3$.
\end{proof}

In Sections~\ref{sec_useless_forbidden} and~\ref{sec_forbidden_2}, we will specify some \emph{segment configurations} by listing the order of their endpoints, and by specifying that some subsets of segments must intersect concavely. We will prove that some segment configurations are geometrically impossible.

\subsection{Geometric properties of the parabola}

We now present some simple geometric properties of the parabola $C$ and straight-line segments. These properties lie at the heart of the distinction between the cases of straight lines and pseudolines, as we explained in the Introduction.

\begin{observation}
Let $a, b\in \R$ be fixed. Then the affine transformation $m:\R^2\to\R^2$ given by $m(x, y) = (ax+b,2abx+a^2y+b^2)$ maps the parabola $C$ to itself and keeps vertical lines vertical. Therefore, we are free to horizontally translate and scale the set of $x$-coordinates of the segment endpoints $\mathcal L'$, without affecting the resulting lower inner-zone sequence $S$ or the lower-envelope sequence $N$.\footnote{This observation would be useful in \emph{lower-bound constructions}, and hence, it is not used in the paper. We include it here just for the sake of completeness.}
\end{observation}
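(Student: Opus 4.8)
The plan is to verify directly that the stated map $m(x,y) = (ax+b, 2abx + a^2 y + b^2)$ has the two claimed properties, and then to explain why these two properties immediately yield the invariance of $S$ and $N$. Since $m$ is clearly affine (each coordinate is a linear function of $(x,y)$ plus a constant), the only substantive checks are (i) that $m$ maps $C$ onto $C$, and (ii) that $m$ sends vertical lines to vertical lines.

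For (i), I would take a generic point $(t, t^2)$ of $C$ and compute its image: the first coordinate is $at + b$, and the second coordinate is $2ab t + a^2 t^2 + b^2 = (at+b)^2$. Hence $m(t,t^2) = (at+b, (at+b)^2)$, which again lies on $C$. Since $a \neq 0$ (implicitly, for $m$ to be a bijection and for the parabola to be mapped \emph{onto} itself — I should note this hypothesis, or simply observe that the image is all of $C$ precisely because $t \mapsto at+b$ is onto when $a \neq 0$), the map restricted to $C$ is the reparametrization $t \mapsto at+b$, which is a bijection of $C$ to itself that preserves left-to-right order when $a > 0$. For (ii), a vertical line is $\{x = c\}$; its image under $m$ is $\{(ac+b, 2abc + a^2 y + b^2) : y \in \R\}$, and as $y$ ranges over $\R$ so does the second coordinate, so the image is the vertical line $\{x = ac+b\}$. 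Thus verticality is preserved.

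Finally I would spell out why this gives the invariance claim. The set $\mathcal L'$ is determined, as far as the combinatorial structures $S$ and $N$ are concerned, by the left-to-right order of the segment endpoints on $C$ together with which segment each endpoint belongs to — equivalently, since each endpoint is a point of $C$, by the left-to-right order of the $x$-coordinates of the endpoints. Applying $m$ with $a > 0$ is an affine bijection of the plane that maps $C$ to $C$, maps each segment (a chord of $C$) to a chord of $C$ whose endpoints are the images of the original endpoints, and since it is affine it preserves incidences and the relative position of lines, and in particular it preserves which cell of the arrangement contains $c_0$ and the orientation used to define positive/negative sides (vertical lines stay vertical, so "lower side" is preserved). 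Moreover it acts on the $x$-coordinates of the endpoints by the increasing affine map $t \mapsto at + b$, hence preserves their order. Therefore the tour defining $S'$, and its restrictions giving $S$ and $N$, are literally unchanged. The only mild subtlety — and the one place a reader might want care — is the need to take $a > 0$ so that the map preserves (rather than reverses) the left-to-right orientation along $C$; with $a < 0$ one would get the reversed sequences $S^R$, $N^R$. I would state the observation for $a > 0$, or remark that arbitrary real $a \neq 0$ is allowed if one is content with reversal. Since the observation is flagged as not used in the paper, a terse verification of (i) and (ii) followed by one sentence on the consequence suffices; there is no real obstacle here.
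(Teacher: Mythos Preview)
Your verification is correct and is exactly the natural argument; the paper in fact states this observation without any proof, so there is nothing further to compare. Your remark that one needs $a\neq 0$ for bijectivity and $a>0$ to preserve (rather than reverse) the left-to-right order is a useful clarification that the paper omits.
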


\begin{lemma}\label{lem_parabola}
Let $a$, $b$, $c$, $d$ be four points on the parabola $C$, having increasing $x$-coordinates $a_x < b_x < c_x < d_x$. Let $z = ac \cap bd$. Define the horizontal distances $p = b_x - a_x$, $q = d_x - c_x$, $r = z_x - b_x$, $s = c_x - z_x$. Then $p/q=r/s$. See Figure~\ref{fig_lem_parabola}, left.
\end{lemma}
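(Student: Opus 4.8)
The plan is to prove this by a direct computation with the equations of the two chords. Recall that the chord of the parabola $y=x^2$ joining the points with $x$-coordinates $u$ and $v$ lies on the line $y=(u+v)x-uv$ (substitute $x=u$ and $x=v$ to check). So the chord $ac$ lies on $y=(a_x+c_x)x-a_xc_x$, and the chord $bd$ lies on $y=(b_x+d_x)x-b_xd_x$. Equating the two right-hand sides and solving, the intersection point $z=ac\cap bd$ has
\[
z_x=\frac{a_xc_x-b_xd_x}{a_x+c_x-b_x-d_x}.
\]
The denominator is $(a_x-b_x)+(c_x-d_x)=-(p+q)\neq 0$, so $z$ is well defined; and since the four $x$-coordinates interleave as $a_x<b_x<c_x<d_x$, this is a genuine crossing point with $b_x<z_x<c_x$.

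Next I would substitute this into $r=z_x-b_x$ and $s=c_x-z_x$, putting each over the common denominator $a_x+c_x-b_x-d_x$. The only real content of the argument is the pair of factorizations
\[
(a_xc_x-b_xd_x)-b_x(a_x+c_x-b_x-d_x)=(a_x-b_x)(c_x-b_x),
\]
\[
c_x(a_x+c_x-b_x-d_x)-(a_xc_x-b_xd_x)=(c_x-d_x)(c_x-b_x),
\]
each verified by expanding and collecting terms. These give
\[
r=\frac{(a_x-b_x)(c_x-b_x)}{a_x+c_x-b_x-d_x},\qquad s=\frac{(c_x-d_x)(c_x-b_x)}{a_x+c_x-b_x-d_x},
\]
so the common denominator and the common factor $c_x-b_x\neq 0$ (recall $b_x<c_x$) cancel in the ratio, leaving
\[
\frac{r}{s}=\frac{a_x-b_x}{c_x-d_x}=\frac{b_x-a_x}{d_x-c_x}=\frac{p}{q},
\]
as claimed.

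I do not expect any genuine obstacle here: the statement is a short self-contained identity, and everything hinges on the two factorizations above — which in fact reveal the slightly stronger fact that $r$ and $s$ separately carry the common factor $c_x-b_x$. If one wants to trim the algebra, the affine normalization in the preceding Observation lets us assume, say, $a_x=0$ and $b_x=1$ before starting; but this is cosmetic, so I would just carry out the computation directly.
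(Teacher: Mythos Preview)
Your computation is correct: the chord formula $y=(u+v)x-uv$, the solution for $z_x$, and the two factorizations all check out, and the conclusion $r/s=p/q$ follows exactly as you say.

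In terms of approach, the paper mentions that the identity ``can be shown directly by a slightly cumbersome algebraic calculation'' --- which is precisely what you have done, and cleanly --- but the proof it actually writes out is different. The paper instead works on a circle: by the intersecting chords theorem one has $|ab|\cdot|cd|=|bz|\cdot|cz|$ (so $|ab|/|cd|=|bz|/|cz|$), and then one passes to the parabola by taking a tiny arc near the bottom of the circle, noting that chord lengths are asymptotically equal to their horizontal projections, and letting the arc degenerate to a parabolic segment under an affine rescaling. Your direct approach has the advantage of being exact and self-contained, and it even exposes the extra structure that $r$ and $s$ share the factor $c_x-b_x$. The paper's argument, on the other hand, explains \emph{why} such a parabolic identity should exist at all: it is the shadow of the classical power-of-a-point relation on the circle, which ties in with the circle-to-parabola correspondence discussed elsewhere in the paper.
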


\begin{figure}
\centerline{\includegraphics{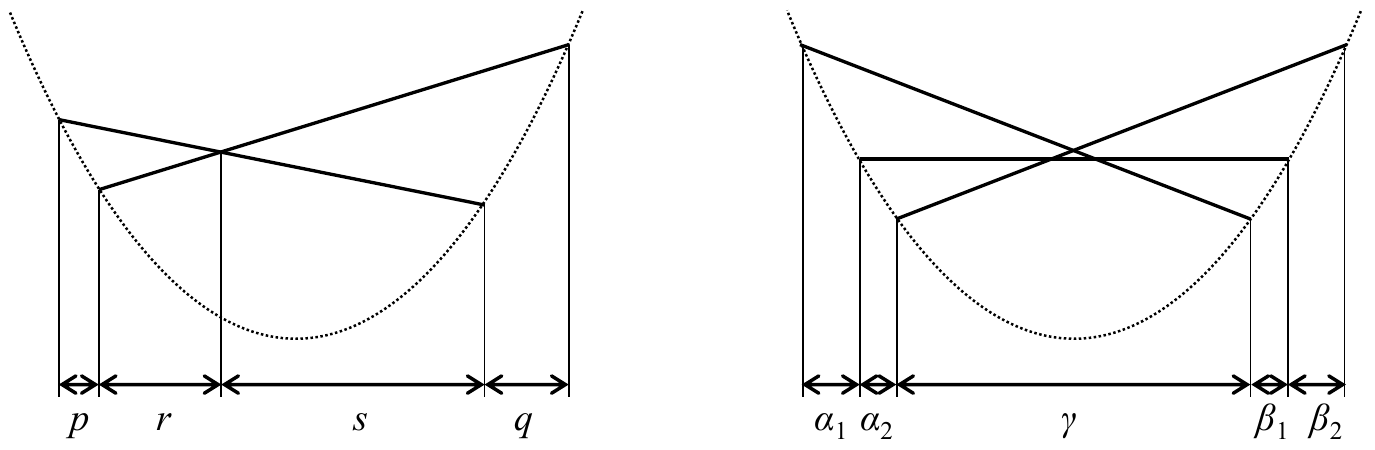}}
\caption{\label{fig_lem_parabola}Left: We have $p/q = r/s$. Right: Three segments intersecting concavely.}
\end{figure}

\begin{proof}
This can be shown directly by a slightly cumbersome algebraic calculation.

An alternative, more insightful proof uses elementary geometry and a limiting argument:

Let $C$ be not a parabola but a unit circle. Let $\alpha$ be a very small angle, and let $C_1$ be the arc of $C$ measuring angle $\alpha$ that is centered around the lowest point of $C$. Let $a$, $b$, $c$, $d$ be four points on $C_1$, in this order from left to right, and let $z=ac\cap bd$. Then, by the intersecting chords theorem, we have $ab/cd = bz/cz$. Since all the considered segments are almost horizontal, their length is almost equal to their $x$-projection. If we affinely stretch $C_1$ horizontally and vertically so its bounding box has width $1$ and height $1$, then it will almost match a parabola: At the limit as $\alpha\to 0$, the stretched arc pointwise converges to a parabolic segment. This affine transformation preserves the ratios between the horizontal projections, so the result follows.
\end{proof}

Lemma~\ref{lem_parabola} is actually part of a more general correspondence between circles and parabolas; see Yaglom \cite{yaglom}.

\begin{lemma}\label{lem_ratios}
Let $a,b,c,d,e,f$ be six points on the parabola $C$, listed by increasing $x$-coordinate. Suppose the segments $ad$, $be$, $cf$ intersect concavely. Define $\alpha_1 = b_x - a_x$, $\alpha_2 = c_x-b_x$, $\gamma = d_x-c_x$, $\beta_1 = e_x-d_x$, $\beta_2 = f_x-e_x$. Then:
\begin{enumerate}
\item $\alpha_1/\beta_1 > \alpha_2/\gamma$ and $\beta_2/\alpha_2 > \beta_1/\gamma$;
\item $\alpha_1/\beta_1 > \alpha_2/\beta_2$;
\item $\beta_1 < \beta_2$ or $\alpha_2 < \gamma+\beta_1+\beta_2$ (or both).
\end{enumerate}
See Figure~\ref{fig_lem_parabola}, right.
\end{lemma}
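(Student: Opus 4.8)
The plan is to reduce all three parts to two applications of Lemma~\ref{lem_parabola}, one for each of the two crossing points that the word ``concavely'' actually constrains. Write $z_1 = ad\cap be$ and $z_2 = be\cap cf$. Since a chord of $C$ meets $C$ only at its two endpoints, the point $z_1$, lying on both chords $ad$ and $be$, cannot coincide with any of $a,b,d,e$, so it lies in the open interior of each; hence $(z_1)_x\in(b_x,d_x)$. Likewise $(z_2)_x\in(c_x,e_x)$; in particular $(z_2)_x>c_x$ and $(z_1)_x<d_x$. Unwinding the definition of intersecting concavely for three segments, the hypothesis says precisely that $z_2$ lies to the left of $z_1$, i.e.\ $(z_2)_x<(z_1)_x$. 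I would then apply Lemma~\ref{lem_parabola} to the four points $a,b,d,e$ (in this left-to-right order), with $z_1=ad\cap be$ playing the role of the intersection point, and to the four points $b,c,e,f$, with intersection point $z_2$, obtaining
\begin{equation*}
\frac{\alpha_1}{\beta_1}=\frac{(z_1)_x-b_x}{d_x-(z_1)_x}\qquad\text{and}\qquad\frac{\alpha_2}{\beta_2}=\frac{(z_2)_x-c_x}{e_x-(z_2)_x}.
\end{equation*}

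Parts (1) and (2) then come out by elementary monotonicity. The function $g(t)=(t-b_x)/(d_x-t)$ is strictly increasing on $(b_x,d_x)$, and $g(c_x)=\alpha_2/\gamma$; since $(z_1)_x>(z_2)_x>c_x$ we get $\alpha_1/\beta_1=g\bigl((z_1)_x\bigr)>g(c_x)=\alpha_2/\gamma$. Symmetrically, $h(t)=(e_x-t)/(t-c_x)$ is strictly decreasing on $(c_x,e_x)$, and $h(d_x)=\beta_1/\gamma$; since $(z_2)_x<(z_1)_x<d_x$ we get $\beta_2/\alpha_2=h\bigl((z_2)_x\bigr)>h(d_x)=\beta_1/\gamma$. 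Together these are part (1). For part (2) I would simply compare the two displayed fractions: the numerator of the first exceeds that of the second because $(z_1)_x>(z_2)_x$ and $b_x<c_x$, while its denominator is smaller because $d_x<e_x$ and $(z_1)_x>(z_2)_x$; all four quantities are positive, so $\alpha_1/\beta_1>\alpha_2/\beta_2$.

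Part (3) requires no further geometry. Cross-multiplying the inequality $\beta_2/\alpha_2>\beta_1/\gamma$ from part (1) gives $\beta_2\gamma>\alpha_2\beta_1$. If both alternatives in part (3) failed, we would have $\beta_1\ge\beta_2$ together with $\alpha_2\ge\gamma+\beta_1+\beta_2$; the latter forces $\alpha_2>\gamma$, so $\alpha_2\beta_1>\gamma\beta_1\ge\gamma\beta_2$, contradicting $\beta_2\gamma>\alpha_2\beta_1$. The step I expect to take the most care is the first paragraph: matching the six points to the four roles in Lemma~\ref{lem_parabola} correctly for each of the two crossing points, and verifying that ``concave'' really does reduce to the single inequality $(z_2)_x<(z_1)_x$. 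Once that is settled, the (mildly surprising) point is that the third crossing point $ad\cap cf$ never enters the argument, and nothing beyond Lemma~\ref{lem_parabola} and the monotonicity of $g$ and $h$ is needed.
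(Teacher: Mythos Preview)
Your proof is correct and essentially identical to the paper's. The paper names your $z_1,z_2$ as $h,g$, subdivides $\gamma=\gamma_1+\gamma_2+\gamma_3$ with $\gamma_1=(z_2)_x-c_x$, $\gamma_2=(z_1)_x-(z_2)_x$, $\gamma_3=d_x-(z_1)_x$, and writes the same two applications of Lemma~\ref{lem_parabola} as $\alpha_1/\beta_1=(\alpha_2+\gamma_1+\gamma_2)/\gamma_3$ and $\alpha_2/\beta_2=\gamma_1/(\gamma_2+\gamma_3+\beta_1)$; parts (1)--(3) are then read off just as you do.
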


\begin{proof}
Let $g = be \cap cf$, $h = ad \cap be$. Subdivide $\gamma$ into $\gamma_1 = g_x - c_x$, $\gamma_2 = h_x - g_x$, $\gamma_3 = d_x - h_x$. By Lemma~\ref{lem_parabola} we have
\begin{equation*}
\frac{\alpha_1}{\beta_1} = \frac{\alpha_2 + \gamma_1+\gamma_2}{\gamma_3},\qquad
\frac{\alpha_2}{\beta_2} = \frac{\gamma_1}{\gamma_2 + \gamma_3 + \beta_1};
\end{equation*}
from which the first two claims follow.

By the first claim we have
\begin{equation*}
\frac{\beta_1}{\beta_2} < \frac{\gamma}{\alpha_2} < \frac{\gamma+\beta_1 + \beta_2}{\alpha_2};
\end{equation*}
hence, if $\beta_1/\beta_2$ is larger than $1$, then so is $(\gamma+\beta_1+\beta_2)/\alpha_2$, implying the third claim.
\end{proof}

\begin{figure}
\centerline{\includegraphics{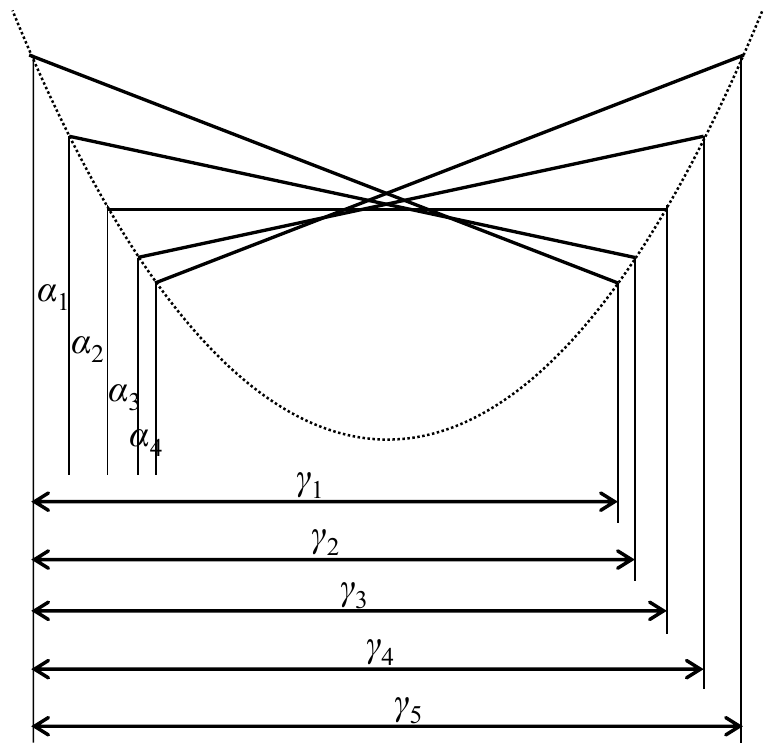}}
\caption{\label{fig_wide_fan}Illustration for Lemma~\ref{lemma_wide_fan} (picture not to scale).}
\end{figure}

\begin{definition}
Let $s_1, s_2, \ldots, s_m$ be segments whose endpoints appear in the order $L_{s_1}\, \cdots\, L_{s_m}\,\allowbreak R_{s_1}\, \cdots\, R_{s_m}$. These segments are called a \emph{wide set} if their $x$-coordinates satisfy $R_{s_kx} - L_{s_1x} > 2(R_{s_{k-1}x} - L_{s_1x})$ for each $2\le k\le m$.
\end{definition}

\begin{lemma}\label{lemma_wide_fan}
Let $s_1, \ldots, s_m$ be a wide set of segments that intersect concavely, and let $\alpha_k = L_{s_{k+1}x} - L_{s_kx}$ for $1\le k \le m-1$. Then $\alpha_k > \alpha_{k+1} + \cdots + \alpha_{m-1}$ for each $1\le k \le m-2$.
\end{lemma}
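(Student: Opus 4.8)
The plan is to reduce the statement to Lemma~\ref{lem_ratios} applied to a single well-chosen triple, and then let the wide-set hypothesis do the rest. Fix $k$ with $1\le k\le m-2$ (so in particular $m\ge 3$), and look at the three segments $s_k$, $s_{k+1}$, $s_m$. Any sub-triple of a family of concavely intersecting segments again intersects concavely, so these three do; and from the global endpoint order $L_{s_1}\cdots L_{s_m}R_{s_1}\cdots R_{s_m}$ their six endpoints occur, from left to right, as $L_{s_k},L_{s_{k+1}},L_{s_m},R_{s_k},R_{s_{k+1}},R_{s_m}$, with strictly increasing $x$-coordinates (all the $L$'s precede all the $R$'s, and $k<k+1<m$). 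Hence Lemma~\ref{lem_ratios} applies with $a=L_{s_k}$, $b=L_{s_{k+1}}$, $c=L_{s_m}$, $d=R_{s_k}$, $e=R_{s_{k+1}}$, $f=R_{s_m}$.

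In the notation of that lemma this substitution gives $\alpha_1=\alpha_k$, $\alpha_2=L_{s_mx}-L_{s_{k+1}x}=\alpha_{k+1}+\cdots+\alpha_{m-1}$, $\gamma=R_{s_kx}-L_{s_mx}$, and $\beta_1=R_{s_{k+1}x}-R_{s_kx}$; note $\gamma>0$, since $L_{s_mx}<R_{s_1x}\le R_{s_kx}$. The first inequality of part~1 of Lemma~\ref{lem_ratios}, namely $\alpha_1/\beta_1>\alpha_2/\gamma$, then reads $\alpha_k>(\alpha_{k+1}+\cdots+\alpha_{m-1})\cdot\beta_1/\gamma$, so it suffices to prove $\beta_1>\gamma$.

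This last step is exactly where the geometric input is replaced by the wide-set inequality. Put $w_j=R_{s_jx}-L_{s_1x}$. The defining inequality of a wide set, applied to the index $k+1$, gives $w_{k+1}>2w_k$, hence $\beta_1=w_{k+1}-w_k>w_k$; on the other hand $\gamma=R_{s_kx}-L_{s_mx}=w_k-(L_{s_mx}-L_{s_1x})<w_k$, because $L_{s_mx}>L_{s_1x}$. Thus $\beta_1>w_k>\gamma$, and the claimed bound $\alpha_k>\alpha_{k+1}+\cdots+\alpha_{m-1}$ follows. I do not anticipate any genuine obstacle here: the only points requiring care are that $s_k,s_{k+1},s_m$ are three distinct segments whose endpoints sit in the pattern demanded by Lemma~\ref{lem_ratios} (guaranteed by $k\le m-2$), and the choice of $s_{k+1}$ as the middle segment, which is what makes $\alpha_2$ coincide with the entire tail sum $\alpha_{k+1}+\cdots+\alpha_{m-1}$ that we are trying to bound.
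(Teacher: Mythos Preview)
Your proof is correct and follows essentially the same route as the paper's: both apply the first inequality of Lemma~\ref{lem_ratios} to the triple $s_k,s_{k+1},s_m$ and then use the wide-set hypothesis (your $w_j$ is the paper's $\gamma_j$) to get $\beta_1>w_k>\gamma$. The only cosmetic difference is that the paper writes the conclusion as a single chain of inequalities rather than isolating the sub-goal $\beta_1>\gamma$.
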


\begin{proof}
Let $\gamma_k = R_{s_kx} - L_{s_1x}$ for $1\le k\le m$. See Figure~\ref{fig_wide_fan}. We are given that $\gamma_k > 2\gamma_{k-1}$ for each $2\le k\le m$. Applying the first claim of Lemma~\ref{lem_ratios} to segments $s_k$, $s_{k+1}$, $s_m$, we get
\begin{equation*}
\frac{\alpha_k}{\alpha_{k+1} + \cdots + \alpha_{m-1}} > \frac{\gamma_{k+1}-\gamma_k}{\gamma_k - \sum \alpha_i} > \frac{\gamma_{k+1} - \gamma_k}{\gamma_k} > 1.
\end{equation*}
The claim follows.
\end{proof}

\section{Warmup: A simple but useless impossible configuration}\label{sec_useless_forbidden}

\begin{theorem}\label{thm_useless_impossible}
Let $a,b,c,d,e,1,2,3,4,8,9$ be eleven segments with endpoints on the parabola $C$, in left-to-right order
\begin{equation}\label{eq_useless_LR}
L_8\, L_1\, L_a\, L_b\, L_2\, R_8\, L_c\, L_d\, R_1\, R_2\, L_e\, R_a\, L_3\, L_4\, R_b\, R_c\, L_9\, R_3\, R_d\, R_e\, R_4\, R_9.
\end{equation}
Then, it is impossible for segments $8,1,2$ to intersect concavely, segments $3,4,9$ to intersect concavely, and segments $a,b,c,d,e$ to intersect concavely, all at the same time.
\end{theorem}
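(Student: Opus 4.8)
The plan is to show that the three concavity hypotheses, together with the prescribed left-to-right order \eqref{eq_useless_LR} of the twenty-two endpoints, force a system of inequalities on the twenty-one consecutive horizontal gaps $d_1,\dots,d_{21}>0$ (so that, for instance, $L_{1x}-L_{8x}=d_1$, $L_{2x}-L_{1x}=d_2+d_3+d_4$, $R_{8x}-L_{2x}=d_5$, and so on, each quantity of interest being a sum of consecutive $d_i$'s read off directly from \eqref{eq_useless_LR}), and then to check that this system has no positive solution. By the affine-transformation observation we may normalize, say, $L_{8x}=0$ and $R_{9x}=1$, though this is only cosmetic.

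First I would feed each concave family into Lemma~\ref{lem_ratios}. For the triple $8,1,2$ one takes the six points $L_8,L_1,L_2,R_8,R_1,R_2$ (which by \eqref{eq_useless_LR} do occur in the order required by the lemma), reads off $\alpha_1,\alpha_2,\gamma,\beta_1,\beta_2$ as the corresponding gap-sums, and obtains $\alpha_1\gamma>\alpha_2\beta_1$, $\beta_2\gamma>\alpha_2\beta_1$, $\alpha_1\beta_2>\alpha_2\beta_1$, and the dichotomy ``$\beta_1<\beta_2$ or $\alpha_2<\gamma+\beta_1+\beta_2$''. Before doing the same for $3,4,9$ I would record that the whole configuration is invariant under the involution that relabels $8\leftrightarrow9$, $1\leftrightarrow4$, $2\leftrightarrow3$, $a\leftrightarrow e$, $b\leftrightarrow d$, fixes $c$, and reflects the plane in a vertical line: one checks position by position that this fixes \eqref{eq_useless_LR}, and since $x\mapsto-x$ preserves $C$ and turns a concave family into a concave family, it also preserves all three hypotheses; hence the triple $3,4,9$ yields exactly the mirror image of the inequalities above, for free, and the five-segment family is mapped to itself. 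For the five concave segments $a,b,c,d,e$ I would apply Lemma~\ref{lem_ratios} to the outer triples $(a,b,c)$, $(c,d,e)$, $(a,c,e)$, and if needed to $(a,b,e)$ and $(a,d,e)$. Part~(2) of the lemma then gives the chain $\tfrac{L_{bx}-L_{ax}}{R_{bx}-R_{ax}}>\tfrac{L_{cx}-L_{bx}}{R_{cx}-R_{bx}}>\tfrac{L_{dx}-L_{cx}}{R_{dx}-R_{cx}}>\tfrac{L_{ex}-L_{dx}}{R_{ex}-R_{dx}}$, and part~(1) gives two-sided bounds relating these left- and right-gaps to the long gap $R_{ax}-L_{ex}$.

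The heart of the argument is to splice these three families of inequalities together along the nesting forced by \eqref{eq_useless_LR}. The relevant containments are that $L_{ax}L_{bx}$ is the ``$d_3$'' piece of $\alpha_2=L_{1x}L_{2x}$, that the gap $R_{8x}L_{2x}=\gamma$ lies inside $L_{bx}L_{cx}$, and, by the symmetry, that $R_{dx}R_{ex}$ is the corresponding piece of the mirror ``$\alpha_2$'' while $L_{9x}R_{3x}$ lies inside $R_{cx}R_{dx}$. So the concavity of $8,1,2$ constrains $L_{ax}L_{bx}$ and $R_{8x}L_{2x}$ relative to the gaps to their right, the concavity of $3,4,9$ does the mirror at the right end, and the ratio chain from $a,b,c,d,e$ insists that $\tfrac{L_{bx}-L_{ax}}{R_{bx}-R_{ax}}$ exceed every later ratio, in particular $\tfrac{L_{ex}-L_{dx}}{R_{ex}-R_{dx}}$; tracing these around the picture, together with the part-(1) bounds on $R_{ax}-L_{ex}$ and their mirror, should produce two incompatible estimates for one of the central gaps, which is the desired contradiction.

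The step I expect to be the main obstacle is exactly this last splicing: the three applications of Lemma~\ref{lem_ratios} produce far more inequalities than are needed, and isolating the precise sub-collection that closes the contradiction---while carrying along the case split forced by the ``or'' in part~(3) for each of the two end triples (four cases, collapsing to two by the symmetry)---is where the real work lies. Everything else is mechanical: rewriting \eqref{eq_useless_LR} as gap-sums and invoking Lemmas~\ref{lem_parabola} and~\ref{lem_ratios}.
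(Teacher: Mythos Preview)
Your proposal has the right overall shape---use the concavity of $a,b,c,d,e$ to get the ratio chain
\[
\frac{\alpha_1}{\beta_1}>\frac{\alpha_2}{\beta_2}>\frac{\alpha_3}{\beta_3}>\frac{\alpha_4}{\beta_4}
\qquad(\alpha_i=L\text{-gaps},\ \beta_i=R\text{-gaps of }a,\dots,e),
\]
and then squeeze it against information from the two end triples---but you are missing the key idea that makes the contradiction close, and your plan to extract that information via Lemma~\ref{lem_ratios} on $8,1,2$ and $3,4,9$ is the wrong tool.

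The concavity of $8,1,2$ is used for one thing only: to force the intersection point $A=1\cap 2$ to lie to the \emph{left} of $R_8$ (and symmetrically $B=3\cap 4$ to the right of $L_9$). Once $A$ is located, you apply Lemma~\ref{lem_parabola} \emph{directly} to the four parabola points $L_1,L_2,R_1,R_2$ with $z=A$, obtaining the exact \emph{equality}
\[
\frac{p}{q}=\frac{r}{s},\qquad
p=L_{2x}-L_{1x},\ q=R_{2x}-R_{1x},\ r=A_x-L_{2x},\ s=R_{1x}-A_x,
\]
and the mirror equality $p'/q'=r'/s'$ from $L_3,L_4,R_3,R_4$ and $B$. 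Reading off the order \eqref{eq_useless_LR} together with $A_x<R_{8x}$ and $B_x>L_{9x}$ gives $\alpha_1<p$, $\alpha_2>r$, $\alpha_3<s$, $\alpha_4>q$ and $\beta_1>p'$, $\beta_2<r'$, $\beta_3>s'$, $\beta_4<q'$. Then
\[
\frac{\alpha_1}{\beta_1}\cdot\frac{\alpha_3}{\beta_3}
\;<\;\frac{ps}{p's'}\;=\;\frac{qr}{q'r'}\;<\;
\frac{\alpha_2}{\beta_2}\cdot\frac{\alpha_4}{\beta_4},
\]
which contradicts the ratio chain. No case split, no part~(3) of Lemma~\ref{lem_ratios}, and no ``splicing'' of many inequalities is needed. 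Your plan to feed $8,1,2$ into Lemma~\ref{lem_ratios} only produces \emph{inequalities} among $p,q,r+s,\dots$; it never yields the equality $ps=qr$ that the argument hinges on, and that is why you were left anticipating an unpleasant case analysis as the ``main obstacle.'' The symmetry you observed is real and is exactly what makes the primed quantities behave like the unprimed ones.
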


\begin{proof}
Suppose for a contradiction that $a,\ldots,9$ are segments satisfying all these properties. The intersection point of segments $1$ and $2$, which we shall call $A$, must lie left of $R_8$, and the intersection point of segments $3$ and $4$, which we shall call $B$, must lie right of $L_9$. See Figure~\ref{fig_2parabolas}.

\begin{figure}
\centerline{\includegraphics{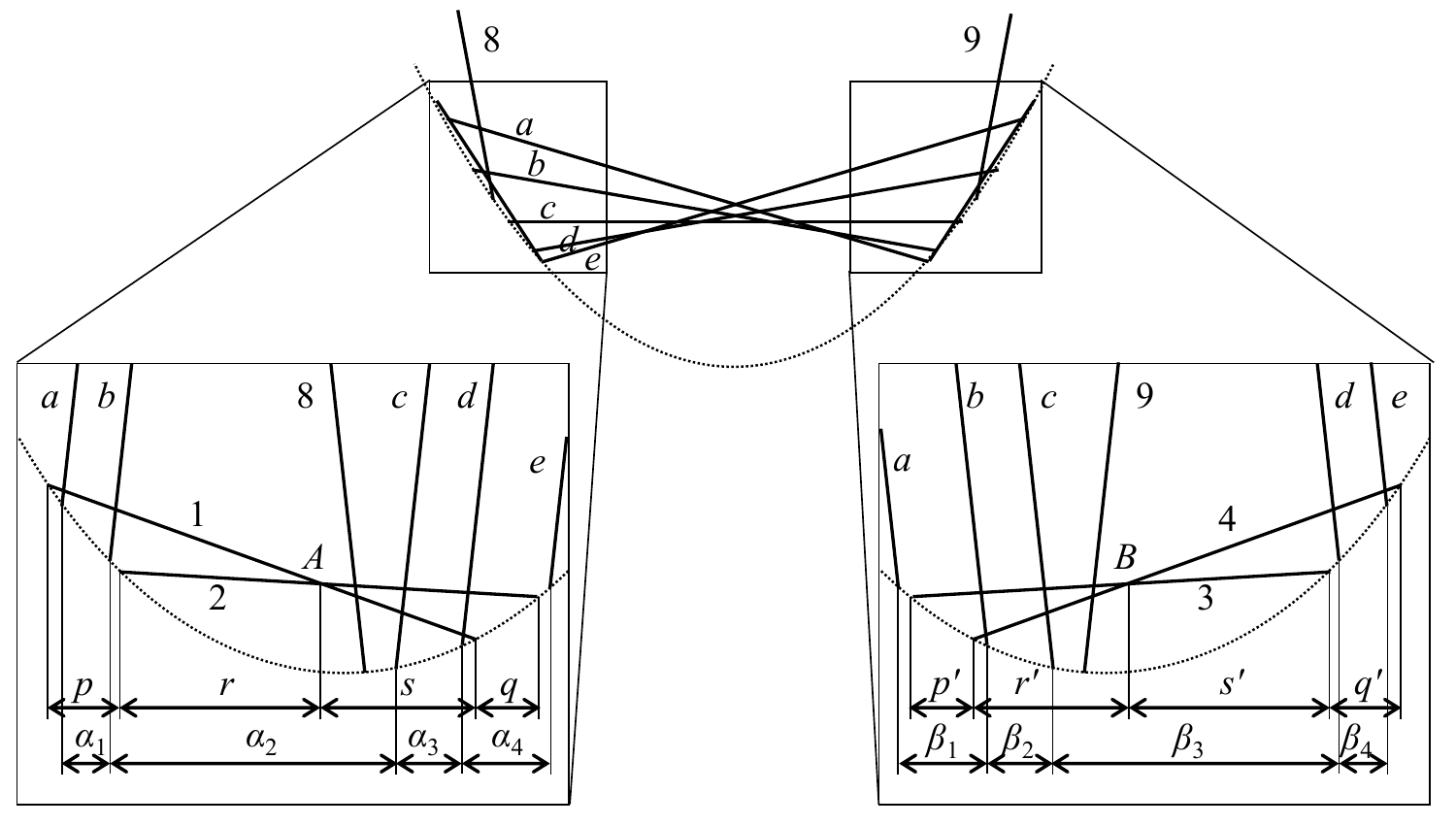}}
\caption{\label{fig_2parabolas}An impossible configuration of segments.}
\end{figure}

Define:
\begin{align*}
\alpha_1 &=L_{bx} - L_{ax}, & \beta_1 &=R_{bx}-R_{ax},\\
\alpha_2 &=L_{cx} - L_{bx}, & \beta_2 &=R_{cx}-R_{bx},\\
\alpha_3 &=L_{dx} - L_{cx}, & \beta_3 &=R_{dx}-R_{cx},\\
\alpha_4 &=L_{ex} - L_{dx}, & \beta_4 &=R_{ex}-R_{dx}.
\end{align*}
Segments $a,b,c,d,e$ must intersect concavely, so by the second claim of Lemma~\ref{lem_ratios}, we must have
\begin{equation}\label{eq_dec_ratios}
\frac{\alpha_1}{\beta_1} > \frac{\alpha_2}{\beta_2} > \frac{\alpha_3}{\beta_3} > \frac{\alpha_4}{\beta_4}.
\end{equation}
We will show, however, that this is impossible.
Define:
\begin{align*}
p &= L_{2x} - L_{1x}, & p' &= L_{4x} - L_{3x},\\
r &= A_x    - L_{2x}, & r' &= B_x    - L_{4x},\\
s &= R_{1x} - A_x   , & s' &= R_{3x} - B_x   ,\\
q &= R_{2x} - R_{1x}, & q' &= R_{4x} - R_{3x}.
\end{align*}
Then,
\begin{align*}
\alpha_1 &< p, & \beta_1 &> p',\\
\alpha_2 &> r, & \beta_2 &< r',\\
\alpha_3 &< s, & \beta_3 &> s',\\
\alpha_4 &> q, & \beta_4 &< q'.
\end{align*}
Furthermore, by Lemma~\ref{lem_parabola} we have $p/q = r/s$, $p'/q' = r'/s'$. Hence,
\begin{equation*}
\frac{\alpha_1\alpha_3}{\beta_1\beta_3} < \frac{ps}{p's'} = \frac{qr}{q'r'} < \frac{\alpha_2\alpha_4}{\beta_2\beta_4},
\end{equation*}
contradicting (\ref{eq_dec_ratios}).
\end{proof}

\begin{corollary}\label{cor_useless_forbidden}
Let $S$ be the lower inner-zone sequence of the parabola $C$ in an arrangement of lines. Then $S$ cannot contain a subsequence isomorphic to
\begin{equation*}
u = 81ab12181cd12dedcbab34bc49434de49.
\end{equation*}
\end{corollary}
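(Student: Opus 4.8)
\section*{Proof proposal}

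The plan is to argue by contradiction using the ``forcing'' scheme of Section~\ref{subsec_our_results}: assume some lower inner-zone sequence $S$ contains a subsequence isomorphic to $u$, and then extract from the eleven corresponding segments of $\mathcal L'$ a subfamily matching exactly the configuration proved impossible in Theorem~\ref{thm_useless_impossible}. Concretely, Lemma~\ref{lemma_clamped} will pin down the left-to-right order of the twenty-two endpoints, and Observation~\ref{obs_Nshaped} will pin down the three concavity requirements.

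First I would record that $u$ is $ababa$-free, so that the statement is not vacuous; this is a routine check (the only symbols occurring five times are $1$ and $4$, and in each case no fifth interleaved occurrence of a partner symbol is available, while every other symbol occurs at most four times with the repetitions badly placed for an alternation). The main structural step is then to verify that every symbol of $u$ other than the leading $8$ is left-clamped and every symbol other than the trailing $9$ is right-clamped, in the sense of Definition~\ref{def_clamped}. For example $a$ is left-clamped because $u$ contains $1\,a\,1\,a$ (the symbol preceding the first $a$ being $1$) and right-clamped because it contains $a\,b\,a\,b$ (the symbol following the last $a$ being $b$); one checks the analogous four-term patterns for each of the other ten symbols. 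Having done this, Lemma~\ref{lemma_clamped} applies, and since $E(S)$ is exactly the left-to-right order of the endpoints of $\mathcal L'$, the endpoints of the eleven segments appear in the order $E(u)$. A direct computation of $E(u)$ --- replacing the first occurrence of each symbol by its $L$, the last by its $R$, and deleting the rest --- yields precisely the sequence~\eqref{eq_useless_LR}.

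It then remains to force the three concavity conditions. Restricting $u$ to $\{8,1,2\}$ exhibits the $N$-shaped subsequence $8\,1\,2\,1\,8\,1\,2$; restricting to $\{3,4,9\}$ exhibits $3\,4\,9\,4\,3\,4\,9$; and restricting to $\{a,b,c,d,e\}$ exhibits $a\,b\,c\,d\,e\,b\,a\,b\,c\,d\,e$. By Observation~\ref{obs_Nshaped} these force, respectively, that segments $8,1,2$ intersect concavely, that $3,4,9$ intersect concavely, and that $a,b,c,d,e$ intersect concavely; the within-group endpoint orders that the observation also imposes are consistent with~\eqref{eq_useless_LR}, as they must be. But a configuration of eleven segments with endpoint order~\eqref{eq_useless_LR} satisfying all three concavity conditions simultaneously was shown to be geometrically impossible in Theorem~\ref{thm_useless_impossible}. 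This contradiction shows $S$ cannot contain $u$, completing the proof.

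I do not expect a genuine mathematical obstacle: all the content lies in Theorem~\ref{thm_useless_impossible}, and the corollary is a matter of bookkeeping. The only place where care is needed is the clamping verification --- about twenty small pattern-membership checks in which it is easy to mislabel ``the symbol immediately preceding/following'' an occurrence --- so I would first write out the positions of the first and last occurrence of each of the eleven symbols in a table, and only then run through the checks.
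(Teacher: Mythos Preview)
Your approach is exactly the paper's: compute $E(u)$, verify the clamping hypotheses, apply Lemma~\ref{lemma_clamped} to pin down the endpoint order, then invoke Observation~\ref{obs_Nshaped} on three $N$-shaped subsequences to force the three concavity conditions, and finish with Theorem~\ref{thm_useless_impossible}.

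One small correction, though: the $N$-shape on five symbols required by Observation~\ref{obs_Nshaped} is $abcde\,dcba\,bcde$, i.e.\ the full descent $e\,d\,c\,b\,a$ before the re-ascent. Your exhibited subsequence $abcde\,ba\,bcde$ omits the $d,c$ on the way down, so it does not literally match the hypothesis of the observation (and in fact it fails to contain the $m=3$ pattern $bcdcbcd$ needed for the concavity of the triple $b,c,d$). The fix is immediate: the restriction of $u$ to $\{a,b,c,d,e\}$ is $abcd\,dedcbab\,bcde$, which does contain the full $N$-shape $abcdedcbabcde$, and this is what the paper uses.
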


\begin{proof}
We have
\begin{equation*}
E(u) = L_8\, L_1\, L_a\, L_b\, L_2\, R_8\, L_c\, L_d\, R_1\, R_2\, L_e\, R_a\, L_3\, L_4\, R_b\, R_c\, L_9\, R_3\, R_d\, R_e\, R_4\, R_9,
\end{equation*}
exactly matching (\ref{eq_useless_LR}). Furthermore, as a tedious examination shows, all symbols but $8$ are left-clamped in $u$, and all symbols but $9$ are right-clamped in $u$. Furthermore,
 $u$ contains the $N$-shaped subsequences $8121812$, $3494349$, and $abcdedcbabcde$. Hence, by Lemma~\ref{lemma_clamped} and Observation~\ref{obs_Nshaped}, $u$ forces the impossible segment configuration of Theorem~\ref{thm_useless_impossible}.
\end{proof}

(We obtained the sequence $u$ by simply taking the lower inner-zone sequence of the configuration of Theorem~\ref{thm_useless_impossible}, and removing from it unnecessary symbols.)

If we are only interested in a pattern avoided by $N$, the lower-envelope sequence, then we can omit the symbols $8$ and $9$ from $u$. Their only role is preventing the intersection points $A$ and $B$ from ``hiding" above the segment $c$.

Unfortunately, as we said in the Introduction, the forbidden pattern $u$ is useless for establishing Conjecture~\ref{conj}, since $u$ contains both $ab\, cac\, cbc$ and its reversal (e.g., $be\,4b4\,4e4$, $1a1\,1d1\,ad$). Furthermore, there does not seem to be a simple way to ``fix" $u$.

\section{A more promising impossible configuration}\label{sec_forbidden_2}

In this section we construct a $173$-segment impossible configuration $X$. Then, in Section~\ref{sec_ababa} we will show that the Hart--Sharir sequences eventually force the configuration $X$.

We will now work with endpoint sequences in which some contiguous subsequences (\emph{blocks}) that contain only left endpoints are designated as \emph{special blocks}. It will always be the case that all the special blocks in a sequence have the same length. We denote special blocks by enclosing them in parentheses.

We define an operation on endpoint sequences called \emph{endpoint shuffling}. This operation is derived from the \emph{shuffling} operation used to construct the Hart--Sharir sequences, which we will present in Section~\ref{sec_ababa} below.

Let $A$ be a sequence that has $k$ special blocks of length $m$, and let $B$ be a sequence that has $\ell$ special blocks of length $k$. Then the \emph{endpoint shuffle} of $A$ and $B$, denoted $A \circ B$, is a new sequence having $k\ell$ special blocks of length $m+1$, formed as follows: We make $\ell$ copies of $A$ (one for each special block of $B$), each one having ``fresh" symbols that do not occur in $B$ nor in any other copy of $A$.

For each special block $\Gamma_i = (L_1\, \cdots\, L_k)$ in $B$, $1\le i\le\ell$, let $A_i$ be the $i$th copy of $A$. We insert each $L_j$ at the end of the $j$th special block of $A_i$. Then we insert the resulting sequence in place of $\Gamma_i$ in $B$. The result of all these replacements is the desired sequence $A\circ B$.

For example, let 
\begin{equation*}
A = (L_a)\, (L_b)\, (L_c)\, R_a\, R_b\, R_c, \qquad B = (L_1\, L_2\, L_3)\, (L_4\, L_5\, L_6)\, R_1\, R_4\, R_2\, R_5\, R_3\, R_6.
\end{equation*}
Then,
\begin{multline*}
A\circ B = (L_a\, L_1)\, (L_b\, L_2)\, (L_c\, L_3)\, R_a\, R_b\, R_c\, (L_{a'}\, L_4)\\
(L_{b'}\, L_5)\, (L_{c'}\, L_6)\, R_{a'}\,R_{b'}\, R_{c'}\, R_1\, R_4\, R_2\, R_5\, R_3\, R_6.
\end{multline*}

Now, define the following endpoint sequences:
\begin{align*}
F_m &= (L_1\cdots L_m)\quad R_1\cdots R_m, \qquad m\ge 1;\\
Z_m &= L_a\, L_b\quad (L_1\cdots L_m)\quad R_1\cdots R_m\quad L_c\, R_a\\
&\qquad\qquad(L_{m+1}\cdots L_{2m})\quad R_{m+1}\cdots R_{2m}\quad R_b\, R_c, \qquad m\ge 1;\\
Y &= L_d\, L_e\, ()\, ()\, L_f\, R_d\, ()\, ()\, R_e\, R_f\, (),
\end{align*}
where $Y$ has five empty special blocks.

Recall from Section~\ref{sec_preliminaries} the definition of a \emph{wide} set of segments. We will now show how, by shuffling the $Z$ and $F$ sequences in the appropriate way, we can create, for every $n$, a configuration that contains a wide set of $n$ segments.\footnote{With a slight abuse of terminology, we do not always distinguish between endpoint sequences and the corresponding segment configurations.}

Consider the segment configuration
\begin{equation*}
T_n = (\cdots (((Z_1 \circ Z_2)\circ Z_4)\circ Z_8)\circ \cdots Z_{2^{n-1}})\circ F_{2^n}.
\end{equation*}
$T_n$ contains $2^{n-k-1}$ copies of $Z_{2^k}$ for each $0\le k<n$, plus one copy of $F_{2^n}$. Each copy of $Z_m$ contains its own triple of segments $a,b,c$. These segments resemble a cat's whiskers, so let us call each such triple $a,b,c$ a \emph{whisker} for short. The whiskers of $T_n$ are nested inside one another in the form of a complete binary tree of height $n$.

In addition, each copy $F_m$ and $Z_m$ contains segments labeled $1,\ldots, m$ and $m+1,\ldots,2m$, which we shall call \emph{numeric segments}. In total, $T_n$ contains $(n+1) 2^n$ numeric segments, whose left endpoints appear in $2^n$ special blocks of length $n+1$. Each such special block appears where we would expect to find the whiskers of level $n+1$ of the binary tree.

\begin{figure}
\centerline{\includegraphics{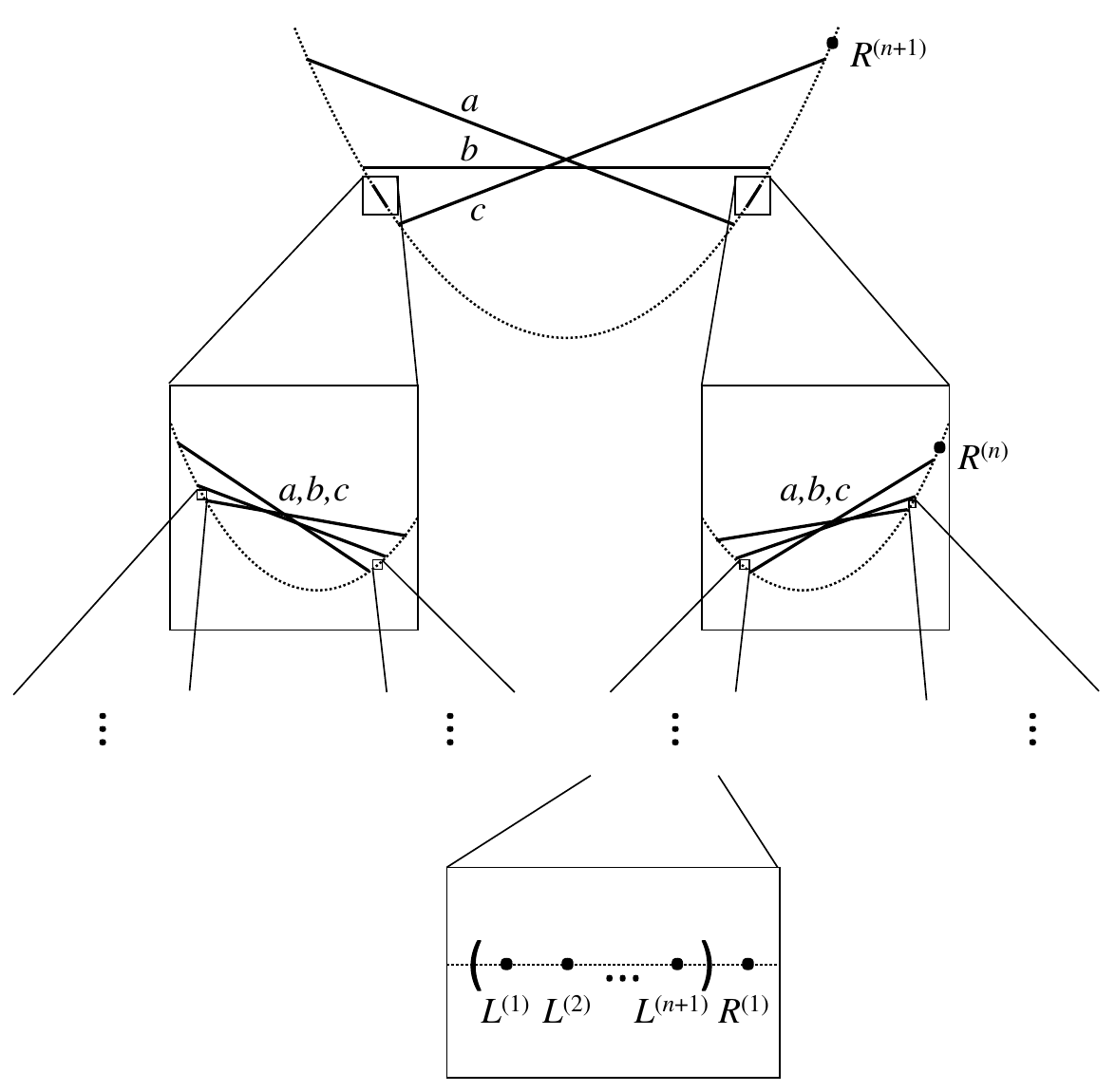}}
\caption{\label{fig_T}The configuration $T_n$ contains \emph{whiskers} (triples of segments labeled $a$, $b$, $c$) organized in a nested, binary-tree structure. In addition, it contains \emph{numeric segments}, grouped into sets of size $n+1$. This sketch shows only one such set of numeric segments, and it only shows their endpoints. For each set of numeric segments, their left endpoints lie in a special block of length $n+1$ (indicated by parentheses in the figure).}
\end{figure}

Consider one such special block $(L^{(1)}, \ldots, L^{(n+1)})$ (where the endpoints have been renamed for clarity of exposition). The corresponding right endpoints are located as follows: Let $w_1, \ldots, w_n$ be the whiskers containing our special block, where $w_1$ is the innermost whisker and $w_n$ is the outermost one. Then, the first right endpoint $R^{(1)}$ appears within $w_1$, right after $L^{(n+1)}$. For each $2\le i\le n$, the right endpoint $R^{(i)}$ appears outside $w_{i-1}$ but inside $w_i$. Finally, $R^{(n+1)}$ appears outside $w_n$. See Figure~\ref{fig_T} for a rough sketch of $T_n$.

\begin{lemma}\label{lemma_one_wide}
Suppose that, in $T_n$, the segments $a,b,c$ in each whisker intersect concavely. Then at least one the sets of $n+1$ numeric segments must be wide.
\end{lemma}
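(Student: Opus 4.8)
The plan is to turn the purely combinatorial description of $T_n$ (the nesting of the whiskers and the positions of the endpoints of the numeric segments) into a system of inequalities on the $x$-coordinates of the endpoints, and then to exhibit one wide set by descending the binary tree of whiskers from the root down to a leaf, choosing the sub-block at each node according to the local geometry. Concretely, I would fix a realization of $T_n$ in which every whisker intersects concavely, and set up notation as in the paragraph around Figure~\ref{fig_T}: for a leaf-block $\beta$ with numeric segments $s_1,\dots,s_{n+1}$ (so that $L_{s_i}$ sits in position $i$ of $\beta$'s block) and surrounding whiskers $w_1\subset\cdots\subset w_n$, I would record what the combinatorics imposes, namely $L_{s_1x}<\cdots<L_{s_{n+1}x}<R_{s_1x}<\cdots<R_{s_{n+1}x}$, together with the facts that $L_{s_1}$ lies inside $w_1$, that $R_{s_i}$ lies outside $w_{i-1}$ but inside $w_i$ (in particular to the right of the right endpoint of $w_{i-1}$ and to the left of the right endpoint of $w_i$), and, crucially, that each of the two sub-blocks of a whisker $w=Z_M$ occupies a definite $x$-sub-interval of one of the six gaps between the endpoints of $w$'s three segments. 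Writing $\rho^\beta_k=R_{s_kx}-L_{s_1x}$, the set $\{s_1,\dots,s_{n+1}\}$ is wide precisely when $\rho^\beta_k>2\rho^\beta_{k-1}$ for all $2\le k\le n+1$, so it is enough to produce a single leaf $\beta$ satisfying all $n$ of these inequalities.

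I would build this leaf top-down. Suppose $w_n\supset\cdots\supset w_k$ have already been chosen (so everything lying outside $w_k$ is fixed); I must decide which of the two sub-blocks of $w_k$ to enter next, which also fixes $w_{k-1}$ and hence decides the level-$k$ inequality $\rho_k>2\rho_{k-1}$ for the leaf being built. Applying Lemma~\ref{lem_ratios} to the concave triple in $w_k$, and combining the resulting ratio inequalities among the gaps with the fact that whichever sub-block I do \emph{not} enter still contains an entire nested sub-configuration and hence occupies a non-negligible portion of one of $w_k$'s gaps, I would argue that for at least one of the two choices, \emph{every} leaf lying below it satisfies $\rho_k>2\rho_{k-1}$; I then enter that sub-block and pass to $w_{k-1}$. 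After $n$ steps this produces a leaf for which all $n$ widthness inequalities hold (the outermost step, which involves $F_{2^n}$ rather than a whisker, being handled the same way), and Lemma~\ref{lemma_one_wide} follows.

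The main obstacle is exactly the factor of $2$ in that last step. Concavity of a single whisker's triple does not by itself yield any constant factor — Lemma~\ref{lem_ratios} gives only ratio inequalities among the five gaps, and Lemma~\ref{lem_parabola} only the chord identity $p/q=r/s$ — so the factor $2$ has to be extracted from the metric ``bulk'' of the nested sub-configurations sitting inside the whisker's sub-blocks, i.e.\ from the fact that a full copy of some $P_{k-1}$ cannot be crushed into an arbitrarily thin interval relative to the whisker containing it. Quantifying this bulk, i.e.\ isolating an invariant strong enough to survive one application of the endpoint-shuffle operation $\circ$, is the delicate part, and I expect it forces Lemma~\ref{lemma_one_wide} to be proved by induction (on $n$, or on the depth of the whisker tree) with a strengthened hypothesis that carries some extra ``slack'' down the recursion, rather than in the bare form stated.
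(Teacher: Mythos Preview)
Your overall plan—descend the whisker tree from the root, choosing one of the two sub-blocks at each level so as to secure one width inequality $\rho_{j+1}>2\rho_j$ per level—is exactly the paper's approach. But you have misidentified the source of the factor $2$, and the ``bulk'' mechanism you anticipate is not needed.

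You say that Lemma~\ref{lem_ratios} ``gives only ratio inequalities among the five gaps''; this overlooks its \emph{third} claim, which is not a ratio statement at all but the disjunction
\[
\beta_1<\beta_2 \qquad\text{or}\qquad \alpha_2<\gamma+\beta_1+\beta_2.
\]
This disjunction is the entire engine. In each copy of $Z_m$ the first sub-block, together with the right endpoints $R_1,\dots,R_m$ of its numeric segments, lies inside the interval $(L_b,L_c)$, of width $\alpha_2$; the second sub-block together with its $R_{m+1},\dots,R_{2m}$ lies inside $(R_a,R_b)$, of width $\beta_1$; and the next-level right endpoint $R^{(j+1)}$, being outside the whisker, lies to the right of $R_c$. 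Hence for every leaf below the first sub-block one has $\rho_j<\alpha_2$ and $\rho_{j+1}-\rho_j>R_{cx}-L_{cx}=\gamma+\beta_1+\beta_2$, while for every leaf below the second sub-block one has $\rho_j<\beta_1$ and $\rho_{j+1}-\rho_j>R_{cx}-R_{bx}=\beta_2$. Whichever alternative of the disjunction holds at $w_j$, entering the matching sub-block gives $\rho_{j+1}-\rho_j>\rho_j$, i.e.\ $\rho_{j+1}>2\rho_j$, for every leaf below it—with no appeal whatsoever to the metric size of the sub-configuration sitting inside that sub-block. No strengthened inductive hypothesis carrying slack is required; the paper's proof is essentially three sentences.
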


\begin{proof}
Let $w=(a,b,c)$ be the outermost whisker of $T_n$. By the third claim of Lemma~\ref{lem_ratios}, we have $R_{bx} - R_{ax} < R_{cx} - R_{bx}$ or $L_{cx}-L_{bx} < R_{cx} - L_{cx}$ (or both). In the first case, we proceed to the left branch in the binary tree of whiskers. In the second case, we proceed to the right branch. Let $w'=(a',b',c')$ be the next whisker in the chosen branch. We invoke Lemma~\ref{lem_ratios} again on $w'$, and we proceed in this way, going down the binary tree of whiskers until we reach a ``leaf" whisker $w^*$. We invoke Lemma~\ref{lem_ratios} one final time on $w^*$ in the same way, and we choose one of the two special blocks that lie within $w^*$. The $n+1$ numeric segments in that special block form a wide set, as can be easily checked.
\end{proof}

\begin{remark}\label{remark_binary_tree}
As we saw, in the whisker $w=(a,b,c)$ of $Z_m$, one of two gaps is shorter than the distance from the end of the gap to the end of the whisker. If we could find a configuration in which a \emph{specific} gap is shorter than the distance to the end of the configuration, we could obviate the need for a binary tree and significantly reduce the size of our construction.
\end{remark}

Next, we show how a wide set of $5$ segments can lead to trouble:

\begin{lemma}\label{lem_smash_wide}
Consider the sequence
\begin{equation*}
Y\circ F_5 = L_d\, L_e\, L_1\, L_2\, L_f\, R_d\, L_3\, L_4\, R_e\, R_f\, L_5\, R_1\, R_2\, R_3\, R_4\, R_5.
\end{equation*}
It is impossible for the segments $d,e,f$ to intersect concavely, and for the segments $1,2,3,4,5$ to form a wide set and intersect concavely, all at the same time.
\end{lemma}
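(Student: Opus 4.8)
The plan is to distill from the hypotheses just two elementary inequalities among the horizontal gaps of the whisker $d,e,f$ and then collide them with the first claim of Lemma~\ref{lem_ratios}. Apply Lemma~\ref{lem_ratios} to the concave triple $d,e,f$: since the six points $L_d,L_e,L_f,R_d,R_e,R_f$ occur in increasing $x$-order and the three segments join, respectively, the $1$st and $4$th, the $2$nd and $5$th, and the $3$rd and $6$th of them, the lemma provides the gaps $\alpha_1=L_{ex}-L_{dx}$, $\alpha_2=L_{fx}-L_{ex}$, $\gamma=R_{dx}-L_{fx}$, $\beta_1=R_{ex}-R_{dx}$, $\beta_2=R_{fx}-R_{ex}$, together with (the second half of its first claim) $\beta_2/\alpha_2>\beta_1/\gamma$, i.e.\ $\gamma\beta_2>\alpha_2\beta_1$. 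I will contradict this by showing $\alpha_2>\gamma$ and $\beta_1>\beta_2$.

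To that end, write $A_k=L_{(k+1)x}-L_{kx}$ for $k=1,2,3,4$, the gaps between consecutive left endpoints of the numeric segments. Reading the prescribed endpoint order $L_d\,L_e\,L_1\,L_2\,L_f\,R_d\,L_3\,L_4\,R_e\,R_f\,L_5\,R_1\cdots R_5$ yields four ``interleaving'' inequalities at no cost: $A_1<\alpha_2$ because $L_1,L_2$ both lie strictly inside the gap $\alpha_2$; $A_2>\gamma$ because $L_2$ precedes $L_f$ while $L_3$ follows $R_d$; $A_3<\beta_1$ because $L_3,L_4$ both lie strictly inside the gap $\beta_1$; and $A_4>\beta_2$ because $L_4$ precedes $R_e$ while $L_5$ follows $R_f$. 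Now invoke Lemma~\ref{lemma_wide_fan} on the numeric segments $1,\dots,5$ --- a wide set intersecting concavely, by hypothesis --- to obtain $A_1>A_2+A_3+A_4$ (hence $A_1>A_2$) and $A_3>A_4$. Chaining, $\alpha_2>A_1>A_2>\gamma$ and $\beta_1>A_3>A_4>\beta_2$, so $\alpha_2\beta_1>\gamma\beta_1>\gamma\beta_2$, contradicting $\gamma\beta_2>\alpha_2\beta_1$. This contradiction proves the lemma.

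The argument is short, so what needs care is not a technical obstacle but choosing the right amount of information to extract: ``wide $+$ concave'' for the numeric fan furnishes far more than we use (the full super-decreasing cascade of Lemma~\ref{lemma_wide_fan}, and a dual super-increasing behaviour of the right endpoints), and the trick is to isolate exactly the two facts $\alpha_2>\gamma$ and $\beta_1>\beta_2$ that are incompatible with the first claim of Lemma~\ref{lem_ratios}. One should also verify explicitly that the hypotheses of Lemma~\ref{lem_ratios} really are met by $d,e,f$ here --- that their six endpoints appear in increasing $x$-coordinate and that ``$d,e,f$ intersect concavely'' is literally the concavity the lemma assumes --- but once this is checked, everything else is routine bookkeeping on the endpoint order.
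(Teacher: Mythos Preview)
Your proof is correct and follows essentially the same route as the paper: both apply Lemma~\ref{lemma_wide_fan} to the wide concave fan $1,\ldots,5$ (at $k=1$ and $k=3$), read off the same interleaving inequalities from the endpoint order, and contradict Lemma~\ref{lem_ratios} applied to $d,e,f$. The only cosmetic difference is that you invoke the first claim of Lemma~\ref{lem_ratios} directly (getting $\gamma\beta_2>\alpha_2\beta_1$, then showing $\alpha_2>\gamma$ and $\beta_1>\beta_2$), whereas the paper invokes the third claim (showing both $\beta_1>\beta_2$ and the slightly stronger $\alpha_2>\gamma+\beta_1+\beta_2$); since the third claim is itself derived from the first in the proof of Lemma~\ref{lem_ratios}, the two arguments are effectively the same.
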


\begin{proof}
Suppose for a contradiction that we have a realization of such a configuration. Applying Lemma~\ref{lemma_wide_fan} on the segments $1,2,3,4,5$ (with $k=1$ and $k=3$), we have
\begin{equation*}
L_{2x} - L_{1x} > L_{5x} - L_{2x}, \qquad L_{4x} - L_{3x} > L_{5x} - L_{4x}.
\end{equation*}
Hence, we have both
\begin{equation*}
R_{ex}-R_{dx} > L_{4x} - L_{3x} > L_{5x} - L_{4x} > R_{fx} - R_{ex}
\end{equation*}
and
\begin{equation*}
L_{fx} - L_{ex} > L_{2x} - L_{1x} > L_{5x} - L_{2x} > R_{fx} - L_{fx},
\end{equation*}
contradicting the third claim of Lemma~\ref{lem_ratios} on the segments $d,e,f$.
\end{proof}

Now we can put everything together. Define the endpoint sequence
\begin{equation*}
X = Y \circ T_4.
\end{equation*}
$X$ contains $15$ whiskers, $16$ groups of segments of type $d,e,f$, and $16$ groups of numeric segments, with five segments in each group. Hence, $X$ contains a total of $173$ segments.

\begin{theorem}\label{thm_forbidden_2}
It is impossible to realize $X$ such that the segments $a,b,c$ in each whisker intersect concavely,  the segments $d,e,f$ in each copy of $Y$ intersect concavely, and the five numeric segments in each group of numeric segments intersect concavely.
\end{theorem}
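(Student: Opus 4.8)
The plan is to derive Theorem~\ref{thm_forbidden_2} by combining Lemma~\ref{lemma_one_wide} with Lemma~\ref{lem_smash_wide}, after working out the combinatorial structure of the iterated shuffle $X = Y\circ T_4$. So I would start by assuming, for contradiction, that $X$ admits a realization in which every whisker is concave, every triple $d,e,f$ belonging to a copy of $Y$ is concave, and every group of five numeric segments is concave.

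The first step is to record two combinatorial facts about the left-to-right order of the endpoints of $X$. First (fact A): the sub-configuration of $X$ formed by the $15$ whiskers and the $80$ numeric segments is combinatorially the configuration $T_4$. Indeed, passing from $T_4$ to $Y\circ T_4$ only inserts the fresh segments $d_i,e_i,f_i$ of the sixteen copies $Y_1,\dots,Y_{16}$ of $Y$, and redistributes the left endpoints of the numeric segments inside the (now expanded) special blocks of $T_4$, without changing their mutual order, nor their order relative to the whisker endpoints and to the numeric right endpoints. Second (fact B): for each $i$, the sub-configuration of $X$ formed by $d_i,e_i,f_i$ together with the five numeric segments of the $i$th group is combinatorially the configuration $Y\circ F_5$; explicitly, the endpoints of those eight segments appear in $X$ in the order $L_d\,L_e\,L_1\,L_2\,L_f\,R_d\,L_3\,L_4\,R_e\,R_f\,L_5\,R_1\,R_2\,R_3\,R_4\,R_5$. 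Fact A is immediate from the definition of endpoint shuffling; fact B additionally uses the description, given just before Lemma~\ref{lemma_one_wide}, of where the numeric right endpoints $R_1,\dots,R_5$ of a group lie inside $T_4$ relative to the four nested whiskers containing that group --- in particular, that $R_1,\dots,R_5$ occur in that order, all of them after the last left endpoint $L_5$ of the group, with only endpoints of segments outside the group in between.

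Granting these two facts, the contradiction follows quickly. By fact A, Lemma~\ref{lemma_one_wide} applies directly to the whiskers-and-numerics part of $X$: both the hypothesis and the proof of that lemma refer only to the combinatorial type $T_4$, to the concavity of the whiskers, and to the parabola geometry of Lemma~\ref{lem_ratios}. Hence some group $i$ of five numeric segments forms a wide set. Although ``wide'' is a metric rather than a combinatorial condition, it is determined solely by the $x$-coordinates of the endpoints of those five segments, and so is unaffected by the additional $Y$-segments present in $X$. By our standing assumption these five segments are also concave, and so are $d_i,e_i,f_i$; by fact B they realize the configuration $Y\circ F_5$, which contradicts Lemma~\ref{lem_smash_wide}.

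The part I expect to require the most care is the bookkeeping behind facts A and B: one has to chase the doubly nested shuffle $Y\circ\bigl((((Z_1\circ Z_2)\circ Z_4)\circ Z_8)\circ F_{16}\bigr)$, verifying that the expansion of each special block of $T_4$ into a copy of $Y$ preserves exactly the orderings needed for fact A, and tracking the five numeric right endpoints of a fixed group outward through all four layers of whiskers to confirm fact B. Once the endpoint order of $X$ has been pinned down precisely, the invocations of Lemma~\ref{lemma_one_wide} and then Lemma~\ref{lem_smash_wide} are mechanical.
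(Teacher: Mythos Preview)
Your proposal is correct and follows exactly the same approach as the paper's proof, which invokes Lemma~\ref{lemma_one_wide} to obtain a wide group of numeric segments and then Lemma~\ref{lem_smash_wide} for the contradiction. Your Facts A and B merely spell out in detail what the paper compresses into the single sentence ``each group of numeric segments of $T_4$ is shuffled into a copy of $Y$ as in the premise of Lemma~\ref{lem_smash_wide}''.
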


\begin{proof}
In $X$, each group of numeric segments of $T_4$ is shuffled into a copy of $Y$ as in the premise of Lemma~\ref{lem_smash_wide}. But by Lemma~\ref{lemma_one_wide}, one of these groups must form a wide set. This is impossible by Lemma~\ref{lem_smash_wide}.
\end{proof}

\section{The Hart--Sharir sequences are unrealizable}\label{sec_ababa}

In this section we show that the Hart--Sharir sequences eventually force a copy of the configuration $X$ whose segments intersect in the way specified in Theorem~\ref{thm_forbidden_2}. (Recall the definition of \emph{forcing} a segment configuration from Section~\ref{subsec_our_results}.) Hence, the Hart--Sharir sequences cannot be realized as lower inner-zone sequences of a parabola.

\subsection{The Hart--Sharir sequences}

We first recall the Hart--Sharir construction~\cite{HS} of superlinear-length order-$3$ DS sequences.

The Hart--Sharir sequences form a two-dimensional array $S_k(m)$, for $k,m\ge 1$; they satisfy the following properties:
\begin{itemize}
\item Some contiguous subsequences (\emph{blocks}) in $S_k(m)$ are designated as \emph{special blocks}.
\item All special blocks in $S_k(m)$ have length exactly $m$.
\item Each symbol in $S_k(m)$ makes it first occurrence in a special block, and each special block contains only first occurrences of symbols.
\item $S_k(m)$ contains no adjacent repetitions and no alternation $ababa$.
\item Each symbol in $S_k(m)$ occurs at least twice (unless $k=m=1$).
\end{itemize}

The construction uses an operation called \emph{shuffling}, which, as we will see, is very closely related to the \emph{endpoint shuffling} defined in Section~\ref{sec_forbidden_2}.

\begin{definition}
Let $A$ be a sequence that has $k$ special blocks of length $m$, and let $B$ be a sequence that has $\ell$ special blocks of length $k$. Then the \emph{shuffle} of $A$ and $B$, denoted $A\bullet B$, is a new sequence having $k\ell$ special blocks of length $m+1$, formed as follows: We make $\ell$ copies of $A$ (one for each special block of $B$), each one having ``fresh" symbols that do not occur in $B$ or in any other copy of $A$.

For each special block $\Gamma_i = (a_1 a_2 \cdots a_k)$ in $B$, $1\le i \le \ell$, let $A_i$ be the $i$th copy of $A$. For each special block $\Delta_j$ in $A_i$, $1\le j\le k$, we insert the symbol $a_j$ at the end of $\Delta_j$ (so its length grows from $m$ to $m+1$) and we duplicate the $m$th symbol of $\Delta_j$ immediately after $\Delta_j$. Then we place another copy of $a_k$ immediately after $A_i$. Call the resulting sequence $A'_i$.

Then $A\bullet B$ is obtained from $B$ by replacing each special block $\Gamma_i$ in it by $A'_i$.
\end{definition}

In the construction of $A\bullet B$, the symbols of the copies of $A$ are usually called \emph{local}, while the symbols of $B$ are called \emph{global}.

For example, let $A = (a)(b)(c)babc$ and $B=(123)21(456)5414525636$. Then,
\begin{equation*}
A\bullet B = (a1)a(b2)b(c3)cbabc3\ 21\ (a'4)a'(b'5)b'(c'6)c'b'a'b'c'6\ 5414525636.
\end{equation*}

Let $S$ be a sequence of symbols in which the first occurrences of the symbols appear in special blocks, and in which each symbol occurs at least twice. Let us extend the definition of the \emph{endpoint sequence} $E(S)$ from Section~\ref{sec_preliminaries}, by specifying that $E(S)$ has special blocks of left endpoints, which are inherited from the special blocks of $S$ in the natural way.

\begin{observation}\label{ref_obs_shuffling}
The relation between shuffling and endpoint shuffling is as follows:
\begin{equation*}
E(A\bullet B) = E(A) \circ E(B).
\end{equation*}
\end{observation}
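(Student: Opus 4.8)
The plan is to follow each symbol of $A\bullet B$ through the endpoint operation $E$ and to check that the result is exactly the sequence produced by the endpoint shuffle $E(A)\circ E(B)$. I work under the standing assumption that every symbol of $A$ and of $B$ occurs at least twice (so that $E$ is defined); a one-line argument then shows the same for $A\bullet B$, so $E(A\bullet B)$ makes sense too (the degenerate case $k=m=1$ is checked by hand). First I would note that both $A\bullet B$ and $E(A)\circ E(B)$ are obtained from $B$, respectively $E(B)$, by replacing, for each $i$, the special block $\Gamma_i=(a_1\cdots a_k)$ by a suitably modified $i$-th copy of $A$, respectively of $E(A)$, and leaving everything else verbatim. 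So it suffices to show that $E$ converts the first block-replacement into the second. The only subtlety is that $E$ is a \emph{global} operation — whether an occurrence becomes an $L$, becomes an $R$, or is deleted depends on the whole sequence — so first and last occurrences must be located with care.

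The heart of the argument is a short bookkeeping lemma with three parts. (a)~For a global symbol $a_j$ of $\Gamma_i$: its first occurrence in $A\bullet B$ is the copy of $a_j$ that shuffling appends to $\Delta_j$ inside $A'_i$ (nothing before $A'_i$ contains $a_j$, and inside $A'_i$ this is the earliest occurrence), while its last occurrence lies outside every special block of $B$, at the same place as in $B$ (it occurs at least twice and, being a first occurrence only in $\Gamma_i$, appears in no other special block). (b)~For a local symbol $\ell$ of the $i$-th copy of $A$: since $\ell$ occurs only within $A'_i$, neither its first nor its last occurrence is disturbed by the shuffling modifications, because the appended globals, the duplicated $m$-th symbols of the blocks $\Delta_j$, and the trailing copy of $a_k$ never create a first or last occurrence of a local symbol; hence $E$ deletes every such duplicated block symbol and the trailing $a_k$. (c)~Within the part of $B$ lying outside all special blocks, the occurrences of each global symbol in $A\bullet B$ are precisely its non-first occurrences, in the same positions as in $B$.

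Granting (a)--(c), I would assemble the two sides. By (c), $E$ acts on the $B$-part of $A\bullet B$ exactly as it does on the corresponding part of $B$: it produces the portions of $E(B)$ outside the special blocks together with the symbols $R_{a_j}$ in place. By (a) and (b), $E$ sends the stretch of $A\bullet B$ occupied by $A'_i$ to a copy of $E(A)$ with $L_{a_j}$ inserted at the end of its $j$-th special block — the appended $a_j$ becoming $L_{a_j}$, the duplicates and trailing $a_k$ vanishing, and the local symbols behaving exactly as in $E(A)$. But that is precisely the sequence that the endpoint shuffle inserts in place of $\Gamma_i$ in $E(B)$. Matching the replacements block by block then yields $E(A\bullet B)=E(A)\circ E(B)$.

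I expect the main obstacle to be one of careful case analysis rather than of genuine difficulty: shuffling introduces two kinds of auxiliary symbols — the duplicated block symbols and the extra trailing $a_k$ — that have no analogue in endpoint shuffling, so the real work is to confirm that these are exactly the occurrences $E$ discards, and that inserting them (together with the appended globals) never perturbs which occurrences of the remaining symbols are first or last.
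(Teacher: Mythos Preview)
Your proposal is correct and follows the same approach as the paper: direct verification from the definitions of $\bullet$, $\circ$, and $E$. The paper, however, dispenses with the bookkeeping entirely---its entire proof is the sentence ``This is immediate from the definitions of $\bullet$ and $\circ$''---so what you have written is a fully fleshed-out version of that one line.
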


\begin{proof}
This is immediate from the definitions of $\bullet$ and $\circ$.
\end{proof}

The Hart--Sharir sequences $S_k(m)$ are defined by double induction. The base cases are $k=1$ and $m=1$. For $k=1$ we let
\begin{equation*}
S_1(m) = \Bigl(12\cdots (m-1)m\Bigr)\, (m-1)\cdots 212\cdots m,
\end{equation*}
be an $N$-shaped sequence having a single special block of length $m$. (Actually, of all sequences $S_1(m)$, the construction only uses $S_1(2) = (12)12$.)

For $m=1$, $k\ge 2$, we let $S_k(1)$ be equal to $S_{k-1}(2)$, but with each special block of size $2$ split into two special blocks of size $1$.

Finally, for $m,k\ge 2$, we let
\begin{equation*}
S_k(m) = S_k(m-1) \bullet S_{k-1}(N),
\end{equation*}
where $N$ is the number of special blocks in $S_k(m-1)$.

Thus, we have, up to a renaming of the symbols,
\begin{align*}
S_2(1) &= (1)(2)12,\\
S_2(2) &= (12)1(34)313424,\\
S_2(3) &= (123)21(456)5414525636,\\
S_2(4) &= (1234)321(5678)76515626737848,\\
\vdots\\
S_3(1) &= (1)(2)1(3)(4)313424,\\
S_3(2) &= (12)1(34)31(56)5(78)75157378642\\
&\qquad(9A)9(BC)B9(DE)D(FG)FD9DFBFGECA2AC4CE6EG8G,\\
\vdots\\
S_4(1) &= (1)(2)1(3)(4)31(5)(6)5(7)(8)75157378642\\
&\qquad(9)(A)9(B)(C)B9(D)(E)D(F)(G)FD9DFBFGECA2AC4CE6EG8G,\\
\vdots
\end{align*}
Note that, in the construction of $S_k(m)$, the special blocks of $S_{k-1}(N)$ ``dissolve", and the only special blocks present in $S_k(m)$ are those that come from the copies of $S_k(m-1)$ (enlarged by one).

\subsection{Properties}

Here we establish some important properties of $S_k(m)$.

\begin{lemma}\label{lemma_SB_N}
Each special block $\bigl(1\cdots m\bigr)$ in $S_k(m)$ is immediately followed by $(m-1)\cdots 1$, and followed later on by $\cdots 2\cdots 3\cdots\ \cdots\ m$, thus forming an $N$-shaped subsequence.
\end{lemma}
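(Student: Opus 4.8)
The plan is to prove the statement by induction on $k$ and $m$ together, mirroring the double induction that defines $S_k(m)$, and tracking how the $N$-shaped pattern inside each special block is created and then preserved by the shuffling operation. The claim is structural: a special block $\bigl(1\cdots m\bigr)$ is immediately followed by $(m-1)\cdots 1$, and later by an increasing continuation $2,3,\ldots,m$, so together with the block itself this yields the subsequence $12\cdots m\,(m-1)\cdots 1\cdots 2\cdots 3\cdots m$, which is $N$-shaped.

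First I would check the base cases. For $S_1(m) = \bigl(12\cdots m\bigr)(m-1)\cdots 212\cdots m$ the property holds by inspection: the unique special block is immediately followed by $(m-1)\cdots 1$ and then by $2\cdots m$. For $S_k(1)$ with $k\ge 2$, which is $S_{k-1}(2)$ with each length-$2$ block split into two length-$1$ blocks, the special blocks have length $m=1$, so the condition ``followed by $(m-1)\cdots 1$'' is vacuous and the condition ``followed later by $\cdots m$'' says only that the single symbol in the block recurs later, which is guaranteed since every symbol occurs at least twice (for $k,m$ not both $1$). So the base cases are easy.

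Next comes the inductive step for $S_k(m) = S_k(m-1)\bullet S_{k-1}(N)$ with $k,m\ge 2$. The key observation is that every special block of $S_k(m)$ arises from a special block $\Delta_j = (1\cdots(m-1))$ of some copy $A_i$ of $A = S_k(m-1)$, enlarged by appending one global symbol $a_j$ to its end and by duplicating the old last symbol $m-1$ of $\Delta_j$ immediately after the block. Thus the new block is $(1\cdots(m-1)\,a_j)$, of length $m$, and it is immediately followed by $(m-1)$ — that is the duplicated symbol. By the induction hypothesis applied to $A = S_k(m-1)$, the old block $\Delta_j$ was immediately followed in $A$ by $(m-2)\cdots 1$ and later by $\cdots 2\cdots 3\cdots (m-1)$. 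After the insertion, the new block is followed first by the duplicated $(m-1)$, and then — since the shuffling inserts only the global symbol $a_j$ at the end of the block and the duplicate right after, leaving the rest of $A_i$ intact — by the old tail $(m-2)\cdots 1$, and then later by $2,3,\ldots,(m-1)$. Appending the duplicated $(m-1)$ in front of $(m-2)\cdots 1$ gives exactly $(m-1)(m-2)\cdots 1$, the required immediate successor; and the later increasing continuation $2,3,\ldots,(m-1)$ from the old block, followed eventually by the recurrence of the global symbol $a_j$ (which reappears because it occurs at least twice, and in fact another copy of $a_k$ or of $a_j$ is placed later by the shuffle), supplies the final symbol $m=a_j$ of the increasing run. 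Hence the new block $(1\cdots m)$ is immediately followed by $(m-1)\cdots 1$ and later by $\cdots 2\cdots 3\cdots m$, as required.

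The main obstacle, and the step that needs the most care, is verifying that the recurrence of the appended global symbol $a_j$ genuinely occurs \emph{after} the increasing run $2,3,\ldots,(m-1)$ coming from the old block, so that the concatenation really forms a single increasing continuation $2,3,\ldots,m$ rather than being interleaved wrongly. This requires looking at where the shuffle places the extra copies of the global symbols: one copy of $a_j$ at the end of $\Delta_j$, and another copy of $a_k$ after all of $A_i$, together with the surviving occurrences of $a_1,\ldots,a_k$ in the global sequence $B = S_{k-1}(N)$ which lie entirely after $A_i$. Since the whole of the old tail of $A_i$ — including the increasing run from $\Delta_j$ — comes before these later global occurrences, the ordering works out. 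One should also confirm that no adjacent-repetition or $ababa$ issue is introduced, but that is already guaranteed by the stated properties of $S_k(m)$ and need not be reproved here. Apart from this bookkeeping about the position of the global symbol, the argument is a routine unwinding of the definitions of $\bullet$ and of the inductive construction.
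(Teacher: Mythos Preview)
Your proposal is correct and follows essentially the same approach as the paper: induction along the recursive definition of $S_k(m)$, with the base cases $k=1$ and $m=1$, and the inductive step for $S_k(m)=S_k(m-1)\bullet S_{k-1}(N)$ using the duplication of the $(m-1)$st symbol together with the fact that each global symbol of $S_{k-1}(N)$ occurs at least twice. The paper's proof is extremely terse (it just says ``by the definition of $\bullet$, and since each symbol in $S_{k-1}(N)$ occurs at least twice''), whereas you have spelled out the bookkeeping --- in particular the point that the later occurrence of the global symbol $a_j$ lies entirely after the copy $A_i$ and hence after the inherited increasing tail $2,\ldots,m-1$ --- which is exactly the content behind the paper's one-line justification.
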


\begin{proof}
By induction. The claim is true for $k=1$ and all $m$. If the claim is true for $S_{k-1}(2)$, then it is also true for $S_k(1)$. Now, let $k,m\ge 2$, and suppose the claim is true for $S_k(m-1)$. Then, by the definition of $\bullet$, and since each symbol in $S_{k-1}(N)$ occurs at least twice, the claim is also true for $S_k(m) = S_k(m-1) \bullet S_{k-1}(N)$.
\end{proof}

\begin{lemma}
$S_k(m)$ does not contain $ababa$ (Hart, Sharir~\cite{HS}) nor $abcaccbc$ (Pettie~\cite{pettie_origins}).\footnote{Note that $S_3(2)$ already contains $(abcaccbc)^R$.}
\end{lemma}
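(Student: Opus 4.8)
The plan is to prove both non-containment statements by the same double induction on $(k,m)$ that defines $S_k(m)$, propagating through the two base cases ($k=1$, arbitrary $m$; and $m=1$, $k\ge 2$) and then through the shuffle step $S_k(m)=S_k(m-1)\bullet S_{k-1}(N)$. For the base case $k=1$, the sequence $S_1(m)$ is simply the $N$-shaped word $12\cdots m\,(m-1)\cdots 2\,1\,2\cdots m$, and one checks directly that such a word on distinct symbols $1,\ldots,m$ contains neither $ababa$ nor $abcaccbc$: every symbol $a<m$ occurs exactly three times, $m$ occurs twice, and the occurrence pattern of any two or three symbols is too short and too monotone to realize either forbidden word (in particular no letter occurs four times, killing $ababa$, and the only letter occurring with the right multiplicity relations for $abcaccbc$—which needs $c$ four times—does not exist). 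The reduction from $S_k(1)$ to $S_{k-1}(2)$ is purely cosmetic (splitting size-$2$ special blocks into two size-$1$ blocks changes no symbol occurrences), so $S_k(1)$ inherits both properties from $S_{k-1}(2)$.

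The heart of the argument is the inductive step for the shuffle. Assume $S_k(m-1)$ (the ``local'' template, many fresh copies of which are inserted) and $S_{k-1}(N)$ (the ``global'' skeleton) both avoid $ababa$ and $abcaccbc$; we must show $S_k(m)$ does too. The key structural fact is that in $A\bullet B$, between any two distinct copies $A_i,A_j$ of the local sequence there are no shared local symbols, and the global symbols only appear (i) inside their own copies $A'_i$ as the single appended symbol at the end of a special block together with one duplicated neighbor, plus (ii) in the surviving ``tail'' copy of $B$ outside the inserted blocks. So a forbidden subword $w$ of $A\bullet B$, when restricted to its (at most four) distinct symbols, either (a) uses only local symbols of a single copy $A_i$—impossible by the inductive hypothesis on $A=S_k(m-1)$, since the minor surgery (appending a symbol to a block, duplicating one symbol right after a block) does not create a new $ababa$ or $abcaccbc$ among old local symbols; or (b) uses only global symbols—these appear in $A\bullet B$ in the same relative order as in $B$ (the inserted fragments $A'_i$ contribute global symbols only at the positions dictated by $B$, each at most twice inside a single $A'_i$), so a forbidden subword would project to one in $B=S_{k-1}(N)$, contradiction; or (c) mixes local and global symbols. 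Case (c) is the real work: one has to rule out interleavings where, say, two global symbols and one or two local symbols alternate badly. Here one uses that all occurrences of local symbols of $A_i$ lie strictly inside the interval occupied by $A'_i$, and within that interval the global symbols behave extremely rigidly—each global symbol $a_j$ used to extend the $j$th block of $A_i$ occurs there only as "…(last old symbol of $\Delta_j$) $a_j$ (duplicated last old symbol) …" and $a_k$ additionally once right after $A'_i$. Tracking these finitely many occurrence patterns and combining with the inductive structure of where global symbols sit in $B$, one checks no $ababa$ and no $abcaccbc$ can be assembled. This is exactly the kind of bookkeeping carried out by Hart and Sharir for $ababa$ and by Pettie for $abcaccbc$; I would organize it as: first localize each of the $\le 4$ symbols of $w$ as local-in-$A_i$ or global; then, using that distinct $A_i$'s share no local symbols and the global symbols preserve $B$'s order, argue all symbols of $w$ must be "served by" a single copy $A_i$ together with nearby global symbols; finally enumerate the bounded number of occurrence shapes.

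I expect the main obstacle to be precisely case (c) of the shuffle step—mixed local/global forbidden subwords—because the duplication of the block-final symbol and the extra trailing copy of $a_k$ are exactly the features that make shuffling increase $\lambda_3$, so one must verify carefully that they stop just short of creating the length-$5$ alternation or the pattern $abcaccbc$. I would handle it by reducing to the worst case where $w$'s symbols are, in $A\bullet B$, confined to one $A'_i$ plus possibly one global symbol appearing both inside $A'_i$ and in its trailing copy after $A'_i$, and then invoking the inductive hypothesis on the (surgically modified) $A_i$ together with a short direct check of the $O(1)$ boundary patterns. The statements about $S_3(2)$ already containing $(abcaccbc)^R$ and similar remarks are not needed for the proof; they merely indicate that the result is tight in the sense that no reversal-symmetric strengthening holds, so I would not address them in the proof itself.
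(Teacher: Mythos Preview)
Your overall strategy---double induction through the shuffle, splitting into local/global cases---matches the paper's. Two points where your sketch diverges from what is actually needed.

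\emph{Minor (base case for $ababa$).} Your reason ``no letter occurs four times'' does not kill $ababa$: the pattern needs $a$ only three times, and most symbols of $S_1(m)$ do occur three times. The correct argument is that for any $i<j$ the merged occurrence pattern in the $N$-shaped word is $ijjiij$ (or a truncation thereof when $i=1$ or $j=m$), which contains no length-$5$ alternation.

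\emph{Substantive (the $abcaccbc$ step).} Your cases (b) and (c) both miss an ingredient that the paper isolates explicitly. In case~(b) (all global, and symmetrically all local), projecting back to $B$ after undoing duplications yields only $abcacbc$, not $abcaccbc$, because the repeated $cc$ could a~priori have been created by the duplication. The paper closes this with the observation that the shuffle duplicates only \emph{first} occurrences of symbols; since the $cc$ in $abcaccbc$ sits at the second and third occurrences of $c$, no such duplication can manufacture it, so $B$ (resp.\ $A$) must already contain $abcaccbc$. In case~(c), specifically when $a$ is local and $b,c$ are global, the paper does not get by with a bare enumeration. It first uses Lemma~\ref{lemma_SB_N} (each special block $(1\cdots m)$ is immediately followed by $(m{-}1)\cdots 1$) together with the already-established $ababa$-freeness to derive two auxiliary facts: $S_k(m)$ cannot contain $(bc)\,ccb$ with $b,c$ in a common special block, and if it contains $(bc)\,cb$ then $c$ is not the last symbol of that block. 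These are precisely what dispatches the ``$a$ local, $b,c$ global'' case. Your plan to ``track finitely many occurrence patterns'' is the right shape, but without invoking Lemma~\ref{lemma_SB_N}---and hence without feeding $ababa$-freeness back into the $abcaccbc$ argument---that case does not close.
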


\begin{proof}[Proof sketch]
For the first claim, suppose for a contradiction that $k$ and $m$ are minimal such that $S_k(m)$ contains $ababa$. Recall that $S_k(m) = S_k(m-1)\bullet S_{k-1}(N)$. Each of the symbols $a$, $b$ must have come either from a copy of $S_k(m-1)$ (in which case it is a local symbol) or from $S_{k-1}(N)$ (in which case it is a global symbol). A case analysis shows that none of the possibilities work. For example, it cannot be that $a$ is local and $b$ is global, because then there would be at most one $b$ between two $a$'s. It cannot be either that $a$ is global and $b$ is local, because then only the \emph{first} $a$ could appear between two $b$'s.

For the second claim, first note that Lemma~\ref{lemma_SB_N} implies that $S_k(m)$ cannot contain $(bc)ccb$ (with the first $b$ and $c$ in the same special block), since, otherwise, $S_k(m)$ would contain $bcbcb$. For the same reason, if $S_k(m)$ contains $(bc)cb$, then $c$ is \emph{not} the last symbol in the special block.

Now, suppose for a contradiction that $k,m$ are minimal such that $S_k(m)$ contains $abcaccbc$. There are eight possibilities as to which of the symbols $a,b,c$ are local and which are global. We rule them all out by a case analysis. First of all, if $a,b,c$ are all local (resp.~all global), then the only possibility would be for $S_k(m-1)$ (resp.~$S_{k-1}(N)$) to contain $abcacbc$, so that the second $c$ gets duplicated into $cc$ in $S_k(m)$. But this cannot happen, since only \emph{first} occurrences of symbols get duplicated. The case of $a$ being local and $b,c$ being global is ruled out by the considerations in the previous paragraph. The remaining cases are also readily ruled out.
\end{proof}

\begin{definition}\label{def_structurally}
Let $A$ and $B$ be two sequences in which the first occurrences of symbols appear in special blocks. Then we say that $B$ \emph{structurally contains} $A$ if $B$ contains a subsequence $A'$ that not only is isomorphic to $A$, but for every two symbols in $A'$, their first occurrences lie in the same special block if and only if the corresponding symbols in $A$ lie in the same special block.
\end{definition}

We now want to prove that $S_{k'}(m')$ structurally contains $S_k(m)$ for all $k'\ge k$, $m'\ge m$. Hence, any problematic configuration that arises in some $S_k(m)$ will also be present in all subsequent $S_{k'}(m')$.

The proof is somewhat delicate. It is clear, for example, that $S_{k+1}(m)$ contains a sequence isomorphic to $S_k(m)$: $S_{k+1}(m)$ contains a global copy of $S_k(N)$ for some $N$ larger than $m$, and in turn $S_k(N)$ contains many local copies of $S_k(N-1)$, et cetera. This simple observation, however, does not imply that $S_{k+1}(m)$ \emph{structurally} contains $S_k(m)$, because the copy of $S_k(m)$ that we found in $S_{k+1}(m)$ has its special blocks completely ``dissolved". To prove that $S_{k+1}(m)$ structurally contains $S_k(m)$ we have to work a bit more carefully.

\begin{definition}\label{def_rank}
The \emph{rank} of a symbol $a$ in $S_k(m)$ is the position (between $1$ and $m$) that the first occurrence of $a$ occupies within its special block.
\end{definition}

Thus, the local symbols of $S_k(m)$ are those with ranks $1,\ldots,m-1$, and the global symbols are those with rank $m$.

\begin{lemma}\label{lem_structurally}
For every $k'\ge k$, $m'\ge m$, and for every choice of $m$ ranks $1\le r_1 < r_2 < \cdots < r_m \le m'$, the sequence $S_{k'}(m')$ structurally contains $S_k(m)$ using symbols of these ranks.
\end{lemma}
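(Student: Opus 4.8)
The statement is best proved by double induction on $k$ and $m'$ (with $k'$, $m'$, $m$ ranging freely subject to $k'\ge k$, $m'\ge m$), mirroring the double induction that defines $S_k(m)$. I would organize the argument around two reductions. First, the ``easy'' direction: increasing $m'$ while holding $k'$ fixed. If $S_{k'}(m')$ structurally contains $S_k(m)$ with symbols of prescribed ranks $r_1<\cdots<r_m\le m'$, then one wants to go to $S_{k'}(m'+1)=S_{k'}(m')\bullet S_{k'-1}(N)$. The shuffle operation inserts each local copy of $S_{k'}(m')$ into a special block of the global sequence, enlarging every special block by one symbol (and duplicating the old last symbol right after the block). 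So a symbol of rank $r$ in $S_{k'}(m')$ becomes a symbol of rank $r$ in $S_{k'}(m'+1)$ whenever $r\le m'$ — the enlargement only appends at the end. Thus any ranks $r_1<\cdots<r_m\le m'$ still work in $S_{k'}(m'+1)$, and I also gain the rank $m'+1$; more generally I can re-index to realize any $m$ of the $m'+1$ ranks. This step is essentially bookkeeping once the effect of $\bullet$ on ranks is made precise.

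Second, the genuinely substantive direction: increasing $k$. Here I want to show $S_{k+1}(m')$ structurally contains $S_k(m)$ for suitable ranks. The key observation is the recursion $S_{k+1}(m')=S_{k+1}(m'-1)\bullet S_k(N)$ where $N$ is the number of special blocks of $S_{k+1}(m'-1)$. The global factor is a copy of $S_k(N)$, and by the just-established $m'$-monotonicity I may assume $N\ge m$, so $S_k(N)$ structurally contains $S_k(m)$ using ranks $r_1<\cdots<r_m\le N$. But the special blocks of $S_k(N)$ get ``dissolved'' by the shuffle: its symbols of rank $m$ become genuine global symbols of $S_{k+1}(m')$, while its symbols of rank $<N$ get inserted at the \emph{ends} of the special blocks of the local copies of $S_{k+1}(m'-1)$. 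So I must check that after dissolution the copy of $S_k(m)$ sitting inside this global $S_k(N)$ still has its special-block structure intact — i.e., that two of its symbols land in the same special block of $S_{k+1}(m')$ iff they were in the same special block of $S_k(m)$. Symbols of $S_k(N)$ that originally shared a special block $\Gamma_i$ get inserted (one each) at the ends of the special blocks of the \emph{same} local copy $A_i=S_{k+1}(m'-1)$, hence now occupy distinct special blocks of $S_{k+1}(m')$ — so same-block pairs become different-block pairs, which is the wrong direction. This tells me I cannot simply reuse the global $S_k(N)$ as-is; instead I should build the structural copy from the \emph{local} side.

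The correct route, therefore, is: realize $S_k(m)$ inside a single local copy $A_i=S_{k+1}(m'-1)$ together with controlled use of the global symbols. Concretely, apply the inductive hypothesis at parameters $(k+1,m'-1)$ to find $S_k(m)$ structurally inside $S_{k+1}(m'-1)$ using ranks $r_1-1<\cdots<r_m-1$ (so that, after the shuffle appends one symbol to each special block, these become ranks $r_1<\cdots<r_m$ in $S_{k+1}(m')$) — provided $r_1\ge 2$. The remaining case $r_1=1$ is handled by noting that the global symbols inserted at the ends of the special blocks, one per block of $A_i$, themselves trace out a copy of a special block of the global sequence $S_k(N)$, and by Lemma~\ref{lemma_SB_N} such a block is followed by the appropriate $N$-shaped pattern; combining a rank-$1$ layer drawn from these with the rank-$(\ge2)$ layers drawn locally reconstructs $S_k(m)$ with the top rank equal to $m'$, and then $m'$-monotonicity lets me move the top rank down to any desired $r_m\le m'$. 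The base cases $k=1$ (where $S_1(m)$ is a single $N$-shaped block, handled directly by Lemma~\ref{lemma_SB_N} and the observation that every special block of every $S_{k'}(m')$ is such a block) and $m=1$ (reducing to $S_{k-1}(2)$ via the definition $S_k(1)$) are straightforward.

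\textbf{Main obstacle.} The delicate point, as the paper itself warns, is precisely the ``dissolution'' of special blocks under $\bullet$: one has to track not just which symbols of $S_k(m)$ reappear, but whether their first occurrences are re-clustered into the same special blocks of the larger sequence. The induction must be set up so that the structural copy is always assembled from a \emph{single} local copy (whose special blocks survive, merely enlarged) rather than from a dissolved global copy — and the one shortfall of that strategy, namely that it produces the top rank only, must be patched by the already-proved monotonicity in $m'$. Getting the interplay between ``locally assembled, ranks shifted by $1$'' and ``globally supplied rank-$1$ layer, then rank lowered'' exactly right is where the real care is needed; the rest is routine verification using the definition of $\bullet$ and Lemma~\ref{lemma_SB_N}.
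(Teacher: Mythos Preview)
Your proposal has a genuine gap in the hard case, and some surface-level confusion that obscures it. First, a bookkeeping slip: you correctly state that the shuffle appends the new global symbol at the \emph{end} of each special block, so local ranks are preserved --- and then immediately contradict this by asserting that ranks $r_i-1$ in $S_{k'}(m'-1)$ become ranks $r_i$ in $S_{k'}(m')$. They do not; they stay $r_i-1$. Once this is fixed, your ``$r_1\ge 2$'' case collapses to the case $r_m\le m'-1$ (find $S_k(m)$ entirely inside one local copy of $S_{k'}(m'-1)$, by induction on $m'$), which is the paper's easy case; the natural dichotomy is $r_m<m'$ versus $r_m=m'$, not $r_1\ge 2$ versus $r_1=1$.

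The substantive gap is in the remaining case $r_m=m'$. You propose to take a single local copy $A_i=S_{k'}(m'-1)$ and supplement it with the global symbols shuffled into that copy. But those global symbols constitute exactly one special block of the global sequence $Y=S_{k'-1}(N')$; by Lemma~\ref{lemma_SB_N} they give you an $N$-shaped pattern, i.e.\ a copy of $S_1(\cdot)$ and nothing more. To rebuild $B=S_k(m)=S_k(m-1)\bullet S_{k-1}(N)$ you need the global layer to supply a structural copy of $S_{k-1}(N)$; for $k\ge 3$ a single block of $Y$ cannot do this. (Invoking ``$m'$-monotonicity'' afterward to move the top rank down is circular: rank-flexible structural containment is exactly the statement being proved.) The paper handles $r_m=m'$ by a \emph{coordinated double embedding}: inductively find $A=S_k(m-1)$ inside $C=S_{k'}(m'-1)$ with ranks $r_1,\ldots,r_{m-1}$, record which special blocks $b_1<\cdots<b_N$ of $C$ this uses, and then --- this is the idea you are missing --- make a second inductive call (on the smaller first index) to find $X=S_{k-1}(N)$ inside $Y=S_{k'-1}(N')$ using those very $b_i$ as ranks. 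One then takes \emph{many} local copies of $C$ in $D=C\bullet Y$, namely those indexed by the special blocks of $Y$ that the second embedding uses, and the shuffle reproduces $A\bullet X=B$ inside $D$. Without this second inductive call and the use of multiple local copies, the argument does not go through.
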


\begin{proof}
Denote $B = S_k(m)$ and $D= S_{k'}(m')$, and let $N_2$ and $N_2'$ denote, respectively, the number of special blocks in $B$ and $D$. Note that, in order to specify how $B$ lies within $D$, all we have to do is specify which $N_2$ special blocks $1\le b_1 < b_2 < \cdots < b_{N_2} \le N_2'$ of $D$ we take the symbols from.

We will first take care of the base cases $k = 1$ and $m = 1$.

If $k=1$ then the claim follows by Lemma~\ref{lemma_SB_N}. If $m = m'= 1$ then there is nothing to prove, since in this case structural containment is the same as regular containment. Now suppose $m=1$ and $m'\ge 2$. Recall that in this case we have $B = S_{k-1}(2)$. We are given a rank $r_1$. The symbols of rank $r_1$ in $D$ are the global symbols of $S_{k'}(r_1)$. The global sequence used in forming $S_{k'}(r_1)$ is $S_{k'-1}(N)$ for some $N$. This latter sequence contains $B = S_{k-1}(2)$ in the regular sense, which is enough for us.

Now suppose $k,m\ge 2$.

For convenience let $A = S_k(m-1)$, $C = S_{k'}(m'-1)$. Let $N$ and $N'$ be, respectively, the number of special blocks in $A$ and $C$. Let $X = S_{k-1}(N)$, $Y = S_{k'-1}(N')$. Hence,
\begin{align*}
B &= A \bullet X,\\
D &= C \bullet Y.
\end{align*}

We assume by induction that $C$ contains $A$, and that $Y$ contains $X$, in the stronger sense of our lemma. If $m'>m$ we can also assume by induction that $C$ contains $B$ in this stronger sense. Our objective, as we said, is to show that $D$ contains $B$ in this stronger sense.

We consider two cases: If $r_m < m'$ (which implies $m<m'$), then we wish to find $B$ using only local symbols of $D$. But we know by induction that $C$ (which is structurally contained in $D$) contains $B$ in the desired way.

The second case is if $r_m = m'$. We know by induction that $C$ contains $A$ using ranks $r_1, \ldots, r_{m-1}$. Let $1\le b_1 < b_2 < \cdots < b_{N}\le N'$ be the special blocks of $C$ from which we take the symbols that form $A$ this way.

Next, find $X$ in $Y$ using ranks $b_1, b_2, \ldots, b_N$. We know this is possible by induction. Let $1\le c_1 < c_2 < \cdots$ be the special blocks of $Y$ from which we take the symbols that form $X$ this way.

Now, in order to find $B$ in $D$, take only the local copies numbered $c_1, c_2, \ldots$ of $C$, and within each local copy of $C$, take only the special blocks numbered $b_1, b_2, \ldots, b_{N'}$. Then the symbols of $Y$ that are shuffled into these special blocks are exactly those that form $X$.

Hence, we exactly mimic the construction of $B$ from $A$ and $X$ inside the construction of $D$ from $C$ and $Y$---with only one exception: In the construction of $B$, we duplicate the last symbols of the special blocks of $A$ and $X$. This might not happen to the corresponding symbols in $D$, if these symbols are not the last symbols of the special blocks of $C$ and $Y$. However, we do not need these duplications, since they are \emph{already} present immediately after the special blocks, by Lemma~\ref{lemma_SB_N}.
\end{proof}

\subsection{Clamped symbols in the Hart--Sharir sequences}

Recall the definition of left- and right-clamped symbols (Definition~\ref{def_clamped}) and its use in Lemma~\ref{lemma_clamped}. We now proceed to show that most symbols in $S_k(m)$ are left- and right-clamped.

\begin{lemma}\label{lem_leader}
All symbols in $S_k(m)$ that have rank at least $2$ are left-clamped, and all the symbols, other than the very last symbol in $S_k(m)$, are right-clamped.
\end{lemma}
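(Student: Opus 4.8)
The statement is about the recursive structure $S_k(m) = S_k(m-1) \bullet S_{k-1}(N)$, so the natural approach is a double induction on $k$ and $m$, mirroring the recursion that defines the sequences. First I would dispose of the base cases. For $k = 1$, the sequence $S_1(m) = (12\cdots m)(m-1)\cdots 212\cdots m$ is $N$-shaped; one checks directly that each symbol $a$ of rank $j \ge 2$ is immediately preceded in $u$ by the symbol $b$ of rank $j-1$, and the subsequence $b\,a\,b\,a$ is visible (the two symbols alternate in the descending run and again in the ascending run), so $a$ is left-clamped; and every symbol except the last $m$ is followed by a symbol with which it alternates at least twice, giving right-clampedness. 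For $m = 1$, $S_k(1)$ is a relabeling of $S_{k-1}(2)$ (with special blocks of size $2$ split into singletons), so the rank-$\ge 2$ hypothesis becomes vacuous for left-clampedness, and right-clampedness transfers from $S_{k-1}(2)$ directly since the splitting of special blocks does not change the linear order of the sequence.

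\textbf{The inductive step.} For $k, m \ge 2$ write $A = S_k(m-1)$, $X = S_{k-1}(N)$, $B = A \bullet X$. Every symbol of $B$ is either \emph{local} (coming from some copy $A_i$ of $A$) or \emph{global} (coming from $X$). I would handle the two kinds separately, using the inductive hypothesis on $A$ and on $X$ respectively, and then carefully checking that the clamping witnesses survive the shuffling operation. For right-clampedness: take a symbol $a$ of $B$ that is not the very last symbol of $B$, and let $c$ be the symbol immediately following the last occurrence of $a$ in $B$. If $a$ is a local symbol that is \emph{not} the last symbol of its copy $A_i$, then within $A_i$ the symbol $a$ is not last, so by induction $a$ is right-clamped in $A$, witnessed by an alternation $a\,e\,a\,e$ for $e$ the successor of $a$'s last occurrence in $A$; I must check that this alternation is preserved in $B$ — the shuffling only \emph{inserts} global symbols and duplicates the $m$-th symbol of each special block, so no occurrence of $a$ or $e$ is deleted and their relative order is untouched, hence the alternation persists (though $e$ may no longer be the \emph{immediate} successor of $a$'s last occurrence in $B$, which is fine because right-clampedness by Definition~\ref{def_clamped} requires the witness $b$ to be the immediate successor $c$; here I need to be slightly more careful and argue that $a$ still alternates with whatever the new immediate successor $c$ is, or else that $a$'s last occurrence in $B$ coincides with its last occurrence in $A_i$). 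The cleanest way is probably to use the alternative characterization: $a$ is right-clamped iff $S$ (or here $B$) contains $abab$ where $b$ is the immediate successor of the last $a$; and to observe that if $a$ alternates with \emph{some} later symbol, then the structure of $S_k(m)$ (no $ababa$, plus Lemma~\ref{lemma_SB_N}) forces the immediate successor to also alternate with it. If $a$ is a local symbol that \emph{is} the last symbol of $A_i$, then by the definition of $\bullet$ the copy $A_i$ is followed in $B$ by a global symbol $a_k$ (the last symbol of the special block $\Gamma_i$ of $X$ that $A_i$ was shuffled into), and one checks that $a$ alternates with $a_k$ at least twice using Lemma~\ref{lemma_SB_N}. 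If $a$ is global, then $a$ is a symbol of $X$, not the last symbol of $X$ unless it is also the last symbol of $B$, so by induction $a$ is right-clamped in $X$, and again the witnessing alternation survives the shuffle (shuffling replaces each special block of $X$ by a longer sequence but does not delete or reorder the global symbols). The argument for left-clampedness is symmetric: a local symbol of rank $\ge 2$ in $B$ has rank $\ge 2$ in its copy $A_i$ (the shuffle appends a global symbol at the \emph{end} of each special block, so it does not change ranks of existing symbols), so by induction it is left-clamped in $A_i$ and the witness survives; a global symbol of rank $\ge 2$ in $B$ — i.e. a symbol of $X$ whose first occurrence is not first in its special block in $X$, hence has rank $\ge 2$ in $X$ — is left-clamped in $X$ by induction, and survives.

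\textbf{Main obstacle.} The delicate point, as flagged above, is that Definition~\ref{def_clamped} pins the clamping witness $b$ to be the symbol \emph{immediately} adjacent to $a$'s first (or last) occurrence, whereas shuffling can insert new symbols next to $a$, changing which symbol is immediately adjacent. So the inductive hypothesis gives me an alternation with the \emph{old} neighbor, and I must upgrade this to an alternation with the \emph{new} neighbor. I expect the resolution to come from Lemma~\ref{lemma_SB_N} together with the $ababa$-freeness: the new immediate neighbor of a first occurrence of $a$ is, by the definition of $\bullet$, either the duplicated $m$-th symbol of a special block or a global symbol inserted at a block boundary, and in either case the $N$-shaped structure guaranteed by Lemma~\ref{lemma_SB_N} forces that neighbor to reappear on the correct side of $a$, producing the required four-term alternation. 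Making this bookkeeping precise — tracking exactly which symbol ends up adjacent to each first and last occurrence after the shuffle, in each of the local/global cases and each of the "last symbol of its block / not last" subcases — is where the real work lies; everything else is routine propagation of subsequences through an operation that only inserts and duplicates, never deletes or reorders.
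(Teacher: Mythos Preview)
Your plan for right-clampedness is essentially the paper's, and the ``main obstacle'' you flag is a phantom. The key observation you are missing is that the shuffle operation \emph{never separates the last occurrence of a symbol from the symbol immediately following it}. Insertions into a copy $A_i$ happen only at the \emph{ends} of special blocks (and one symbol is appended after $A_i$); since last occurrences never lie in special blocks, the immediate successor of a last occurrence is untouched. Likewise, replacing special blocks of $X$ by the $A'_i$'s does not disturb the immediate successor of any last occurrence of a global symbol (that successor is not a first occurrence, hence not in a special block of $X$). So the witness $b$ from the inductive hypothesis is still the immediate successor in $B$, and there is nothing to upgrade. The only case where the successor genuinely changes is when $a$ is the very last symbol of its copy $A_i$; you handle that case correctly, and it matches the paper.

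Your plan for left-clampedness, however, has a genuine gap in the inductive step for global symbols. You write that a global symbol of rank $\ge 2$ in $B$ is ``a symbol of $X$ whose first occurrence is not first in its special block in $X$, hence has rank $\ge 2$ in $X$''. This is false: \emph{every} global symbol has rank $m$ in $B$ (it is appended at the end of an $A$-block), regardless of its rank in $X$; in particular global symbols of rank $1$ in $X$ must also be shown left-clamped in $B$, and your inductive hypothesis on $X$ says nothing about them. Worse, even for a global symbol $a$ of rank $\ge 2$ in $X$, the witness does \emph{not} survive: in $B$ the immediate predecessor of the first $a$ is no longer its neighbor from $X$ but the rank-$(m{-}1)$ local symbol of the $A$-block into which $a$ was inserted. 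The paper sidesteps all of this by not inducting at all for left-clampedness: it is a one-line consequence of Lemma~\ref{lemma_SB_N}. Any symbol $a$ of rank $j\ge 2$ in $S_k(m)$ has the rank-$(j{-}1)$ symbol $b$ of the same special block immediately before its first occurrence, and the $N$-shape $(\cdots b\,a\cdots)\,\cdots b\cdots a\cdots$ guaranteed by Lemma~\ref{lemma_SB_N} exhibits $b\,a\,b\,a$ directly.
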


\begin{proof}
The first claim follows immediately from Lemma~\ref{lemma_SB_N}.

For the second claim, recall that $S_k(m) = S_k(m-1)\bullet S_{k-1}(N)$. If $a$ is the very last symbol of $S_{k-1}(N)$, then it is the very last symbol of $S_k(m)$ and there is nothing to prove. If $a$ is the very last symbol of a copy $S'$ of $S_k(m-1)$, then its last occurrence in $S_k(m)$ is immediately followed by a global symbol $b$, whose first occurrence was shuffled at the end of the \emph{last} special block of $S'$. The first occurrence of $a$ must lie further to the left (also in $S'$).

In any other case, the claim follows by induction on $S_k(m-1)$ or $S_{k-1}(N)$, depending on whether $a$ is a local or a global symbol, since the shuffling operation does not separate the last occurrence of any symbol from the immediately following symbol.
\end{proof}

\begin{corollary}\label{cor_global_clamped}
Once a copy of $S_k(m)$ is shuffled into another sequence ($S_{k+1}(m')$ for some $m'>1$), all its symbols are left-clamped, and all its symbols except for the very last one are right-clamped.
\end{corollary}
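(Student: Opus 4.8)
The plan is to reduce the corollary to Lemma~\ref{lem_leader} by verifying that, inside $S_{k+1}(m')$, \emph{every} symbol of the embedded copy of $S_k(m)$ has rank at least $2$. Recall that the only symbols of $S_k(m)$ that Lemma~\ref{lem_leader} leaves uncovered are the rank-$1$ symbols (for left-clampedness) and the very last symbol (for right-clampedness); so it is exactly the rank-$1$ symbols of the copy that must be ``promoted'', while the last symbol will legitimately remain the exception.

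First I would locate the copy: a copy of $S_k(m)$ enters the construction of $S_{k+1}(m')$ only as (a structural sub-copy of) the global sequence $S_k(N)$ used at some step $S_{k+1}(t) = S_{k+1}(t-1)\bullet S_k(N)$, where $N$ is the number of special blocks of $S_{k+1}(t-1)$ and $2\le t\le m'$ (the base step $t=1$ is excluded, since $S_{k+1}(1)$ is a relabelled $S_k(2)$ rather than a shuffle, and we assume $m'>1$). The heart of the argument is a rank computation straight from the definition of $\bullet$: each symbol of $S_k(N)$ is inserted at the \emph{end} of an enlarged special block of a local copy of $S_{k+1}(t-1)$, and since those blocks had length $t-1$, the inserted symbol ends up at position $t$ of a length-$t$ block. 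Hence every symbol of $S_k(N)$ --- and, by Lemma~\ref{lem_structurally}, every symbol of any structural sub-copy of it --- has rank exactly $t$ in $S_{k+1}(t)$. Next, whenever $S_{k+1}(t)$ is subsequently used as a \emph{local} part of a shuffle, each of its special blocks grows only by having one new last symbol appended, so the within-block position of every pre-existing symbol is unchanged; by induction, every symbol of the copy of $S_k(m)$ still has rank $t\ge 2$ in the final sequence $S_{k+1}(m')$.

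Finally I would invoke Lemma~\ref{lem_leader} for the sequence $S_{k+1}(m')$: every symbol of rank at least $2$ is left-clamped there, and every symbol other than the very last symbol of $S_{k+1}(m')$ is right-clamped. Since all symbols of the copy have rank at least $2$, they are all left-clamped; and they are all right-clamped except possibly the last symbol of the copy, which can fail to be right-clamped only when it coincides with the very last symbol of $S_{k+1}(m')$ --- and this happens precisely when the copy lies in the outermost global sequence, since shuffling a sequence in as a global part leaves the portion of it following its last special block, in particular its last symbol, undisturbed. I do not expect a serious obstacle; the one point requiring care is the bookkeeping that distinguishes a symbol's \emph{rank} (its position within its special block) from \emph{which} special block it occupies, together with the routine verification that a local shuffle never shifts the within-block position of an already-present symbol.
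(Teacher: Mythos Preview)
Your proposal is correct and follows essentially the same approach as the paper: the paper's proof is the single sentence ``All the symbols of the copy of $S_k(m)$ that resides in $S_{k+1}(m')$ are global (their rank is $m'$),'' implicitly invoking Lemma~\ref{lem_leader}, and your argument is a careful unpacking of exactly this observation. Your additional bookkeeping (tracking the rank $t$ at the step of insertion and verifying that subsequent local shuffles preserve it) is sound and slightly more general than the paper's terse statement, but it is not a different route.
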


\begin{proof}
All the symbols of the copy of $S_k(m)$ that resides in $S_{k+1}(m')$ are global (their rank is $m'$).
\end{proof}

We can also say something about left-clamped rank-$1$ symbols:

\begin{lemma}\label{lem_rank1_leftclamped}
Let $k, m \ge 2$, and let $a$ be a rank-$1$ symbol in $S_k(m) = S_k(m-1)\bullet S_{k-1}(N)$. If $a$ was left-clamped in its local copy of $S_k(m-1)$, then it is also left-clamped in $S_k(m)$.
\end{lemma}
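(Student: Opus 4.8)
The plan is to trace how the shuffling operation $S_k(m) = S_k(m-1)\bullet S_{k-1}(N)$ treats a rank-$1$ symbol $a$ and its left-clamping witness. By Definition~\ref{def_clamped}, $a$ being left-clamped in its local copy $S' = S_k(m-1)$ means that $S'$ contains a subsequence $baba$, where $b$ is the symbol immediately preceding the first occurrence of $a$ in $S'$. We must show that this same $b$ still immediately precedes the first occurrence of $a$ after the shuffle, and that the subsequence $baba$ survives. Since $a$ has rank $1$, its first occurrence sits at the very start of some special block $\Delta_j$ of $S'$. The shuffling operation inserts a global symbol at the \emph{end} of $\Delta_j$, and duplicates the $m$th (last) symbol of $\Delta_j$ immediately \emph{after} $\Delta_j$; it does not insert anything before the first symbol of $\Delta_j$ and does not touch the symbol that precedes $\Delta_j$. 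Hence the symbol immediately preceding the first $a$ in $S_k(m)$ is still $b$ — unless $a$'s special block is the \emph{first} special block of $S'$ and $b$ lies outside $S'$, but that cannot happen, since if $a$ is the first symbol of the first special block of $S'$, then $a$ has no predecessor inside $S'$ and is not left-clamped there, contrary to hypothesis. (If instead $b$ lies in the part of $S'$ before $\Delta_j$, the shuffle may insert things between $b$'s last occurrence and $\Delta_j$, but all such insertions are \emph{new} global/duplicated symbols, which do not affect the relative order of $b$ and $a$, so the predecessor of the first $a$ is unchanged; strictly, the only way the predecessor could change is an insertion of an occurrence of some symbol \emph{immediately} before $a$, and the shuffle never inserts anything immediately before the opening symbol of a special block.)

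Next I would check that the witnessing subsequence $baba$ is preserved. All four occurrences $b, a, b, a$ lie inside the single copy $S'$, and the shuffling operation is \emph{monotone}: it only inserts new symbols into $S'$ and duplicates existing ones, never deletes or reorders. Therefore any subsequence of $S'$ remains a subsequence of the transformed copy, and hence of $S_k(m)$. Combined with the previous paragraph (the immediate predecessor of the first $a$ is still $b$), this is exactly the definition of $a$ being left-clamped in $S_k(m)$.

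The only subtlety — and I expect this to be the main place to be careful rather than a genuine obstacle — is the edge case analysis of the predecessor symbol. Concretely one should rule out: (i) $a$'s first occurrence being the global duplication just after a special block rather than inside it (impossible, since a rank-$1$ symbol by Definition~\ref{def_rank} is the first symbol of its special block, and such duplicated copies are not first occurrences); and (ii) the shuffle inserting a local duplicate or a global symbol between $b$ and the first $a$ when $b$ happens to be the last symbol of the special block immediately preceding $\Delta_j$ — but in that situation the inserted material (the duplicate of that last symbol, or a shuffled global symbol) goes after $b$'s block and before $\Delta_j$, hence before $a$, yet it is still $b$ (or a fresh symbol) that occupies the slot immediately adjacent to $a$ only if it equals $b$; in fact the symbol physically adjacent to the first $a$ is the \emph{last} symbol written before $\Delta_j$ opens, which, tracing the definition of $\bullet$, is unchanged from $S'$ because $\bullet$ never writes anything strictly between the closing of one special block's trailing duplicate and the opening of the next special block. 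Once these bookkeeping points are dispatched, the lemma follows. I would present the argument as two short claims (predecessor unchanged; witness subsequence preserved) with the case check folded into the first.
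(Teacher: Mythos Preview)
Your approach is the same as the paper's: show that the predecessor $b$ of the first $a$ is unchanged by the shuffle, and note that the witnessing $baba$ survives because shuffling only inserts. The second part is fine. The first part, however, contains a false step that you then try to patch in your ``subtlety'' paragraph without quite succeeding.

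You assert that ``the shuffle never inserts anything immediately before the opening symbol of a special block.'' This is not true. If the special blocks $\Delta_{j-1}$ and $\Delta_j$ of $S_k(m-1)$ are adjacent (so $b$ is the last symbol of $\Delta_{j-1}$), then the shuffle turns $\Delta_{j-1}\,\Delta_j$ into $\Delta_{j-1}'\,d_{j-1}\,\Delta_j'$, where $\Delta_{j-1}'$ is $\Delta_{j-1}$ with a global symbol appended and $d_{j-1}$ is the duplicated $(m{-}1)$th symbol of $\Delta_{j-1}$. So something \emph{is} inserted immediately before the first $a$. What actually saves the argument is not that nothing is inserted, but that the last thing inserted, namely $d_{j-1}$, equals $b$ (since $b$ was the $(m{-}1)$th and last symbol of $\Delta_{j-1}$). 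Your third paragraph gestures at this (``the duplicate of that last symbol'') but never states it clearly, and the surrounding sentences contradict each other.

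The paper's proof avoids this muddle by first invoking Lemma~\ref{lemma_SB_N}: for $m\ge 3$ special blocks of $S_k(m-1)$ are never adjacent, so $b$ is not in a special block and nothing is inserted between $b$ and $a$. Only for $m=2$ can blocks be adjacent, and then the paper observes directly that the created duplicate placed right before $a$ is a copy of $b$. That is the clean way to organize the case analysis you were reaching for.
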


\begin{proof}
By Lemma~\ref{lemma_SB_N}, $S_k(m-1)$ contains adjacent special blocks only if $m=2$. Hence, for $m\ge 3$, the symbol $b$ immediately preceding the first $a$ in $S_k(m-1)$ does not belong to a special block, so $b$ remains adjacent to that $a$ in $S_k(m)$. For $m=2$, the said symbol $b$ might belong to a special block. But then, in the construction of $S_k(m)$, a copy of $b$ is created and placed right before the first $a$.
\end{proof}

\subsection{Geometric unrealizability}

We are now ready to show that the Hart--Sharir sequences force the impossible configuration $X$ of Section~\ref{sec_forbidden_2}.

\begin{theorem}\label{thm_HS_forces_X}
There exists an $m$ such that the Hart--Sharir sequence $S_7(m)$ forces a copy of the configuration $X$ whose segments intersect in the way specified in Theorem~\ref{thm_forbidden_2}. Hence, if $S$ is the lower inner-zone sequence of the parabola $C$, then $S$ cannot contain $S_7(m)$ as a subsequence.\footnote{It should probably be enough to take $m=2$ in the theorem, but the calculations do not seem to be worth the effort.}
\end{theorem}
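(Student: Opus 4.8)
The plan is to exhibit, for a sufficiently large $m$, a subsequence $\widehat X$ of the Hart--Sharir sequence $S_7(m)$ with three properties: (i) its endpoint sequence is $E(\widehat X)=X$; (ii) for each whisker $(a,b,c)$, each triple $(d,e,f)$ of a copy of $Y$, and each group of five numeric segments, $\widehat X$ contains the $N$-shaped subsequence that Observation~\ref{obs_Nshaped} needs in order to force those segments to intersect concavely; and (iii) every symbol of $\widehat X$, except possibly the first and the last, is both left- and right-clamped in $\widehat X$. Granting this, the theorem follows at once: if $S$ is the lower inner-zone sequence of some $\mathcal L'$ and $S$ contains $S_7(m)$, then $S$ contains $\widehat X$; by (iii) and Lemma~\ref{lemma_clamped}, $E(\widehat X)=X$ is a subsequence of $E(S)$, so the segments labelled by the symbols of $\widehat X$ have their endpoints in the exact left-to-right order prescribed by $X$; by (ii) and Observation~\ref{obs_Nshaped} those same segments also intersect concavely as required; hence $\mathcal L'$ contains a subfamily realizing a configuration that Theorem~\ref{thm_forbidden_2} declares impossible, a contradiction. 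Everything thus reduces to constructing $\widehat X$.

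I would build $\widehat X$ by mirroring the definition $X=Y\circ T_4$, $T_4=(((Z_1\circ Z_2)\circ Z_4)\circ Z_8)\circ F_{16}$, replacing every endpoint-shuffle $\circ$ by the Hart--Sharir shuffle $\bullet$. By Observation~\ref{ref_obs_shuffling}, $E(A\bullet B)=E(A)\circ E(B)$; hence, if I can produce sequences $\widehat Y,\widehat Z_1,\widehat Z_2,\widehat Z_4,\widehat Z_8,\widehat F_{16}$, each carrying designated special blocks, whose block shapes match those of $Y,Z_1,Z_2,Z_4,Z_8,F_{16}$ and with $E(\widehat Y)=Y$, $E(\widehat Z_{2^k})=Z_{2^k}$, and $E(\widehat F_{16})=F_{16}$, then, writing $\widehat T_4=(((\widehat Z_1\bullet\widehat Z_2)\bullet\widehat Z_4)\bullet\widehat Z_8)\bullet\widehat F_{16}$, the sequence
\begin{equation*}
\widehat X \;=\; \widehat Y \,\bullet\, \widehat T_4
\end{equation*}
satisfies $E(\widehat X)=X$. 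For the innermost gadget take $\widehat F_{16}=S_1(16)$, which has a single special block of length $16$ and satisfies $E(S_1(n))=F_n$. The gadgets $\widehat Z_m$ (two special blocks of length $m$) and $\widehat Y$ (five special blocks) I would write down explicitly; note that these are not Hart--Sharir-compatible sequences — their whisker symbols lie outside the special blocks — but they avoid $ababa$ and adjacent repetitions, and one checks by inspection that each already contains the $N$-shaped subsequences of its own whisker and, for $\widehat Y$, of its $(d,e,f)$-triple. The $N$-shaped run on the five numeric symbols of a group is inherited from the length-five special blocks of $\widehat T_4$ via the analogue for $\widehat T_4$ of Lemma~\ref{lemma_SB_N}, and is preserved by the final shuffle with $\widehat Y$ because $\bullet$ never reorders the local or the global symbols of its arguments.

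It then remains to embed $\widehat X$ structurally inside $S_7(m)$. Here I would use that $S_7(m)$ is itself an iterated shuffle, $S_7(m)=S_7(m-1)\bullet S_6(N_6)$, $S_6(N_6)=S_6(N_6-1)\bullet S_5(N_5)$, and so on down to $S_2(\cdot)=S_2(\cdot-1)\bullet S_1(\cdot)$, together with the fact that a shuffle of structure-preserving subsequences is again a structure-preserving subsequence of the shuffle (a compatible-embedding argument of the same flavour as the proof of Lemma~\ref{lem_structurally}). Matching the shuffle expression for $\widehat X$ against this recursion, and invoking Lemma~\ref{lem_structurally} (together with $E(S_1(n))=F_n$) to place $\widehat F_{16}=S_1(16)$ and each small gadget $\widehat Z_m$, $\widehat Y$ inside the appropriate factor, exhibits $\widehat X$ as a structure-preserving subsequence of $S_7(m)$ once $m$ is large enough that the relevant factors are long enough; the index $7$ arises because the whisker tree of $T_4$ is four levels deep and, together with the whisker $(d,e,f)$ of $\widehat Y$ and the base $\widehat F_{16}$, exhausts the nesting depth available in $S_7$. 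Finally, property (iii) is obtained by the clamping analysis: the arguments behind Lemmas~\ref{lem_leader} and~\ref{lem_rank1_leftclamped} and Corollary~\ref{cor_global_clamped} — left-clampedness from the $N$-shaped tails following special blocks, right-clampedness from the duplications performed by $\bullet$, and their propagation through shuffles — apply to each symbol of $\widehat X$ as one works outward through the shuffle expression, leaving only the very first and very last symbols unclamped.

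The main obstacle is the middle two steps, which are bookkeeping rather than conceptual. One must write the gadgets $\widehat Z_m$ and $\widehat Y$ with exactly the right special-block shapes — in particular reconciling the empty special blocks of $Y$ with the requirement that Hart--Sharir special blocks have length at least $1$, which seems to call for a harmless shift of block lengths or for a degenerate version of the shuffle — and then track, through each of the six shuffles, the block shapes, the ranks of the symbols, the $N$-shaped subsequences, and the clamped status of every symbol. This bookkeeping is why we leave $m$ unspecified; as the footnote indicates, a more careful accounting should permit $m=2$.
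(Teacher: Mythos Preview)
Your high-level plan is right and matches the paper's: exhibit a subsequence of $S_7(m)$ whose endpoint sequence is $X$, check the $N$-shaped subsequences for concavity via Observation~\ref{obs_Nshaped}, and verify clamping so that Lemma~\ref{lemma_clamped} applies. But you have the difficulty inverted. Building an abstract $\widehat X$ with $E(\widehat X)=X$ from gadgets via Observation~\ref{ref_obs_shuffling} is the trivial part; the real content is the embedding of this $\widehat X$ into $S_7(m)$, which you call ``bookkeeping rather than conceptual''. Your gadgets $\widehat Z_m$ and $\widehat Y$ are, as you acknowledge, not Hart--Sharir sequences: their whisker symbols $a,b,c$ and $d,e,f$ have first occurrences outside any special block. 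Lemma~\ref{lem_structurally} only embeds one Hart--Sharir sequence into another; it gives you no handle on such foreign objects, and ``an argument of the same flavour'' is not a proof. The paper does not build external gadgets and then embed them. It works from the inside of the Hart--Sharir recursion outward: Corollary~\ref{cor_S4} shows that the rank-$m$ (global) symbols of $S_4(m)$ already form a long alternation $L_1(\,)\cdots L_N(\,)R_1(\,)\cdots R_N$; Corollary~\ref{cor_S5} identifies these as the whisker of a single $Z_{N,m-1}$, with the local $S_5(m-1)$ copies supplying the numeric segments; Corollary~\ref{cor_S6} then iterates $S_6(m)=S_6(m-1)\bullet S_5(\cdot)$ to nest these whiskers into a tree and obtain $T_{j,n}$ (and hence $T_4$); finally $Y$ is found inside $S_7(1)$, again via Corollary~\ref{cor_S4}. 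Every whisker symbol is thereby identified with a specific Hart--Sharir symbol of a specific rank; this identification, with the attendant rank-tracking, is the substance of the proof.

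Your account of why $k=7$ suffices confirms the gap. You say the four shuffles of $T_4$, plus $\widehat Y$, plus $\widehat F_{16}$, use up the seven levels. In the paper's argument, however, the four levels of the whisker tree are \emph{all} produced inside $S_6$, by iterating $S_6(m)=S_6(m-1)\bullet S_5(\cdot)$ four times; the levels $S_1$ through $S_4$ are consumed merely to manufacture a \emph{single} whisker pattern (Corollary~\ref{cor_S4}), $S_5$ turns that into one $Z$, $S_6$ iterates, and $S_7$ adds $Y$. If each $\bullet$ in your $\widehat T_4$ really cost one increment of $k$, your scheme would not produce whiskers at all, since nothing in $S_1$ or $S_2$ resembles one.
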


\begin{proof}

Recall that in Section~\ref{sec_forbidden_2} we defined
\begin{equation*}
F_m = (L_1\cdots L_m)\quad R_1\cdots R_m.
\end{equation*}
Let
\begin{equation*}
Z_{j,m} = L_{a_1}\, F_m\, L_{a_2}\, F_m\, \cdots \, F_m\, L_{a_j}\quad R_{a_1}\, F_m\, R_{a_2}\, F_m\, \cdots\, F_m\, R_{a_j}
\end{equation*}
(where the $2j-2$ copies of $F_m$ use distinct symbols). Note that the $Z_m$ of Section~\ref{sec_forbidden_2} is contained in $Z_{j,m}$ for $j\ge 3$. Following this correspondence, let us call the $j$-tuple of segments $a_1, \ldots, a_j$ in $Z_{j,m}$ a \emph{whisker}, and let us call the segments in the copies of $F_m$ \emph{numeric segments}.

Define
\begin{equation*}
T_{j,n} = ( \cdots (((Z_{j,1} \circ Z_{j,2j-2}) \circ Z_{j,(2j-2)^2}) \circ Z_{j,(2j-2)^3}) \circ \cdots Z_{j,(2j-2)^{n-1}}) \circ F_{(2j-2)^n}.
\end{equation*}

For $j\ge 3$, $T_{j,n}$ is a supersequence of the endpoint sequence $T_n$ that we defined in Section~\ref{sec_forbidden_2}. $T_{j,n}$ contains whiskers that are nested inside one another in the form of a full $(2j-2)$-ary tree of height $n$. In addition, it contains numeric segments, whose left endpoints are grouped into special blocks of length $n+1$, and whose right endpoints are located in a manner analogous to the one described in Section~\ref{sec_forbidden_2}.

Our first goal is to show that the Hart--Sharir sequences force segment configurations of the form $T_{j,n}$ for arbitrarily large $j$ and $n$, in which the $j$ segments in each whisker intersect concavely. From Lemma~\ref{lemma_SB_N} and Observation~\ref{obs_Nshaped}, it follows that the $n+1$ numeric segments in each special block must also intersect concavely.

\begin{observation}\label{obs_S1}
$E(S_1(m)) = F_m$, and $E(S_2(m))$ contains
\begin{equation}\label{eq_obs_S1}
(L_1 \cdots L_m)\,(\,)\,R_1\cdots R_m.
\end{equation}
\end{observation}

\begin{corollary}\label{cor_S3}
$E(S_3(m))$ contains
\begin{equation}\label{eq_cor_S3}
(L_1\,L_2 \cdots L_m)\,(\,)\,R_1\,(\,)\,R_2\,(\,)\cdots(\,)\,R_m.
\end{equation}
\end{corollary}

\begin{proof}
By induction on $m$. Consider $S_3(m) = S_3(m-1)\bullet S_2(N)$, and assume by induction that $S_3(m-1)$ contains an instance $A$ of (\ref{eq_cor_S3}) with $m-1$ in place of $m$. Consider an instance $B$ of (\ref{eq_obs_S1}) in $E(S_2(N))$ with $N$ in place of $m$. The first special block of $B$ is shuffled into a copy of $S_3(m-1)$; hence, one of its symbols $L^*$ is inserted at the end of the first special block of $A$ in this copy of $S_3(m-1)$. Furthermore, the $(\,)$ and the corresponding symbol $R^*$ of $B$ are placed after this copy of $A$. Hence, $S_3(m)$ contains (\ref{eq_cor_S3}). 
\end{proof}

\begin{corollary}\label{cor_S4}
For each $m\ge 2$, $S_4(m)$ forces a configuration of the form
\begin{equation}\label{eq_cor_S4}
L_1\,(\,)\,L_2\,(\,)\cdots(\,)\,L_N\,(\,)\,R_1\,(\,)\,R_2\,(\,)\cdots(\,)\,R_N
\end{equation}
for some very large $N=N(m)$, in which the segments $1,2,\ldots, N$ intersect concavely and have rank $m$.
\end{corollary}

\begin{proof}
The global sequence $S_3(N')$ used in forming $S_4(m)$ satisfies Corollary~\ref{cor_S3}. The left endpoints $L_1, L_2, \ldots$ of (\ref{eq_cor_S3}) receive rank $m$ and go into separate special blocks. In order to guarantee the presence of a special block between $L_i$ and $L_{i+1}$ for each $i$, we ``sacrifice" every second symbol among $1,2,\ldots,N'$; we are still left with $N=N'/2$ symbols. The $N$-shaped sequence $1\cdots N'\cdots 1\cdots N'$ that was present in $S_3(N')$ (by Lemma~\ref{lemma_SB_N}) is obviously still present in $S_4(m)$.
\end{proof}

\begin{corollary}\label{cor_S5}
For each $m\ge 2$, $S_5(m)$ forces a configuration of the form $Z_{N,m-1}$ for some very large $N = N(m)$, in which the whisker segments $a_1, \ldots,\allowbreak a_N$ intersect concavely, and in which the numeric segments of each copy of $F_{m-1}$ have ranks $1,\ldots,m-1$.
\end{corollary}

\begin{proof}
The global sequence $S_4(N')$ used in forming $S_5(m)$ satisfies Corollary~\ref{cor_S4}. Each special block of (\ref{eq_cor_S4}) is replaced by a copy of $S_5(m-1)$, which structurally contains $S_1(m-1)$.
\end{proof}

Finally:

\begin{corollary}\label{cor_S6}
For every $j$ and $n$, if $m$ is large enough, then $S_6(m)$ forces a configuration of the form $T_{j,n}$, in which the segments $a_1, \ldots, a_j$ in each whisker intersect concavely.
\end{corollary}

\begin{proof}
The iterated shuffling used to form $T_{j,n}$ occurs naturally in the formation of $S_6(m) = S_6(m-1) \bullet S_5(N)$.

More precisely, we will show by induction on $n$ that, if $m$ is large enough in terms of $j$ and $n$, then $S_6(m)$ forces a configuration of the form
\begin{equation*}
T'_{j,n} = ( \cdots ((Z_{j,1} \circ Z_{j,2j-2}) \circ Z_{j,(2j-2)^2}) \circ \cdots )\circ Z_{j,(2j-2)^n},
\end{equation*}
in which the segments of each whisker intersect concavely. Since $Z_{j,(2j-2)^n}$ contains $F_{(2j-2)^n}$, it will follow that $T'_{j,n}$ contains $T_{j,n}$.

Suppose by induction that, for some $m$, the sequence $S_6(m)$ forces such a configuration $T'_{j,n-1}$. Let $N$ be the number of special blocks of $S_6(m)$, and let $1\le b_1 < b_2 < \cdots < b_k\le N$, for $k=(2j-2)^n$, be the special blocks of $S_6(m)$ involved in this occurrence of $T'_{j,n-1}$. By Corollary~\ref{cor_S5}, there exists an $N'$ such that $S_5(N')$ forces a copy of $Z_{j,b_k}$ in which the numeric segments of each $F_{b_k}$ have ranks $1,\ldots, b_k$. Out of the $b_k$ segments in each $F_{b_k}$, we are only interested in the ones ranked $b_1, \ldots, b_k$: They are the ones that will be shuffled into the right places in $T'_{j,n-1}$. Let $m'\ge m$ and $N''\ge N'$ be such that $S_6(m'+1) = S_6(m')\bullet S_5(N'')$. Now, $S_6(m')$ contains a copy of $S_6(m)$ using the first $N$ special blocks of $S_6(m')$. And the above-mentioned copy of $Z_{j,b_k}$ that is present in $S_5(N')$ is also present in $S_5(N'')$. Hence, $S_6(m'+1)$ forces the desired copy of $T'_{j,n}$.
\end{proof}

Hence, as claimed, the sequences $S_6(m)$ force arbitrarily wide and tall configurations of the form $T_{j,n}$, in which the segments of each whisker intersect concavely. In particular, they force one such copy of $T_{3,4}$, which contains the configuration $T_4$ of Section~\ref{sec_forbidden_2}.

Our second goal is to show that this sequence $T_{3,4}$ is appropriately shuffled into a sequence containing $Y$ (which was defined in Section~\ref{sec_forbidden_2}).

We observe that (\ref{eq_cor_S4}) already contains $Y$ whenever $N\ge 6$. Hence, $S_7(1)$ also contains $Y$. Let $b_1 < b_2 < \cdots < b_5$ be the special blocks of $S_7(1)$ involved in this occurrence of $Y$. By Corollary~\ref{cor_S6} and Lemma~\ref{lem_structurally}, $S_6(m)$ contains, for large enough $m$, a copy of $T_{3,4}$ in which the numeric segments have ranks $b_1, b_2, \ldots, b_5$. Therefore, there exists an $m$ (probably $m=2$ should be enough) such that, in $S_7(m) = S_7(m-1) \bullet S_6(N)$, these numeric segments are shuffled into the right places, creating the desired copy of $X$. In this copy of $X$, all the sets of segments that should intersect concavely according to Theorem~\ref{thm_forbidden_2}, actually do.

Finally, all the symbols in this copy of $X$ are left- and right-clamped in $S_7(m)$: The copy of $Y$ in $S_7(1) = S_6(2)$ uses symbols that, in $S_6(2)$, had rank $2$, so they were clamped by Lemma~\ref{lem_leader}. Hence, by Lemma~\ref{lem_rank1_leftclamped}, they stay clamped in $S_7(m)$. And the symbols of $S_6(N)$ become global  in $S_7(2)$, so they are also clamped by Lemma~\ref{lem_leader}. Hence, by Lemma~\ref{lemma_clamped}, if $S$ is a lower inner-zone sequence that contains $S_7(m)$, then the endpoints of $X$ appear in the right order in $E(S)$.
\end{proof}

\subsection{How we found these results}

Our results of Sections~\ref{sec_forbidden_2} and~\ref{sec_ababa} were obtained as follows:

Matou\v sek~\cite{mat_DG} and Sharir and Agarwal~\cite{DS_book} describe a construction by P. Shor of segments in the plane whose lower-envelope sequences are the Hart--Sharir sequences. We tried to force the segment endpoints in the construction to lie on the parabola $C$. We managed to do this for $S_1(m)$, $S_2(m)$, and $S_3(m)$ and all $m$, but for $S_4(m)$ this seems impossible.

Shor's construction is based on \emph{fans}---sets of segments that intersect concavely, whose left endpoints are very close to one another, and whose lengths increase very rapidly.  Global fans have their left endpoints shuffled into tiny local fans. In order for the global fan to intersect concavely, its segments are given slopes $1, 1+\eps_1, 1+\eps_2, \ldots$ for very small values of $\eps_1, \eps_2, \ldots$. This gives a lot of freedom to play with the exact position of the segments' left endpoints.

However, if we want all endpoints to lie on a parabola, then the slopes in the global fan must increase very rapidly, which leads to the absurd requirement that the distances between the left endpoints decrease very rapidly (Lemma~\ref{lemma_wide_fan} above). Then it is impossible to appropriately shuffle the global fan into the local fans.

The impossible segment configuration $X$ of Section~\ref{sec_forbidden_2} (which, as we saw in Section~\ref{sec_ababa}, is forced by $S_7(m)$) is the best way we found to isolate the contradiction.

It would be nice to be able to isolate a smaller impossible segment configuration forced by the Hart-Sharir sequences (say, by $S_4(m)$). However, it is unlikely that such an improvement would be of additional help in proving Conjecture~\ref{conj}. The most promising line of attack is described in the next section.

\section{Directions for future work}\label{sec_conj_only_DS}

We believe that our geometric results are sufficient to prove Conjecture~\ref{conj}, and that the remaining work is purely combinatorial:

\begin{conjecture}\label{conj_only_HS}
The Hart--Sharir sequences are the only way to achieve superlinear-length $ababa$-free sequences. Namely, for every Hart--Sharir sequence $S_k(m)$ we have 
\begin{equation*}
\Ex\bigl(\bigl\{ababa, S_k(m), (S_k(m))^R\bigr\}, n\bigr)=O(n);
\end{equation*}
where the hidden constant depends on $k$ and $m$.
\end{conjecture}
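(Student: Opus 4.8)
}
The plan is to prove the conjecture in its equivalent \emph{forcing} form: for each $k$ and $m$ there is a constant $C_k(m)$ such that every $ababa$-free sequence $S$ on $n$ distinct symbols with $|S|\ge C_k(m)\cdot n$ contains $S_k(m)$ or its reversal $S_k(m)^R$ as a subsequence. This is the converse of the mechanism developed in Sections~\ref{sec_forbidden_2}--\ref{sec_ababa}: there we showed that the recursive shuffle structure of $S_k(m)$ forces the impossible geometric configuration $X$; here we want to show that \emph{every} superlinear $ababa$-free sequence already contains that recursive structure, i.e.\ eventually reproduces every Hart--Sharir sequence.

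I would carry this out by a double induction, on $k$ from the outside and on $m$ from the inside. The base case $k=1$ asks that every sufficiently long $ababa$-free sequence contain the $N$-shaped pattern $S_1(m)=(1\cdots m)(m-1)\cdots 2\,1\,2\cdots m$ in one of its two orientations; this should follow from the standard extremal analysis of order-$3$ Davenport--Schinzel sequences, since a sequence whose length exceeds a suitable multiple of $n$ contains arbitrarily large fan/tree substructures, from which an $N$-shape of prescribed size can be read off (in whichever orientation the substructure provides). For the inductive step, fix $k,m\ge 2$ and use the recursion $S_k(m)=S_k(m-1)\bullet S_{k-1}(N)$, where $N$ is the number of special blocks of $S_k(m-1)$. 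Given a long $ababa$-free $S$, apply the block decomposition underlying the $O(n\alpha(n))$ bound of Hart and Sharir~\cite{HS} (see also~\cite{DS_book}): choose a block length, classify each symbol as \emph{local} (confined to one block) or \emph{global}, and pass to the compressed sequence $S^{*}$ recording how the global symbols interleave across blocks. By the usual balancing of block sizes one then arranges \emph{simultaneously} that $S^{*}$ contains $S_{k-1}(N)$ (up to reversal), by the outer induction for $k-1$, and that the blocks $S^{*}$ designates are heavy enough to contain copies of $S_k(m-1)$ together with their special-block structure (up to reversal), by the inner induction. Finally one assembles these pieces into a genuine copy of $S_k(m-1)\bullet S_{k-1}(N)$, re-selecting symbols and blocks so that local and global special blocks line up exactly as the definition of $\bullet$ requires; the remaining ingredients of the shuffle (the duplicated last symbols of each special block, the global symbol placed at the end of each) should be supplied automatically by an analogue, valid in any sufficiently long $ababa$-free sequence, of Lemma~\ref{lemma_SB_N}.

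The hard part will be this final assembly: bridging the gap between the \emph{rigid} shuffle structure of $S_k(m)$ and the \emph{loose} recursive structure that the block decomposition hands us. In $S_k(m)$ the global symbols occur precisely at the ends of the local special blocks, each is duplicated immediately afterwards, and the local copies are indexed exactly by the global special blocks; in an arbitrary superlinear $ababa$-free sequence none of these alignments is given, and forcing them is exactly the kind of statement that the forward direction of Sections~\ref{sec_forbidden_2}--\ref{sec_ababa} never needed --- essentially a converse to Lemma~\ref{lem_structurally}. A further subtlety is the interaction with reversal: the two inductions deliver $S_k(m-1)$ and $S_{k-1}(N)$ only \emph{up to reversal}, and since $(A\bullet B)^R$ is not itself a shuffle, a ``mixed'' outcome (say $S_k(m-1)$ in the blocks but $S_{k-1}(N)^R$ in $S^{*}$) need not be isomorphic to $S_k(m)$ or to $S_k(m)^R$; one must show such outcomes can always be reoriented --- perhaps using that a long block carries the $N$-shape in whichever orientation $S^{*}$ demands --- or else enlarge the forbidden set by a few reversal-type variants, exactly as we are already forced to do in Section~\ref{sec_useless_forbidden}. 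Chasing the constants $C_k(m)$ through the double induction is routine in spirit but will produce an enormous (though, for each fixed $k$, finite) hidden constant, since each level of the recursion multiplies them by a tower-type factor.
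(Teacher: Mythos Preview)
The statement you are addressing is a \emph{conjecture}, and the paper does not prove it --- it is explicitly presented as open. The paper only notes that the case $k=1$ is known (Klazar--Valtr~\cite{KV}, even without the $ababa$ restriction), that the first open case is $k=m=2$, and that one might hope for a double-induction proof. There is thus no proof in the paper to compare your proposal against.

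Your outline is consistent with that hope: you propose exactly a double induction on $k$ and $m$, and you are careful to call it an ``approach'' and to flag the real obstacles. Two remarks. First, your justification of the base case $k=1$ is not quite right: the linearity of $\Ex(S_1(m),n)$ is the Klazar--Valtr theorem on $N$-shaped forbidden patterns, not a byproduct of the Hart--Sharir upper-bound analysis; but the conclusion holds and is cited in the paper. Second, the difficulties you name in the inductive step --- aligning the local and global special blocks rigidly enough to reconstruct the shuffle $\bullet$, and handling the orientation mismatch when the two inductions return different reversals --- are exactly why the conjecture is open. Already at $k=m=2$ (where $S_2(2)$ equals its own reversal, so the orientation issue disappears) nobody knows how to carry out the assembly you describe, and the paper points out that even the weaker statement $\Ex(\{ababa, ab\,cacbc\},n)=O(n)$ is unresolved. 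So your plan is reasonable and matches what the paper itself suggests, but it is not a proof, and the gaps you identify are genuine and currently unbridged.
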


In order to establish Conjecture~\ref{conj}, it is enough to prove Conjecture~\ref{conj_only_HS} for the specific (gigantic) case of Theorem~\ref{thm_HS_forces_X}. However, our hope is that Conjecture~\ref{conj_only_HS} can be somehow more easily proven for \emph{all} $k$ and $m$ by a double induction argument.

Conjecture~\ref{conj_only_HS} is known to be true for $k=1$, since $S_1(m)$ are $N$-shaped sequences: Klazar and Valtr~\cite{KV} showed that (even without forbidding $ababa$) we have $\Ex\bigl(S_1(m),n\bigr) \le c_m n$ for some constants $c_m$. Pettie~\cite{pettie_forbid} subsequently improved the dependence of $c_m$ on $m$ to $c_m \le 2^{\Theta(m^2)}$, which is still quite large. No interesting lower bounds for $c_m$ are known. In any case, we conjecture that, forbidding both $ababa$ and an $N$-shaped pattern, we should have $\Ex\bigl(\bigl\{ababa, S_1(m)\bigr\},n\bigr) \le c'_m n$ for some quite small $c'_m$.

\begin{figure}
\centerline{\includegraphics{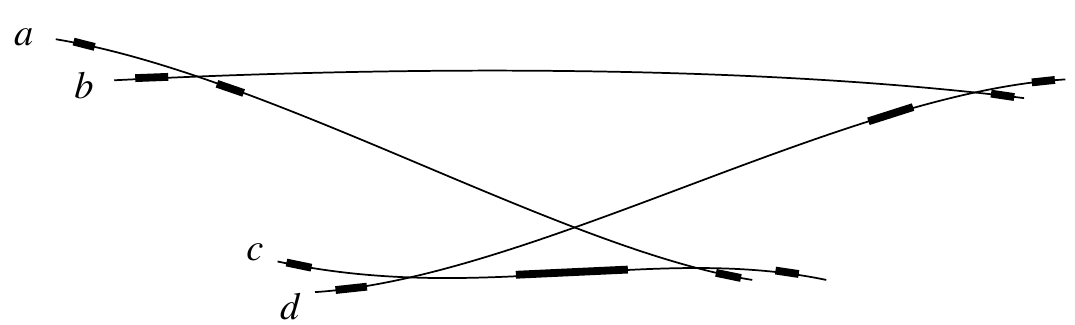}}
\caption{\label{fig_abacdcacdbd} The case $k=m=2$ of Conjecture~\ref{conj_only_HS} states that this pattern must be present in any configuration of $n$ $x$-monotone pseudosegments, if its lower-envelope sequence has length $cn$ for some large enough constant $c$. The highlighted subsegments must be visible from $-\infty$.}
\end{figure}

The first open case in Conjecture~\ref{conj_only_HS} is $k=m=2$. In this case, $S_2(2) = aba\,cdcac\,dbd \equiv (S_2(2))^R$; see Figure~\ref{fig_abacdcacdbd}. However, as we mentioned in the Introduction, even the weaker conjecture, that $\Ex(\{ababa, ab\,cacbc\}, n)=O(n)$, is still open.

(Similarly, the other conjecture mentioned in the Introduction, namely that $\Ex(\{ababa,\allowbreak ab\,cac\,cbc,\allowbreak (ab\,cac\,cbc)^R\}, n)=O(n)$, would follow from the case $k=3$, $m=2$ of Conjecture~\ref{conj_only_HS}.)

\subsection{Linear vs.~nonlinear forbidden patterns}

Conjecture~\ref{conj_only_HS} fits into the following more general question: For which patterns $u$ is $\Ex(u,n)$ linear in $n$? This question has been previously explored in several papers~\cite{AKV,klazar_93,klazar_survey,pettie_forbid,pettie_origins}. The known results in this area are somewhat patchy, and a proof (or disproof) of our conjecture will shed additional light in this area. For one, our conjecture already highlights the fact that forbidding a \emph{set} of patterns might have a stronger effect than forbidding each pattern separately, and hence, the right question should be: For which \emph{sets} of patterns $U$ is $\Ex(U, n)$ linear in $n$?

\section{Conclusion}\label{sec_discussion}

We end by listing some related open problems:

\begin{itemize}
\item As mentioned at the end of Section~\ref{sec_ababa}, it would be nice to isolate a smaller impossible segment configuration forced by the Hart--Sharir sequences (say, by $S_4(m)$ for some $m$); see Remark~\ref{remark_binary_tree} above. (Still, it is unlikely that such an improvement would make proving Conjecture~\ref{conj} much easier.)

\item What if we do not require $C$ to be a circle, but only a convex curve? It still seems impossible to implement Shor's construction forcing the endpoints to lie on a convex curve.

\item  Can the $ababa$-free sequences of Nivasch--Geneson~\cite{geneson,yo_DS} be realized as lower-envelope sequences of line segments? Is there an $ababa$-free sequence that \emph{cannot} be realized in such a way?

\item The longest Davenport--Schinzel sequences of order $4$ ($ababab$-free) have length $\Theta\bigl(n\cdot 2^{\alpha(n)}\bigr)$. However, no one knows how to realize them as lower-envelope sequences of parabolic segments. Perhaps it is impossible. One could start by finding forbidden patterns here.

\item \emph{Higher dimensions:} Raz~\cite{raz} recently proved that the combinatorial complexity of the outer zone of the boundary of a convex body in an arrangement of hyperplanes in $R^d$ is $O(n^{d-1})$. The complexity of the inner zone is only known to be $O(n^{d-1}\log n)$ (Aronov et al.~\cite{APS}). Whether the latter is also linear in $n$ is an open question.
\end{itemize}

\paragraph{Acknowledgements} I would like to give special thanks to Seth Pettie for useful discussions on generalized DS sequences with different forbidden patterns. Thanks also to the referees of previous versions of this work for their careful reading and for their useful suggestions (including the idea in Section~\ref{subsec_conic}). Finally, thanks to Micha Sharir for useful discussions, and to Dan Halperin for encouraging me to work on this problem (several years ago).

\end{document}